\definecolor{darkgreen}{rgb}{0,0.5,0}
\newcommand\textred[1]{{\color{red}\bfseries#1}}
\newcommand\mathred[1]{{\color{red}\boldsymbol{#1}}}
\newcommand\red[1]{\ifmmode\mathred{#1}\else\textred{#1}\fi}
\newcommand\mred[1]{\marginpar{\tiny{\ifmmode\mathred{#1}\else\textred{#1}\fi}}}
\newcommand\textblue[1]{{\color{blue}\bfseries#1}}
\newcommand\mathblue[1]{{\color{blue}\boldsymbol{#1}}}
\newcommand\blue[1]{\ifmmode\mathblue{#1}\else\textblue{#1}\fi}
\def\eps{\varepsilon} 
\def\P{{\mathbb{P}}} 
\def\E{{\mathbb{E}}} 
\def\R{{\mathbb R}} 
\def\F{{\mathcal F}} 
\def\Z{{\mathbb Z}} 
\def\N{{\mathbb N}} 
\def\D{\mathcal D}
\def\mQ{{\mathcal Q}}%
\renewcommand{\leq}{\leqslant}%
\renewcommand{\geq}{\geqslant}%
\newcommand{\ceil}[1]{{\lceil #1 \rceil}} 
\newcommand{\floor}[1]{{\lfloor #1 \rfloor}} 
\def\Lips{\text{Lips}} 
\newcommand{\vertiii}[1]{{\left\vert\kern-0.25ex\left\vert\kern-0.25ex\left\vert #1
    \right\vert\kern-0.25ex\right\vert\kern-0.25ex\right\vert}} 
\def\encodingspeed{\ensuremath{\gamma^{*\text{encod}}}} 
\def\approximationspeed{\ensuremath{\gamma^{*\text{approx}}}} 
\def\speed{\ensuremath{\gamma}} 
\def\fctclass{{\mathcal C}}
\def\approxclass{{\Sigma}}
\def\nnclass{\mathcal N}
\DeclareMathOperator*{\esssup}{ess\,sup}
\NewDocumentCommand{\param}{o o}{%
	\IfNoValueTF{#1}{%
		\theta\xspace
	}{%
		\IfNoValueTF{#2}{%
			\theta_{#1}\xspace
		}{%
			\theta_{#1}^{#2}\xspace
		}
	}
}
\NewDocumentCommand{\archiclass}{m}{%
	\mathtt{A}_{#1}
}
\NewDocumentCommand{\paramarchiclass}{m m}{
    \Theta_{#1,\mathbf{#2}}\xspace
}
\NewDocumentCommand{\archi}{m m}{
    (#1,\mathbf{#2})\xspace
}
\DeclareRobustCommand\onedot{\futurelet\@let@token\@onedot}
\def\@onedot{\ifx\@let@token.\else.\null\fi\xspace}
\def\ie{{i.e}\onedot, }  
\def\eg{{e.g}\onedot, }  
\newtheoremstyle{mythm}
  {\topsep}
  {10pt}
  {\upshape}
  {0pt}
  {\bfseries}
  {. ---}
  { }
  {\thmname{#1}\thmnumber{ #2}\thmnote{ (#3)}}
\theoremstyle{mythm}%
\newtheorem{meta-thm}{Méta-théorème}%
\newtheorem{thm}{Theorem}[section]%
\newtheorem{prop}{Proposition}[section]
\newtheorem{cor}{Corollary}[section]%
\newtheorem{contrib}{Contribution}%
\newtheorem{lemma}{Lemma}[section]%
\newtheorem{example}{Example}[section]%
\newtheorem{informal}{Informal Theorem}[section]%
\newtheorem{definition}{Definition}[section]%
\crefname{definition}{Definition}{Definitions}
\newtheorem{remark}{Remark}[section]%
\title{Approximation speed of quantized \textit{vs.} unquantized ReLU neural networks and beyond}
\author{Antoine Gonon, Nicolas Brisebarre, Rémi Gribonval, Elisa Riccietti
\thanks{
Authors are with Univ Lyon, Ens Lyon, UCBL, CNRS, Inria,  LIP, F-69342, LYON Cedex 07, France. This work has been partially presented during a poster session of the winter school "LMS Invited Lectures On The Mathematics Of Deep Learning"
at the Isaac Newton Institute, Cambridge, UK (\href{https://sites.google.com/view/lmslecturesmaths4dl/home}{https://sites.google.com/view/lmslecturesmaths4dl/home}). It has also been partially presented during a talk at the conference "Curve and Surfaces 2022", Arcachon, France (\href{https://cs2022.sciencesconf.org/}{https://cs2022.sciencesconf.org/}). These two presentations did not lead to publications (there were no proceedings for both meetings). This work was supported in part by the AllegroAssai ANR-19-CHIA-0009 and NuSCAP ANR-20-CE48-0014 projects of the French Agence Nationale de la Recherche.
}}
\begin{document}

\maketitle

\begin{abstract}
    We deal with two complementary questions about approximation properties of ReLU networks. First, we study how the uniform quantization of ReLU networks with real-valued weights impacts their approximation properties. We establish an upper-bound on the minimal number of bits per coordinate needed for uniformly quantized ReLU networks to keep the same polynomial asymptotic approximation speeds as unquantized ones. We also characterize the error of nearest-neighbour uniform quantization of ReLU networks. This is achieved using a new lower-bound on the Lipschitz constant of the map that associates the parameters of ReLU networks to their realization, and an upper-bound generalizing classical results.
    Second, we investigate when ReLU networks can be expected, or not, to have better approximation properties than other classical approximation families. Indeed, several approximation families share the following common limitation: their polynomial asymptotic approximation speed of any set is bounded from above by the encoding speed of this set. We introduce a new abstract property of approximation families, called $\infty$-encodability, which implies this upper-bound. Many classical approximation families, defined with dictionaries or ReLU networks, are shown to be $\infty$-encodable. This unifies and generalizes several situations where this upper-bound is known.
\end{abstract}

\begin{IEEEkeywords}
    Approximation speed, encoding speed, ReLU neural networks, quantization.
\end{IEEEkeywords}

\section{Introduction}
\noindent Neural networks are used  with success in many applications to approximate functions. In line with the works~\cite{DeVore2020SurveyNNApprox, Elbrachter21DNNApproximationTheory, Grohs15OptimallySparseDataRep}, we are interested in understanding their approximation power in practice and in theory. We deal with two complementary questions. Regarding practical applications, a key question is to be able to compare approximation properties of unquantized versus quantized neural networks, \ie networks with arbitrary real weights versus networks whose weights are constrained to a prescribed finite set (\eg floats). The results obtained in this direction are described below in \Cref{contrib1}. A practical type of quantization on which we will focus is uniform quantization, \ie when the weights are only allowed to be in a finite subset of a uniform grid of the real line. Another important question is to better understand non-trivial situations where neural networks, quantized or not, can be expected (or not) to have better approximation properties than the best known approximation families\footnote{An approximation family is any (often non-decreasing) sequence $(\approxclass_M)_{M=1,2,\dots}$ of subsets of a metric space $(\F,d)$.}. We lay a framework that can be used to identify such situations, as described in \Cref{contrib2}.
\begin{contrib}[approximation with quantized networks]\label{contrib1}
    Consider the parameters $\theta\in\R^d$ of a ReLU neural network and denote $R_{\theta}\in L^p$ the associated function, called its \emph{realization}, cf. Definition \ref{def:rtheta}. We also say that $R_\theta$ is \emph{realized} by the network with parameters $\theta$. Consider a quantization scheme\footnote{A quantization scheme is a function with a finite image.} $Q:\R^d\to\R^d$ used to quantize $\theta$ (\eg $Q(\theta)$ has only float coordinates). We address three main questions.

    \paragraph{Quantization error} A first question is to study how the quantization error $\|R_{\theta}-R_{Q(\theta)}\|_{p}$ (see \cref{subsec:ApproxSettingNN} for the definition of $\|\cdot\|_p$) depends on the quantization scheme $Q$. The following result goes in that direction.
    \begin{informal}[see \Cref{thm:BoundsLipschitzConstant}]\label{infThm:BoundsLipsCste}
        Fix an architecture (see \Cref{def:ArchiNN}) of feedforward ReLU networks, \ie fix the number of layers, denoted by $L$, and their width. Denote $W$ the maximal width of the layers. Denote $\Theta(r)$ the collection of all parameters of such networks having their Euclidean norm bounded by $r\geq 1$. Consider $1\leq p \leq \infty$. Then, the Lipschitz constant $\Lips(W,L,r)$ of the map that associates the parameters $\theta\in\Theta(r)$ to their realization $ R_\theta\in L^p$ satisfies, with constants $c,c'>0$ only depending on the $L^p$ space:
        \[
            c' Lr^{L-1}\leq \Lips(W,L,r)\leq c WL^2r^{L-1}.
        \]
    \end{informal}
    To the best of our knowledge, the lower-bound given in this result is new, while the upper-bound generalizes classical results \cite[Thm 2.6]{Berner20GeneralizationErrorBlackScholes}\cite[Lem. 2]{Neyshabur18PACBayesSpectrallyNormalizedMarginBounds} to generic $L^p$ spaces and to more general constraints on the parameters. Thanks to \Cref{infThm:BoundsLipsCste}, the number of bits used by a quantization scheme can be related to the error of this scheme. Our second result does so in the case of a nearest-neighbour uniform quantization scheme.
    \begin{informal}[see \Cref{thm:UnifQuantifErrorSufficientCond} and \Cref{thm:UnifQuantifErrorNecessaryCond}]\label{infThm:characQuantif}
        Consider the same setting as in \Cref{infThm:BoundsLipsCste}. Fix a stepsize $\eta>0$ and a desired error $\eps>0$. Consider the uniform quantization scheme\footnote{$\floor{\cdot}$ is defined as $\floor{x}:=\max\{n\in\Z, n\leq x\}$ for every $x\in\R$ while $\ceil{\cdot}$ is defined as $\ceil{x}:=\min\{n\in\Z, n\geq x\}$ for every $x\in\R$.} $Q_\eta(x)=\floor{x/\eta}\eta$  applied coordinate-wise on the parameters $\theta$ of a ReLU network. Then, \emph{$\|R_\theta-R_{Q_\eta(\theta)}\|_{\infty}\leq \eps$ holds for every $\theta\in\Theta(r)$ if, and only if, the number of bits used to store each coordinate of $Q_\eta(\theta)$}, which is proportional to $\ln(1/\eta)$, \emph{is linear in the depth $L$}.
    \end{informal}
    The generality of our first result suggests that the second one can be generalized to other settings, this will be further discussed in \Cref{rmk:GenLem6.8}. As a consequence of our first result, we also prove (cf. \Cref{prop:Lem6.8}) that Lemma \MakeUppercase{\romannumeral 6}.8 in \cite{Elbrachter21DNNApproximationTheory}, which controls quantization errors of the type $\|R_\theta-R_{Q(\theta)}\|_p$ when $Q$ is a nearest-neighbour uniform quantization scheme, can be improved.

    \paragraph{Approximation error} Given a function $f\in L^p$, how "well" can $f$ be approximated by quantized ReLU networks? If the parameters $\theta$ of a ReLU network are known to approximate "well" $f$, then one can simply quantize $\theta$ via a quantization scheme $Q$ and write, using the triangle inequality: $\|f-R_{Q(\theta)}\|_p\leq \|f-R_{\theta}\|_p + \|R_{\theta}-R_{Q(\theta)}\|_p$. The results above can be used to control the quantization error $\|R_{\theta}-R_{Q(\theta)}\|_p$. Applying this to functions $f$ in $L^\infty$-Sobolev spaces, and using external work \cite{Yarotsky17approxUnitBallSobolevWithDNN} to guarantee the existence of parameters $\theta$ approximating $f$ "well", we recover Theorem 2 in \cite{Ding19exprQuantizedNN}, see \Cref{prop:Ding}. Much more generic applications can be envisioned, see \Cref{rmk:Approx+QuantifErr}.

    \paragraph{Polynomial asymptotic approximation speed} Consider a function $f\in L^p$ and a sequence of parameters $(\theta_M)_{M\in\N}$ (with $\N=\{1,2,\dots\}$). Can we design a sequence $(Q_M)$ of quantization schemes such that the realizations of the networks with quantized parameters $(Q_{M}(\theta_M))_{M\in\N}$ approximate the function $f$ at the same asymptotic polynomial rate, with $M$, as the unquantized parameters $(\theta_M)_{M\in\N}$? Using the triangle inequality $\|f-R_{Q_{M}(\theta_M)}\|_p\leq \|f-R_{\theta_M}\|_p + \|R_{\theta_M}-R_{Q_{M}(\theta_M)}\|_p$ for each integer $M$, it is sufficient to guarantee that $\|R_{\theta_M}-R_{Q_{M}(\theta_M)}\|_p$ decreases at the same polynomial asymptotic rate as $\|f-R_{\theta_M}\|_p$. Given a subset $\fctclass$ of a metric function space $(\F,d)$ and an approximation family $\approxclass=(\approxclass_M)_{M\in\N}$ in $\F$, \emph{the polynomial asymptotic approximation speed $\approximationspeed(\fctclass|\approxclass)$ of $\fctclass$ by $\approxclass$} \cite[Def. \MakeUppercase{\romannumeral 5}.2, Def. \MakeUppercase{\romannumeral 6}.1]{Elbrachter21DNNApproximationTheory}, called simply approximation speed in what follows, \emph{is the best polynomial rate at which all functions of $\fctclass$ are asymptotically approximated by $\approxclass$}:
    \[
        \approximationspeed(\fctclass|\approxclass) :=\sup\{\gamma \in \mathbb{R}, \sup_{f\in\fctclass}\inf_{\Phi\in\approxclass_M}d(f,\Phi) = \mathcal{O}_{M \to \infty}\left(M^{-\gamma}\right)\} \in [-\infty,+\infty],
    \]
    with the convention $ \approximationspeed(\fctclass|\approxclass)  = -\infty$ if the supremum is over an empty set. In the following result, we exhibit a sufficient number of bits per coordinate that guarantees that nearest-neighbour uniform quantization preserves approximation rates of approximation families defined with ReLU networks.
    \begin{informal}[see \Cref{thm:ApproxSpeedQuantifNN}]
        \label{informalthm:ApproxSpeedQuantifNN}
        Consider the approximation family $\approxclass=(\approxclass_M)_{M\in\N}$ in an arbitrary $L^p$ space, such that $\approxclass_M$ is the set of functions realized by ReLU networks with depth bounded by $L_M\in\N$, with parameters having at most $M$ non-zero coordinates and with Euclidean norm bounded by $r_M\geq 1$. For $\speed>0$, consider the $\speed$-\emph{uniformly quantized} sequence $\mQ(\approxclass|\speed):=(\mQ(\approxclass_M|\speed))_{M\in\N}$, where $\mQ_{M}(\approxclass_M|\speed)$ is the set of functions realized by ReLU networks as above, but with parameters uniformly quantized using the quantization scheme $Q_{\eta_M}(x)=\floor{x/\eta_M}\eta_M$ for a step size $\eta_M=M^{-\speed}\Lips(M,L_M,r_M)$. Then, the $\speed$-\emph{uniformly quantized} sequence $\mQ(\approxclass|\speed)$ has, on every set $\fctclass\subset L^p$, an approximation speed which is comparable to its unquantized version $\approxclass$:
        \[
            \begin{array}{ll}
                \approximationspeed(\fctclass|\mQ(\approxclass|\speed)) = \approximationspeed(\fctclass|\approxclass) & \text{if $\speed\geq \approximationspeed(\fctclass|\approxclass)$}, \\
                \approximationspeed(\fctclass|\mQ(\approxclass|\speed)) \geq \speed                                   & \text{otherwise}.
            \end{array}
        \]
    \end{informal}
    This theorem leads to explicit conditions on the number of bits per coordinate that guarantee quantized ReLU networks to have the same approximation speeds as unquantized ones, see \Cref{ex:CompSpeedQuantizedNN}. In the proof, approximation speeds are matched by (i) taking unquantized parameters that (almost) achieve the unquantized approximation speed and (ii) quantizing these parameters with a sufficiently large number of bits in order to preserve the approximation speed. Smarter (but computationally more challenging) quantization schemes can be envisioned, such as directly picking the best quantized parameters to approximate the function. If the budget for the number of bits per coordinate is larger than the one given in \Cref{informalthm:ApproxSpeedQuantifNN}, then even the smartest quantization scheme will not beat approach $(i)+(ii)$ in terms of polynomial approximation speed (but it can still have better constants/log-terms etc.). Indeed, $(i)+(ii)$ already yields the same approximation speeds for quantized networks as unquantized ones, and quantized networks cannot do better than unquantized ones. An open question is: what is the minimum number of bits per coordinate needed to keep the same approximation speeds? We partially answer this question: \Cref{informalthm:ApproxSpeedQuantifNN} gives an upper-bound.
\end{contrib}
\begin{contrib}[unified and generic framework for a relation between approximation and information theory]\label{contrib2}
    We investigate generic settings where the approximation speed of a set $\fctclass$ by an approximation family is bounded from above by the encoding speed of $\fctclass$. The encoding speed is an informatic theoretic complexity that measures the best polynomial asymptotic rate at which the number of balls needed to cover the considered set grows as the radius of the balls goes to zero. Given a subset $\fctclass$ of a metric space $(\F,d)$ and $\eps>0$, a finite subset $X\subset\fctclass$ is called an \emph{$\eps$-covering} of $\fctclass$ if:
    \begin{equation}\label{def:CovNumber}
        \fctclass\subset\bigcup_{x\in X}B_d(x,\eps),
    \end{equation}
    where $B_d(x,\eps)$ denotes the closed ball of $\fctclass$, with respect to the metric $d$, centered in $x$ and with radius $\eps$. The \emph{covering number} $N(\fctclass,d,\eps)$ is the minimal size of an $\eps$-covering of $\fctclass$, with the convention that $N(\fctclass,d,\eps) = + \infty$ if there is no such covering. The \emph{metric entropy} is defined by $H(\fctclass,d,\eps):= \log_2(N(\fctclass, d, \eps))$. The encoding speed of $\fctclass$ is defined \cite[Def. \MakeUppercase{\romannumeral 4}.1]{Elbrachter21DNNApproximationTheory} as:
    \begin{equation}\label{def: encoding speed}
        \encodingspeed(\fctclass):=\sup\left\{\gamma >0, H(\fctclass,d,\eps) = \mathcal{O}_{\eps \to 0}(\eps^{-1/\speed})\right\},
    \end{equation}
    with the convention that $\encodingspeed(\fctclass) = 0$ if the supremum is over an empty set. The encoding speed is known for many $\fctclass$'s, see \cite[Table 1]{Elbrachter21DNNApproximationTheory}. Consider an approximation family $\approxclass=(\approxclass_M)_{M\in\N}$ and a set $\fctclass$. If $\approxclass_M$ approximates "well" $\fctclass$ (measured by $\approximationspeed(\fctclass|\approxclass)$), then one can use balls covering $\approxclass_M$ to cover $\fctclass$. This simple observation is at the origin of the following inequality, known in several situations, for instance when $\approxclass$ is defined with dictionaries \cite[Thm. \MakeUppercase{\romannumeral 5}.3]{Elbrachter21DNNApproximationTheory}\cite[Thm. 5.24]{Grohs15OptimallySparseDataRep} or ReLU networks \cite[Thm. \MakeUppercase{\romannumeral 6}.4]{Elbrachter21DNNApproximationTheory}:
    \begin{equation}\label{ineq: main ineq}
        \approximationspeed(\fctclass|\approxclass)\leq\encodingspeed(\fctclass).
    \end{equation}
    Inequality \eqref{ineq: main ineq} happens to be an equality in various cases, see \cite[Table 1]{Elbrachter21DNNApproximationTheory}. In \Cref{def:GammaEncodability}, we introduce an abstract property of approximation families $\approxclass=(\approxclass_M)_{M\in\N}$, called $\speed$-encodability, for $\gamma>0$, that measures how well each $\approxclass_M$ can be covered by balls. Roughly speaking, $\approxclass$ is $\gamma$-encodable if asymptotically in $M$, the set $\approxclass_M$ can be covered with nearly of the order of $M$ balls of radius $M^{-\speed}$. We say that it is $\infty$-encodable if it is $\gamma$-encodable for every $\gamma>0$. Our next result is to show that $\speed$-encodability can be used to understand several situations where \eqref{ineq: main ineq} holds.
    \begin{informal}[see \Cref{thm:unifying_thm}]\label{informalthm:unifying_thm}
        Consider an approximation family $\approxclass$ and $\speed>0$. If $\approxclass$ is $\speed$-encodable then for every set $\fctclass$:
        \[
            \min(\approximationspeed(\fctclass|\approxclass), \speed)\leq \encodingspeed(\fctclass).
        \]
    \end{informal}
    Note that when an approximation family is $\infty$-encodable, this result gives Inequality \eqref{ineq: main ineq}. Inequality \eqref{ineq: main ineq} is then of particular interest in order to bound the approximation speed $\approximationspeed(\fctclass|\approxclass)$ from above without having to look at all at the approximation properties of the set $\fctclass$ by the sequence $\approxclass$. Instead, we can study separately $\approxclass$ and establish at which speed it can be \emph{encoded}. This lays a framework that we use to unify and generalize several situations where Inequality \eqref{ineq: main ineq} is known \cite[Thm. \MakeUppercase{\romannumeral 5}.3, Thm. \MakeUppercase{\romannumeral 6}.4]{Elbrachter21DNNApproximationTheory}\cite[Thm. 5.24]{Grohs15OptimallySparseDataRep}. Indeed, we show that many approximation families $\approxclass=(\approxclass_M)_{M\in\N}$ are $\infty$-encodable: when $\approxclass_M$ contains $M$-terms linear combinations of the first $\text{poly}(M)$ elements\footnote{$M\mapsto\text{poly}(M)$ denotes a positive function that grows at most polynomially in $M$.} of a bounded dictionary, with boundedness conditions on the coefficients, or when $\approxclass$ is Lipschitz-parameterized (we say that an approximation family $(\approxclass_M)_{M\in\N}$ is Lipschitz-parameterized if there is a sequence $(B_M)_{M\in\N}$ of subsets of finite dimensional spaces and a sequence $(\varphi_M)_{M\in\N}$ of Lipschitz maps such that $\approxclass_M=\varphi_M(B_M)$ for every $M\in\N$), which is  the case for ReLU neural networks for which we identify "simple" sufficient conditions on the considered architectures for $\infty$-encodability to hold, see \cref{subsec:NNInfEnc}. Another consequence is that $\infty$-encodable approximation families $\approxclass$ defined with ReLU neural networks share a common upper bound on approximation rates with other classical approximation families that we prove to be $\infty$-encodable. In particular, given $\fctclass$, if an $\infty$-encodable sequence $\approxclass$ is known such that $\approximationspeed(\fctclass|\approxclass) =\encodingspeed(\fctclass)$ (examples of such situations can be found in \cite[Table 1]{Elbrachter21DNNApproximationTheory}), then no improved approximation rate using ReLU networks can be hoped for.
\end{contrib}

\textbf{Organization of the paper.} We recall the definition of feedforward ReLU neural network in \cref{subsec: def neural networks}. In \cref{subsec:ApproxSettingNN}, we describe the $L^p$ spaces in which approximation is considered. Bounds on the Lipschitz constants of the map that associates the parameters $\theta$ of ReLU networks to their realization $R_\theta\in L^p$ are given in \cref{sec:NNLipsParam}. The error of nearest-neighbour uniform quantization for ReLU networks is discussed in \cref{sec:UnifQuantifNN}. Approximation speeds of quantized ReLU networks are established in \cref{sec:ApproxSpeedQuantNN}. The notion of $\infty$-encodability is introduced in \cref{sec: encoding speeds vs approximation speeds}, before discussing its consequences on the relation between the approximation speed and the encoding speed. Examples of $\infty$-encodable sequences (\eg defined with dictionaries or ReLU networks) are then given in \cref{sec: gamma-encodable}. Sections \ref{sec: encoding speeds vs approximation speeds} and \ref{sec: gamma-encodable} are essentially independent of the others. We give some perspectives in \cref{sec:conc}. Some useful definitions, technical results and their proofs are gathered in the appendices.
\section{Preliminaries}\label{sec: preliminaries}
We recall the definition of ReLU neural networks, and characterize $L^{p}$ spaces containing their realizations.

\subsection{ReLU neural networks}\label{subsec: def neural networks}
\noindent A ReLU neural network is a parametric description of the alternate composition of affine maps between finite-dimensional spaces and of a non-linear function. The non-linearity consists of the so-called Rectified Linear Unit (ReLU) applied coordinate-wise.

\begin{definition}[ReLU: Rectified Linear Unit] The ReLU function $\rho$ is defined by \cite{arora2018understandingReLUnetworks}:
    \[
        \forall x\in\mathbb{R}, \;\rho(x) := \max(0,x).
    \]
    For $d\in\N$, its $d$-dimensional version consists of applying it coordinate-wise:
    \[
        \forall x\in\mathbb{R}^d,\; \rho(x) := (\rho(x_i))_{i=1\dots d}.
    \]
\end{definition}

\begin{figure}[ht]
    \centering
    \includegraphics[scale=0.3]{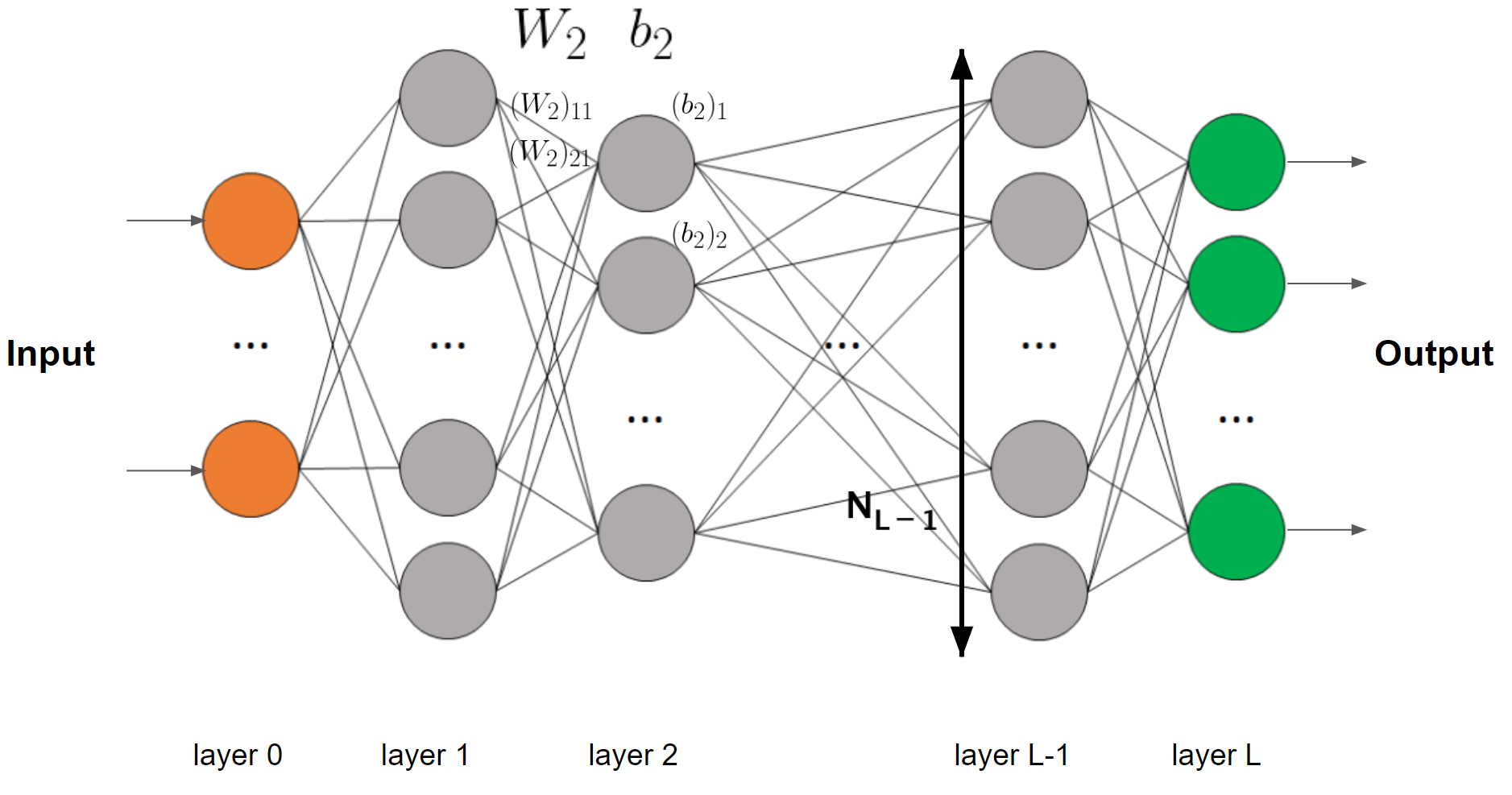}
    \caption{A neural network architecture can be seen as a directed graph. Neurons are represented by vertices, grouped by layers. For each neuron, there are edges going from this neuron to each neuron of the following layer. Coefficient $(i,j)$ of $W_\ell$ can be seen as the weight of the edge going from neuron $j$ of layer $\ell-1$ to neuron $i$ of layer $\ell$. Coefficient $i$ of $b_\ell$ can be seen as the weight of neuron $i$ of layer $\ell$.}
    \label{fig: réseau de neurones 1}
\end{figure}

\begin{definition}[Architecture of a neural network]\label{def:ArchiNN} An architecture of a neural network consists of a tuple $(L,\mathbf{N})$, with $L\in\N$ and $\mathbf{N}=(N_0,\dots,N_L)\in\N^{L+1}$. We then say that $L$ is the depth of the network. A network with such an architecture has $L+1$ layers of neurons, indexed from $\ell=0$ to $\ell=L$. Layer $\ell$ has $N_\ell$ neurons, we call $N_\ell$ the width of layer $\ell$. The width of the network is $W:=\max\limits_{\ell=0,\dots, L} N_\ell$. Layer $0$ is the input layer while layer $L$ is the output layer.
\end{definition}
An architecture can be represented as a graph, with a vertex for each neuron, and an edge between every pair of neurons within consecutive layers, see Figure \ref{fig: réseau de neurones 1} (thus, in this work, a layer consists of a set of neurons, not a set of edges).
\begin{definition}[Parameters associated to a network architecture] Let $(L,\mathbf{N})$ be an architecture. A parameter associated to this architecture consists of a vector $\theta=(W_1,\dots,W_L,b_1,\dots,b_L)$, with $W_\ell\in\R^{N_\ell\times N_{\ell-1}}$ and $b_\ell\in\R^{N_\ell}$. Such a parameter $\theta$ lives in the parameter space
    \begin{equation}\label{eq: def Theta L N et dim(L,N)}
        \begin{aligned}
            \paramarchiclass{L}{N} {} & := \mathbb{R}^{d_{\archi{L}{N}}},       \\
            d_{\archi{L}{N}} {}       & :=\sum_{\ell=1}^L N_\ell(N_{\ell-1}+1).
        \end{aligned}
    \end{equation}
\end{definition}
A parameter $\theta$ can be represented graphically: if neurons on layer $\ell$ are numbered from $1$ to $N_\ell$, then $(W_\ell)_{i,j}$ is the weight on the edge going from neuron $j$ of layer $\ell-1$ to neuron $i$ of layer $\ell$, while $(b_\ell)_i$ is the weight on neuron $i$ of layer $\ell$, see Figure \ref{fig: réseau de neurones 1}.
\begin{definition}[ReLU neural network and its realization]\label{def:rtheta}
    A ReLU neural network consists of an architecture $\archi{L}{N}$ and an associated parameter $\theta=(W_1,\dots,W_L,b_1,\dots,b_L)$. Its realization is the function denoted $R_{\param}:\mathbb{R}^{N_0}\rightarrow\mathbb{R}^{N_L}$, given by:
    \[
        \forall x\in\mathbb{R}^{N_0}, R_{\param}(x):=\Tilde{y}_L(x)
    \]
    with functions $y_\ell$ and $\Tilde{y}_\ell$ defined by induction on $\ell=1,\dots, L$:
    \begin{align*}
        y_0(x) {}               & =x,                           \\
        \Tilde{y}_{\ell+1}(x){} & =W_{\ell+1}y_\ell+b_{\ell+1}, \\
        y_{\ell+1}(x) {}        & =\rho(\Tilde{y}_{\ell+1}(x)).
    \end{align*}
    In words, the input $x$ goes through each layer sequentially, and when it goes from layer $\ell$ to $\ell+1$, it first goes through an affine transformation, of linear part $W_{\ell+1}$ and constant part $b_{\ell+1}$, then it goes through the ReLU function $\rho$ applied coordinate-wise (except on the last layer where the ReLU function is not applied).
\end{definition}
\subsection{Considered functional approximation setting}\label{subsec:ApproxSettingNN}
\noindent We consider $L^p$ spaces that contain all functions realized by ReLU networks (or equivalently all piecewise affine functions). We record the characterization of such spaces in \Cref{lem:CharacLpSpaces} (these are the  $L^p$ spaces for which \eqref{ineq: fct affine dans F} holds true) since we could not find it stated elsewhere. First let us introduce our notations for $L^p$ spaces. Let $d_{\textrm{in}},d_{\textrm{out}}\in\mathbb{N}$ be input and output dimensions, $p\in[1,\infty]$ be an exponent,  $\Omega\subset\mathbb{R}^{d_{\textrm{in}}}$ be the input domain and $\mu$ be a (non-negative) measure on $\Omega$. Given a norm $\|\cdot\|$ on $\mathbb{R}^{d_{\textrm{out}}}$, we define for every measurable function $f:\Omega\rightarrow\mathbb{R}^{d_{\textrm{out}}}$:
\begin{align*}
    \|f\|_{p,\|\cdot\|} :=
    \begin{cases}
        \left(\int_{x\in\Omega} \|f(x)\|^p\mathrm{d}\mu(x)\right)^{\frac{1}{p}} & \mbox{ if } p<\infty, \\
        \esssup\limits_{x\in\Omega} \|f(x)\|                                    & \mbox{ if } p=\infty.
    \end{cases}
\end{align*}
We consider approximation in the space $L^p(\Omega\to(\R^{d_{\textrm{out}}},\|\cdot\|),\mu)$ consisting of all measurable functions $f$ from $\Omega$ to $\R^{d_{\textrm{out}}}$ such that $\|f\|_{p,\|\cdot\|}<\infty$, quotiented by the relation ``being equal almost everywhere''. This is a Banach space with respect to the norm $\|\cdot\|_{p,\|\cdot\|}$. By the equivalence of norms in $\mathbb{R}^{d_{\textrm{out}}}$, this Banach space is independent of the choice of norm $\|\cdot\|$ on $\R^{d_{\textrm{out}}}$, and (for a given $p$) all norms $\|\cdot\|_{p,\|\cdot\|}$ are equivalent. In light of this fact we will simply denote it
$L^p(\Omega\to \R^{d_{\textrm{out}}},\mu)$, or even abbreviate it as $L^{p}$. We also denote $\|\cdot\|_{p} := \|\cdot\|_{p,\|\cdot\|_{\infty}}$. We will stress the dependence on the norm $\|\cdot\|$ when it plays a role, such as in \Cref{thm:BoundsLipschitzConstant}.

We now state a necessary and sufficient condition on $\Omega\subset\R^{d_{\textrm{in}}}$ and $\mu$ so that all functions realized by a ReLU neural network with input dimension $d_{\textrm{in}}$ and output dimension $d_{\textrm{out}}$ are in $L^p(\Omega\to \R^{d_{\textrm{out}}},\mu)$. The proof can be found in appendix \ref{app:CharacLpSpaces}.
\begin{lemma}\label{lem:CharacLpSpaces}
    Consider an exponent $p \in [1,\infty]$, a dimension $d_{\textrm{in}}$, a domain $\Omega\subset\mathbb{R}^{d_{\textrm{in}}}$, and a measure $\mu$ on $\Omega$. Define
    \[
        C_{p}(\Omega,\mu) :=
        \begin{cases}
            \left(\int_{x\in\Omega} (\|x\|_{\infty}+1)^p\mathrm{d}\mu(x)\right)^{1/p} & \mbox{ if } p<\infty, \\
            \esssup\limits_{x\in\Omega} \|x\|_{\infty}                                & \mbox{ if } p=\infty.
        \end{cases}
    \]
    The condition
    \begin{align}\label{ineq: fct affine dans F}
        C_{p}(\Omega,\mu) < \infty
    \end{align}
    is equivalent to: for every architecture $\archi{L}{N}$ with $N_0=d_{\textrm{in}}$ the realizations of ReLU networks satisfy:
    \[
        \forall\param\in\paramarchiclass{L}{N}, R_{\param}\in L^p(\Omega\to\R^{N_{L}},\mu),
    \]
    where $\paramarchiclass{L}{N}$ is defined in Equation~\eqref{eq: def Theta L N et dim(L,N)} and $N_{L}$ is the width of the output layer (\Cref{def:ArchiNN}).
\end{lemma}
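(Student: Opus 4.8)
The plan is to prove the two implications separately, the heart of the matter being a linear growth bound on realizations; the necessity direction is then obtained by testing the hypothesis on a few explicit affine networks.

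\emph{Sufficiency} ($C_p(\Omega,\mu)<\infty\Rightarrow$ all realizations in $L^p$). First I would establish that every realization has at most linear growth: for each architecture $\archi{L}{N}$ with $N_0=d_{\textrm{in}}$ and each $\theta\in\paramarchiclass{L}{N}$ there is a constant $C_\theta\ge 0$ with $\|R_\theta(x)\|_\infty \le C_\theta(\|x\|_\infty + 1)$ for all $x\in\R^{N_0}$. This is proved by induction on the layer index using the functions $y_\ell,\tilde y_\ell$ of \Cref{def:rtheta}. Writing $\|W\|$ for the $\ell^\infty\to\ell^\infty$ operator norm (the maximal absolute row sum), the affine step gives $\|\tilde y_{\ell+1}(x)\|_\infty \le \|W_{\ell+1}\|\,\|y_\ell(x)\|_\infty + \|b_{\ell+1}\|_\infty$, and since $\rho$ satisfies $0\le \rho(t)\le |t|$ coordinatewise we have $\|y_{\ell+1}(x)\|_\infty = \|\rho(\tilde y_{\ell+1}(x))\|_\infty \le \|\tilde y_{\ell+1}(x)\|_\infty$. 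Starting from $y_0(x)=x$, this propagates a bound of the form $\|y_\ell(x)\|_\infty \le a_\ell\|x\|_\infty + b_\ell$ through the layers, and evaluating at $\ell=L$ (no final $\rho$) yields the claimed inequality with $C_\theta:=\max(a_L,b_L)$. Integrating, for $p<\infty$ one gets
\[
    \|R_\theta\|_{p,\|\cdot\|_\infty} \le C_\theta\left(\int_{\Omega}(\|x\|_\infty+1)^p\,\mathrm d\mu(x)\right)^{1/p} = C_\theta\, C_p(\Omega,\mu),
\]
and similarly $\|R_\theta\|_\infty \le C_\theta(C_p(\Omega,\mu)+1)$ for $p=\infty$; both are finite under \eqref{ineq: fct affine dans F}. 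Since all norms on $\R^{N_L}$ are equivalent (as already noted), membership in $L^p$ does not depend on the output norm, so $R_\theta\in L^p(\Omega\to\R^{N_L},\mu)$.

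\emph{Necessity.} Conversely, assuming every realization lies in $L^p$, I would exhibit specific affine realizations forcing $C_p(\Omega,\mu)<\infty$. For $p<\infty$, the depth-$1$ architecture $(1,(d_{\textrm{in}},1))$ with $W_1=0$, $b_1=1$ realizes the constant function $1$, whose membership gives $\mu(\Omega)<\infty$; taking instead $W_1$ equal to the $i$-th row of the identity and $b_1=0$ realizes the coordinate projection $x\mapsto x_i$, whose membership gives $\int_\Omega |x_i|^p\,\mathrm d\mu<\infty$ for each $i$, whence $\int_\Omega \|x\|_\infty^p\,\mathrm d\mu \le \sum_i \int_\Omega |x_i|^p\,\mathrm d\mu < \infty$ because $\|x\|_\infty^p=\max_i|x_i|^p$. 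Combining this with $\mu(\Omega)<\infty$ and the elementary bound $(\|x\|_\infty+1)^p \le 2^{p-1}(\|x\|_\infty^p+1)$ yields $C_p(\Omega,\mu)<\infty$. For $p=\infty$, the same coordinate projections give $\esssup_\Omega|x_i|<\infty$ for each $i$, and since the essential supremum of a pointwise maximum of finitely many functions equals the maximum of their essential suprema, $C_p(\Omega,\mu)=\esssup_\Omega\|x\|_\infty = \max_i \esssup_\Omega|x_i| < \infty$.

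\emph{Main obstacle.} The only genuinely substantive step is the linear growth bound, and even that becomes routine once the $\ell^\infty\to\ell^\infty$ operator norm is used so that the $1$-Lipschitz, sublinear behaviour of $\rho$ can be chained cleanly across layers. The remaining difficulty is purely bookkeeping: accounting for the additive constant $+1$ appearing in $C_p(\Omega,\mu)$ (which is exactly what forces me to test with \emph{both} a nonzero constant realization and the coordinate projections in the necessity direction) and correctly commuting the essential supremum with the finite maximum over coordinates in the case $p=\infty$.
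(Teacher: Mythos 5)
Your proof is correct. The overall architecture is the same as the paper's (prove a pointwise linear-growth bound and integrate it against $C_p(\Omega,\mu)$ for sufficiency; test the hypothesis on explicit depth-one affine networks for necessity), but the key step of the sufficiency direction is handled differently. The paper invokes the external fact that $R_\theta$ is continuous and piecewise linear \cite[Thm. 2.1]{arora2018understandingReLUnetworks}, decomposes $R_\theta=\sum_i \chi_{\Omega_i}f_i$ over a finite partition into pieces where it is affine, and bounds each affine piece by $c(f)(\|x\|_\infty+1)$; you instead obtain the same bound $\|R_\theta(x)\|_\infty\le C_\theta(\|x\|_\infty+1)$ globally by a direct induction through the layers, using the $\ell^\infty\to\ell^\infty$ operator norm and the sublinearity $0\le\rho(t)\le|t|$. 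Your route is more self-contained: it avoids the citation and the (minor) measurability bookkeeping attached to the partition $\{\Omega_i\}$, at the cost of a slightly longer computation. In the necessity direction the two proofs differ only cosmetically: the paper uses the full identity map on an architecture with $N_1=N_0$ together with a constant network and concludes by the Minkowski inequality, whereas you use scalar coordinate projections plus the constant network and conclude via $\|x\|_\infty^p\le\sum_i|x_i|^p$ and the convexity bound $(\|x\|_\infty+1)^p\le 2^{p-1}(\|x\|_\infty^p+1)$; your treatment of the $p=\infty$ case, commuting the essential supremum with the finite maximum over coordinates, is also correct.
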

Note that the condition $C_{p}(\Omega,\mu) < \infty$ holds in particular for every $p\in [1,\infty]$ when the input domain is bounded and $\mu$ is the Lebesgue measure.
\section{Lipschitz parameterization of ReLU neural networks}\label{sec:NNLipsParam}
\noindent It is known that some sets of functions realized by ReLU networks are Lipschitz-parameterized\footnote{A set is  Lipschitz-parameterized if it is the image by a Lipschitz map of a subset of a finite dimensional space.} \cite[Rmk. 9.1]{DeVore2020SurveyNNApprox}. In \Cref{thm:BoundsLipschitzConstant}, we give lower- and upper-bounds on the Lipschitz constant depending on the depth, the width and the weight's magnitude of the considered networks. To the best of our knowledge, the lower-bound is new, while the upper-bound generalizes similar upper-bounds established in specific cases \cite[Thm 2.6]{Berner20GeneralizationErrorBlackScholes}\cite[Lem. 2]{Neyshabur18PACBayesSpectrallyNormalizedMarginBounds} as discussed below.
These bounds will be useful in the next sections to understand how quantization of ReLU networks harms approximation error.
\begin{definition} \label{def: classe de param associee a une archi et G} (Parameter set $\paramarchiclass{L}{N}^q(r)$)
    Given an architecture $\archi{L}{N}$ and the set of associated parameters $\paramarchiclass{L}{N}$ (see Equation~\eqref{eq: def Theta L N et dim(L,N)}) we define for each $r\geq 0$ and $q \in [1,\infty]$ (the notation $\vertiii{\cdot}$ refers to the operator norm and is defined in appendix \ref{app: norms}):
    \[
        \paramarchiclass{L}{N}^q(r) := \{\param = (W_1,\dots,W_L,b_1,\dots,b_L) \in\paramarchiclass{L}{N}:
        \vertiii{W_\ell}_q,\|b_\ell\|_q\leqslant r, \ell=1,\dots ,L\}.
    \]
\end{definition}
Note that what will play a crucial role in what follows is the Lipschitz constant of the functions realized by the parameters in $\paramarchiclass{L}{N}^q(r)$. This Lipschitz constant is bounded by $r^L$ in the setup of \Cref{def: classe de param associee a une archi et G}. We do not enforce directly a global constraint on the Lipschitz constant since, to the best of our knowledge, there is no better practical way to enforce this constraint than by enforcing each $W_\ell$ and $b_\ell$ to have small norms. A more realistic situation thus corresponds to parameters $\theta$ with each $W_\ell$ and $b_\ell$ bounded for some norms, which is what reflects the definition of the set of parameters $\paramarchiclass{L}{N}^q(r)$ in \Cref{def: classe de param associee a une archi et G}.
\begin{remark}\label{rmk: parameter set with Frobenius or max norm} Instead of constraints on the operator norms, we may encounter constraints on the Frobenius or the max-norm. Let $r\geq 0$, and let $\archi{L}{N}$ be an architecture. Define by $W:=\max\limits_{\ell=0,\dots,L} N_\ell$ the width of the network. Denote $\|M\|_F=(\sum_{i,j} M_{i,j}^2)^{1/2}$ the Frobenius norm of a matrix $M$ and $\|M\|_{\max}=\max_{i,j}|M_{i,j}|$ the max-norm (to be distinguished from $\vertiii{M}_\infty$ the operator norm defined in appendix \ref{app: norms}), and define $\paramarchiclass{L}{N}^F(r)$ (resp. $\paramarchiclass{L}{N}^{\max}(r)$) the set of all $\param = (W_1,\dots,W_L,b_1,\dots,b_L)\in\paramarchiclass{L}{N}$ such that for every $\ell=1,\dots ,L$:
    \[
        \max\left(\|W_\ell\|_F, \|b_\ell\|_2\right) \leq r \text{ (resp. } \max\left(\|W_\ell\|_{\max}, \|b_\ell\|_\infty\right)\leq r).
    \]
    By standard results about equivalence of norms (see \eg \eqref{eq:OperatorNormEquivalence} in the appendix) it holds for every $q\in[1,\infty]$:
    \begin{align*}
        \paramarchiclass{L}{N}^F(r) {}      & \subset \paramarchiclass{L}{N}^2(r),                                            &
        \paramarchiclass{L}{N}^{\max}(r) {} & \subset \paramarchiclass{L}{N}^q(Wr) \subset \paramarchiclass{L}{N}^{\max}(Wr).
    \end{align*}
\end{remark}
Given an architecture $\archi{L}{N}$, we now give bounds on the Lipschitz constant of the map associating the parameters to their realization: $\theta\in\paramarchiclass{L}{N}^q(r)\mapsto R_\theta\in L^p$ . The proof is in appendix \ref{app: proof Lips constant}.
\begin{thm}\label{thm:BoundsLipschitzConstant} Consider $d_{\textrm{in}}, d_{\textrm{out}}\in\N$, $\Omega\subset\R^{d_{\textrm{in}}}$, $\mu$ a measure on $\Omega$ satisfying \eqref{ineq: fct affine dans F}, $\|\cdot\|$ a norm on $\R^{d_{\textrm{out}}}$, $p,q\in[1,\infty]$, and the space $\F := L^p(\Omega\to (\R^{d_{\textrm{out}}},\|\cdot\|),\mu)$. Then there exists a constant $c>0$ such that for every architecture $\archi{L}{N}$ with $N_0=d_{\textrm{in}}$ and $N_L=d_{\textrm{out}}$, and every $r\geqslant 1$, denoting by $W:=\max\limits_{\ell=0,\dots,L} N_\ell$ the width of the architecture, the map $\param\in\paramarchiclass{L}{N}^q(r)\mapsto R_{\param}\in L^p$ for ReLU networks satisfies
    \begin{equation}\label{eq:LipsConstDNN}
        \|R_{\param}-R_{\param'}\|_{p,\|\cdot\|} \leqslant cWL^2r^{L-1}\|\param-\param'\|_{\infty}\quad\text{ for all } \param, \param'\in\paramarchiclass{L}{N}^q(r).
    \end{equation}
    In particular, with $\mu$ the Lebesgue measure on $\Omega=[-D,D]^d$ for some $D>0$, this holds with:
    \begin{itemize}
        \item $c:=Dd^{1/q}+1$ if $p=\infty$ and $\|\cdot\| = \|\cdot\|_{q}$;
        \item $c:=(D+1)(2D)^{d/p}$ if $\|\cdot\|=\|\cdot\|_q=\|\cdot\|_\infty$.
    \end{itemize}

    Conversely, if $\Omega \subseteq \mathbb{R}_{+}^{d_{\textrm{in}}}$ (where $\R_+:=$), $\|\cdot\|=\|\cdot\|_q$ and $p=\infty$ then there exists a constant $c'>0$ independent of the architecture, such that, for every $\eps>0$, we can exhibit parameters $\param,\param'$ satisfying
    \begin{equation}\label{eq:LipsConstDNNConverse}
        \|R_{\param}-R_{\param'}\|_{p,\|\cdot\|} \geq (1-\eps)c'Lr^{L-1}\|\param-\param'\|_{\infty}.
    \end{equation}
    This converse result also holds for $1\leq p< \infty$ under the additional assumption that $N_{0} = \min_{0 \leq \ell \leq L} N_{\ell}$, \ie that all layers are at least as wide as the input layer.
\end{thm}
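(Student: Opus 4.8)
The plan is to treat the two directions separately: a forward, layer-by-layer propagation estimate for the upper bound (which generalizes known results), and an explicit saturating construction for the converse (which carries the new content).

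\textbf{Upper bound.} I would compare the two realizations via a telescoping argument across layers. Writing $\Tilde y_\ell, y_\ell$ (resp. $\Tilde y_\ell', y_\ell'$) for the pre- and post-activations driven by $\param$ (resp. $\param'$) at a fixed input $x$, the identity $\Tilde y_{\ell+1}-\Tilde y_{\ell+1}' = W_{\ell+1}(y_\ell - y_\ell') + (W_{\ell+1}-W_{\ell+1}')y_\ell' + (b_{\ell+1}-b_{\ell+1}')$, combined with the $1$-Lipschitzness of $\rho$ (so $\|y_\ell - y_\ell'\|\le\|\Tilde y_\ell - \Tilde y_\ell'\|$) and $\vertiii{W_{\ell+1}}_q\le r$, gives a scalar recursion for $e_\ell := \|\Tilde y_\ell - \Tilde y_\ell'\|$. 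I would first bound activation magnitudes by induction, $\|y_\ell'\|\lesssim r^\ell(\|x\|+\ell)$ for $r\ge 1$, and convert each operator-norm/bias perturbation into $\|\param-\param'\|_\infty$ at the cost of a width factor $W$ (this is exactly where $W$ enters, via $\vertiii{W_\ell - W_\ell'}_q \le W\|W_\ell-W_\ell'\|_{\max}$). Unrolling yields the pointwise bound $\|R_\param(x)-R_{\param'}(x)\| \lesssim W r^{L-1}(L\|x\|+L^2)\,\|\param-\param'\|_\infty$; taking the $L^p$ norm and using $C_p(\Omega,\mu)<\infty$ (\Cref{lem:CharacLpSpaces}) to absorb the $\|x\|$ term produces \eqref{eq:LipsConstDNN}. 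The explicit constants for the Lebesgue measure on $[-D,D]^d$ then follow by inserting $\esssup\|x\|_\infty\le D$ (case $p=\infty$) and $C_p\le(D+1)(2D)^{d/p}$ (case $p<\infty$).

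\textbf{Lower bound, $p=\infty$.} The key is to exhibit a pair that saturates the constraints and makes the layerwise perturbations add constructively. On $\Omega\subseteq\R_+^{d_{\textrm{in}}}$ I would route a single input coordinate along a single neuron path: take each $W_\ell$ equal to $r$ times a single $\{0,1\}$ selector (so $\vertiii{W_\ell}_q=r$, saturating the constraint) and all biases zero, so that positivity keeps $\rho$ in its linear regime and $R_\param(x)=r^L x_1$ on the active output coordinate. Perturbing every weight on the path from $r$ to $r-\delta$ gives $\param'$ with $R_{\param'}(x)=(r-\delta)^L x_1$ and $\|\param-\param'\|_\infty=\delta$, whence $\|R_\param-R_{\param'}\|_{\infty}=(r^L-(r-\delta)^L)\,\esssup_{x}x_1$. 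Since $\tfrac{r^L-(r-\delta)^L}{\delta}=L\xi^{L-1}\uparrow Lr^{L-1}$ as $\delta\downarrow 0$ by the mean value theorem, for any $\eps>0$ a small enough $\delta$ makes this ratio at least $(1-\eps)Lr^{L-1}$, giving \eqref{eq:LipsConstDNNConverse} with $c':=\esssup_x x_1$, independent of the architecture.

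\textbf{Lower bound, $1\le p<\infty$.} Here the $\esssup$ is replaced by an integral over all of $\Omega$, so I would propagate the entire input vector rather than one coordinate, taking $W_\ell=r$ times an isometric embedding $\R^{N_{\ell-1}}\hookrightarrow\R^{N_\ell}$ fixing the first $d_{\textrm{in}}$ coordinates; this forces every layer to be at least as wide as the input, i.e. $N_0=\min_\ell N_\ell$, which is precisely the stated hypothesis. With biases zero and $\Omega\subseteq\R_+^{d_{\textrm{in}}}$ the network realizes $R_\param(x)=r^L x$ on the first $d_{\textrm{in}}$ output coordinates, the same scalar perturbation gives $\|R_\param-R_{\param'}\|_{p,\|\cdot\|_q}=(r^L-(r-\delta)^L)\,(\int_\Omega\|x\|_q^p\,\mathrm d\mu)^{1/p}$, and the $\delta\downarrow 0$ argument closes the proof. \textbf{Main obstacle.} The upper bound is careful bookkeeping; the genuine difficulty is the converse, namely designing one explicit pair $(\param,\param')\in\paramarchiclass{L}{N}^{q}(r)$ that simultaneously (i) \emph{saturates} the operator-norm constraint so that the weight product is exactly $r^L$ and the sensitivity is $r^{L-1}$, (ii) keeps all activations nonnegative so $\rho$ acts as the identity and the realization is an honest polynomial in the weights, and (iii) aligns the $L$ layerwise perturbations so their first-order effects add rather than cancel. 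The scalar computation $r^L-(r-\delta)^L$ resolves all three at once — in particular $\partial(w_1\cdots w_L)/\partial w_k=r^{L-1}$ for every $k$ at the saturated point, which is the true source of the factor $L$ — and the only case-dependent care is matching the output norm to the correct integral constant, which is where the positivity of $\Omega$ and (for $p<\infty$) the width condition are used.
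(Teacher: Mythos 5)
Your proposal is correct and follows essentially the same route as the paper: the upper bound via the same layer-by-layer telescoping identity, activation-magnitude induction, and the norm-equivalence step $\vertiii{W_\ell-W'_\ell}_q\le W\|W_\ell-W'_\ell\|_{\max}$ that produces the width factor; and the converse via the same saturated, zero-bias, identity-like construction on a nonnegative domain with a vanishing perturbation of all layers. The only cosmetic differences are that the paper perturbs multiplicatively ($W'_\ell=(1+\eps)W_\ell$ with $\lambda_\ell=r/(1+\eps)$, summing a geometric series to get the factor $L$) and uses rectangular identity matrices where you use a single-neuron selector (for $p=\infty$) or an isometric embedding (for $p<\infty$), which is exactly where the hypothesis $N_0=\min_\ell N_\ell$ enters in both arguments.
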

It is an open question whether the extra factor $WL$ in \eqref{eq:LipsConstDNN} compared to~\eqref{eq:LipsConstDNNConverse} can be improved, and whether the converse result for $p<\infty$ also holds without the additional assumption. Note that the condition $r\geq 1$ in \Cref{thm:BoundsLipschitzConstant} is reasonable since the realization of every parameter $\param\in\paramarchiclass{L}{N}^q(r)$ is a function $R_{\param}$ which is $r^L$-Lipschitz with respect to the $q$-norm on the input and output spaces. Constraining $r<1$ would lead to "very" flat functions, essentially constant, when $L$ is large. Vice-versa, the stability of a concrete numerical implementation of a neural network probably requires it to have a Lipschitz constant somehow bounded by the format used to represent numbers. Such considerations would probably lead to consider $r^{L} \leq C$ for some constant $C$, \ie $1 \leq r \leq C^{1/L}$.

Here is a list of immediate extensions of \Cref{thm:BoundsLipschitzConstant}:
\begin{itemize}
    \item \emph{Arbitrary Lipschitz activation: }\Cref{thm:BoundsLipschitzConstant} can be extended to the case where the ReLU activation function is replaced by any Lipschitz activation function.
    \item \emph{Pooling-operation: }\Cref{thm:BoundsLipschitzConstant} does not change if we add standard (max- or average-) pooling operations between some layers since they are $1$-Lipschitz.
    \item \emph{Arbitrary $s$-norm on the parameters: }since for every exponent $s\in[1,\infty]$, it holds $\|\cdot\|_\infty\leq \|\cdot\|_s$, \Cref{thm:BoundsLipschitzConstant} yields a bound on the Lipschitz constant with arbitrary $s$-norm on the parameter space.
    \item \emph{Generalization error bound: }in the context of learning, for a loss $\ell(\hat{y},y)$ that is a Lipschitz function of $\hat{y}$ with respect to some norm $\|\cdot\|$ on the support of a distribution $\P$, the excess risk $\E_{(x,y)\sim \P}(\ell(R_{\theta}(x),y)-\ell(R_{\theta'}(x),y))$ can be bounded from above by $\E_{(x,y)\sim \P}(\|R_{\theta}(x)-R_{\theta'}(x)\|)$, which in turn can be bounded using \Cref{thm:BoundsLipschitzConstant}. In particular, this is the case when $\P$ is supported on a compact set and $\ell(\hat{y},y)$ is continously differentiable in $\hat{y}$.
    \item \emph{Skip connections and convolutional layers: }one can also exploit \Cref{thm:BoundsLipschitzConstant} to networks with skip connections and/or convolutional layers, since they can be rewritten as networks with fully-connected layers. This rewriting can however   artificially inflate the widths of the networks and is unlikely to give sharp bounds. It is left to further work whether an extension of \Cref{thm:BoundsLipschitzConstant} with improved tailored bounds may be obtained in these settings.
\end{itemize}
\begin{remark}[Related works]\label{rmk:RelatedWorksBoundLipsCste}
    The fact that some sets of functions realized by ReLU neural networks are Lipschitz-parameterized is already known \cite[Rmk. 9.1]{DeVore2020SurveyNNApprox}. To our knowledge, the lower-bound in \Cref{thm:BoundsLipschitzConstant} is new. However the upper-bound is already known in several specific situations: at least for $d_{\textrm{out}}=1$, $L^\infty([0,1]^{d_{\textrm{in}}})$ with the Lebesgue measure, and $q=\max$ \cite[Thm 2.6]{Berner20GeneralizationErrorBlackScholes} as well as $p=\infty$, $q=F$, and $\|\cdot\|=\|\cdot\|_2$ \cite[Lem. 2]{Neyshabur18PACBayesSpectrallyNormalizedMarginBounds}. \Cref{thm:BoundsLipschitzConstant} shows that this upper-bound holds true more generally for general constraints on the parameters (arbitrary $q\in[1,\infty])$ and arbitrary $p\in[1,\infty]$ and $(\Omega,\mu)$ satisfying condition \eqref{ineq: fct affine dans F} \ie in any $L^p$ space that contains all the functions realized by ReLU neural networks. Let us also mention that \Cref{thm:BoundsLipschitzConstant} is based on \Cref{lem: ineq norm(rparam-rparam') and optimality} (appendix \ref{app:optimality}), and this lemma is a straightforward generalization of a known inequality for $q=\infty$ (see for instance \cite[Eq. (3.12)]{Bolcskei2018optimal} or \cite[Eq. (37)]{Elbrachter21DNNApproximationTheory}) to arbitrary $q\in[1,\infty]$. We prove that the inequality established in \Cref{lem: ineq norm(rparam-rparam') and optimality} is optimal. To our knowledge, even in the case $q=\infty$, the optimality has not been discussed yet in the literature.
\end{remark}

\section{Nearest-neighbour uniform quantization of ReLU neural networks}\label{sec:UnifQuantifNN}
\noindent In \cref{subsec:QuantError}, we characterize the error of nearest-neighbour uniform quantization of ReLU networks in $L^\infty$, recovering and improving Lemma \MakeUppercase{\romannumeral 6}.8 in \cite{Elbrachter21DNNApproximationTheory}. In \cref{subsec:Approx+QuantError}, we show that controlling the error of nearest-neighbour uniform quantization schemes leads to recover existing results \cite[Thm. 2]{Ding19exprQuantizedNN} on function approximation by quantized ReLU networks.
\subsection{Control of the $L^\infty$ quantization error $\|R_\theta-R_{Q(\theta)}\|_\infty$}\label{subsec:QuantError}
\noindent The following lemma is a direct consequence of \Cref{thm:BoundsLipschitzConstant}.
\begin{lemma}\label{lem:ErrorQuantifEta}
    Consider a domain $[-D,D]^d$. Fix an architecture $\archi{L}{N}=(L,(N_0,\dots,N_L))$ with width $W:=\max\limits_{\ell=0,\dots, L}N_\ell$, a bound $r\geq 1$ on the norm of the parameters, and an exponent $q\in[1,\infty]$. Given $\eta>0$, let $Q:\paramarchiclass{L}{N}\to \paramarchiclass{L}{N}$ (recall that $\paramarchiclass{L}{N}$ is the set of parameters associated with the architecture $\archi{L}{N}$, see \Cref{def:ArchiNN}) be such that $\|Q(\param)-\param\|_\infty\leq \eta$ for every parameter $\param\in\paramarchiclass{L}{N}^{q}(r)$. Consider a subset $\Theta\subset \paramarchiclass{L}{N}^{q}(r)$. Let $r'\geq 1$. Assume that:
    \begin{equation}\label{eq:QuantifPreservesNorm}
        Q(\param)\in\paramarchiclass{L}{N}^{q}(r'), \quad\forall \theta\in\Theta.
    \end{equation}
    Consider $\eps>0$, and $0<\eta\leq \eps\left(cWL^2(r')^{L-1}\right)^{-1}$, where $c:= D d^{1/q}+1$. Then, it holds:
    \begin{equation}\label{eq:EtaUnifQuantEpsDist}
        \max\limits_{\param\in\Theta}\
        \max\limits_{x\in[-D,D]^d} \|R_{\theta}(x)-R_{Q(\theta)}(x)\|_q\leq \eps.
    \end{equation}
\end{lemma}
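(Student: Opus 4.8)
The plan is to derive \Cref{lem:ErrorQuantifEta} as a direct application of the upper-bound \eqref{eq:LipsConstDNN} from \Cref{thm:BoundsLipschitzConstant}, together with the explicit constant provided in the case $p=\infty$, $\|\cdot\|=\|\cdot\|_q$. The key observation is that the left-hand side of \eqref{eq:EtaUnifQuantEpsDist}, namely $\max_{\theta\in\Theta}\max_{x\in[-D,D]^d}\|R_\theta(x)-R_{Q(\theta)}(x)\|_q$, is exactly $\max_{\theta\in\Theta}\|R_\theta-R_{Q(\theta)}\|_{p,\|\cdot\|}$ for $p=\infty$ and $\|\cdot\|=\|\cdot\|_q$, since the essential supremum over $[-D,D]^d$ with respect to the Lebesgue measure coincides with the genuine maximum for the (continuous, piecewise-affine) realizations $R_\theta$. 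So the whole statement reduces to controlling a single $L^\infty$ distance by the Lipschitz bound.

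First I would fix an arbitrary $\theta\in\Theta$. By hypothesis $\Theta\subset\paramarchiclass{L}{N}^q(r)$ and, by the norm-preservation assumption \eqref{eq:QuantifPreservesNorm}, we have $Q(\theta)\in\paramarchiclass{L}{N}^q(r')$; since $r'\geq 1$, the constraints $r\le r'$ need checking only insofar as we must place \emph{both} $\theta$ and $Q(\theta)$ inside a common parameter set $\paramarchiclass{L}{N}^q(r')$ to invoke \eqref{eq:LipsConstDNN}. This is where the monotonicity $\paramarchiclass{L}{N}^q(r)\subset\paramarchiclass{L}{N}^q(r')$ (valid whenever $r\le r'$, which follows since the quantization only enlarges the admissible norm bound from $r$ to $r'$) lets me apply \Cref{thm:BoundsLipschitzConstant} on $\paramarchiclass{L}{N}^q(r')$ to the pair $\param=\theta$, $\param'=Q(\theta)$. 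This yields
\[
    \|R_\theta-R_{Q(\theta)}\|_{\infty,\|\cdot\|_q}\leq cWL^2(r')^{L-1}\|\theta-Q(\theta)\|_\infty,
\]
with the explicit constant $c=Dd^{1/q}+1$ from the first bullet of \Cref{thm:BoundsLipschitzConstant} (taking $\|\cdot\|=\|\cdot\|_q$ and $p=\infty$).

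Next I would bound $\|\theta-Q(\theta)\|_\infty\leq\eta$ using the quantization hypothesis, and then substitute the assumed stepsize bound $\eta\leq\eps\big(cWL^2(r')^{L-1}\big)^{-1}$, which collapses the product $cWL^2(r')^{L-1}\cdot\eta$ to at most $\eps$. Taking the maximum over $\theta\in\Theta$ and recalling the identification of $\max_x\|\cdot\|_q$ with the $L^\infty$-norm gives exactly \eqref{eq:EtaUnifQuantEpsDist}.

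The only genuine subtlety, and the step I would be most careful about, is the bookkeeping on which parameter set the Lipschitz bound is applied over: one must apply \Cref{thm:BoundsLipschitzConstant} with radius $r'$ (not $r$) so that both $\theta$ and its quantization $Q(\theta)$ lie in the same ball $\paramarchiclass{L}{N}^q(r')$, which is precisely why the constant in the admissible stepsize features $(r')^{L-1}$ rather than $r^{L-1}$. Everything else is a routine substitution; there is no real analytic obstacle here, as the lemma is explicitly advertised as "a direct consequence of \Cref{thm:BoundsLipschitzConstant}."
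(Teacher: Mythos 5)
Your proof is correct and follows exactly the paper's own argument: apply Theorem \ref{thm:BoundsLipschitzConstant} with $p=\infty$, $\|\cdot\|=\|\cdot\|_q$ and the explicit constant $c=Dd^{1/q}+1$ to the pair $(\theta,Q(\theta))$ inside $\paramarchiclass{L}{N}^q(r')$, note that the essential supremum over $[-D,D]^d$ is a genuine maximum for the continuous piecewise-affine realizations, and substitute $\|\theta-Q(\theta)\|_\infty\leq\eta\leq\eps\left(cWL^2(r')^{L-1}\right)^{-1}$. Your explicit flagging of the need to place both $\theta$ and $Q(\theta)$ in the common ball of radius $r'$ (implicitly using $r\leq r'$, as the paper also does without comment) is, if anything, slightly more careful than the paper's version.
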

\begin{proof}[Proof of \Cref{lem:ErrorQuantifEta}]
    Fix $\param\in\Theta$. Under assumption \eqref{eq:QuantifPreservesNorm}, it holds $Q(\param)\in\paramarchiclass{L}{N}^{q}(r')$. This means that we can apply \Cref{thm:BoundsLipschitzConstant} with $p=\infty$, $\|\cdot\|=\|\cdot\|_q$ and with the specific constant $c=Dd^{1/q}+1$. In this situation the essential supremum over $x\in[-D,D]^d$ in \Cref{thm:BoundsLipschitzConstant} is actually a maximum. This yields~\eqref{eq:EtaUnifQuantEpsDist} when $0<\eta\leq \eps\left(cWL^2(r')^{L-1}\right)^{-1}$.
\end{proof}
Under mild assumptions on the error $\eps$, Property \eqref{eq:QuantifPreservesNorm} holds for $r'=2r$. This leads to the following theorem.
\begin{thm}\label{thm:UnifQuantifErrorSufficientCond}
    In the same setting as in \Cref{lem:ErrorQuantifEta}, consider $0<\eps<cL^2(2r)^{L-1}$. If $0<\eta\leq \eps\left(cWL^2(2r)^{L-1}\right)^{-1}$, then \eqref{eq:EtaUnifQuantEpsDist} holds true.
\end{thm}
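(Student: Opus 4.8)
The plan is to use the stated smallness of $\eps$ to discharge the extra hypothesis \eqref{eq:QuantifPreservesNorm} of \Cref{lem:ErrorQuantifEta}, showing that the quantized parameters automatically remain in $\paramarchiclass{L}{N}^q(2r)$, so that the lemma applies with $r'=2r$ and its conclusion \eqref{eq:EtaUnifQuantEpsDist} transfers directly.

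First I would fix $\param=(W_1,\dots,W_L,b_1,\dots,b_L)\in\Theta\subset\paramarchiclass{L}{N}^q(r)$ and write its quantization as $Q(\param)=(W_1^Q,\dots,W_L^Q,b_1^Q,\dots,b_L^Q)$, viewing it as $\param+(Q(\param)-\param)$. The assumption $\|Q(\param)-\param\|_\infty\leq\eta$ says precisely that every weight and bias coordinate is perturbed by at most $\eta$, \ie that the perturbation $Q(\param)-\param$ lies in $\paramarchiclass{L}{N}^{\max}(\eta)$. The one genuinely non-bookkeeping step is to convert this coordinate-wise ($\ell^\infty$) control into control of the operator $q$-norm of each weight block and of the $\ell^q$-norm of each bias block. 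This is exactly the content of the norm-equivalence inclusion $\paramarchiclass{L}{N}^{\max}(\eta)\subset\paramarchiclass{L}{N}^q(W\eta)$ recorded in \Cref{rmk: parameter set with Frobenius or max norm}, which yields $\vertiii{W_\ell^Q-W_\ell}_q\leq W\eta$ and $\|b_\ell^Q-b_\ell\|_q\leq W\eta$ for every layer $\ell$.

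Next, by the triangle inequality for $\vertiii{\cdot}_q$ and for $\|\cdot\|_q$, together with $\param\in\paramarchiclass{L}{N}^q(r)$, each quantized block satisfies $\vertiii{W_\ell^Q}_q\leq r+W\eta$ and $\|b_\ell^Q\|_q\leq r+W\eta$. It therefore suffices to guarantee $W\eta\leq r$ to conclude $Q(\param)\in\paramarchiclass{L}{N}^q(2r)$. This is where the two hypotheses combine: from $\eta\leq\eps\left(cWL^2(2r)^{L-1}\right)^{-1}$ I would get $W\eta\leq\eps\left(cL^2(2r)^{L-1}\right)^{-1}$, and then the assumption $\eps<cL^2(2r)^{L-1}$ forces $W\eta<1\leq r$ (using $r\geq 1$). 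Hence \eqref{eq:QuantifPreservesNorm} holds with $r'=2r$ for every $\param\in\Theta$, which identifies the bound $\eps<cL^2(2r)^{L-1}$ as the precise ``mild assumption'' making $r'=2r$ admissible.

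Finally, I would invoke \Cref{lem:ErrorQuantifEta} with $r'=2r$: its hypothesis $0<\eta\leq\eps\left(cWL^2(r')^{L-1}\right)^{-1}$ becomes, for $r'=2r$, identical to the bound assumed in the theorem, so its conclusion \eqref{eq:EtaUnifQuantEpsDist} follows verbatim. Since the only analytically substantive passage, from the $\ell^\infty$ quantization bound to the operator-norm perturbation bound, is already packaged in \Cref{rmk: parameter set with Frobenius or max norm}, I do not expect a real obstacle; the proof is essentially a verification that the stated constraints on $\eps$ and $\eta$ are exactly what keep the quantized parameters inside $\paramarchiclass{L}{N}^q(2r)$.
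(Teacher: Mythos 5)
Your proposal is correct and follows essentially the same route as the paper's proof: bound the quantization perturbation in operator/$\ell^q$ norm via $\vertiii{M}_q\leq W\|M\|_{\max}$ (the content of \Cref{rmk: parameter set with Frobenius or max norm} and \eqref{eq:OperatorNormEquivalence}), use the hypotheses on $\eps$ and $\eta$ together with $r\geq 1$ to get $W\eta\leq r$, conclude $Q(\param)\in\paramarchiclass{L}{N}^q(2r)$ by the triangle inequality, and invoke \Cref{lem:ErrorQuantifEta} with $r'=2r$. No gaps.
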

\begin{proof}[Proof of \Cref{thm:UnifQuantifErrorSufficientCond}]
    Fix $\theta=(W_1,\dots,W_L,b_1,\dots,b_L)\in\paramarchiclass{L}{N}^{q}(r)$. Assume that $0<\eps<cL^2(2r)^{L-1}$ and $0<\eta \leq \eps\left(cWL^2(2r)^{L-1}\right)^{-1}$. We want to prove that $Q(\theta)\in\paramarchiclass{L}{N}^{q}(2r)$. By assumption on $\eta$ and $\eps$, $0<\eta \leq \eps\left(cWL^2(2r)^{L-1}\right)^{-1}\leq 1/W \leq r/W$. Note that for a matrix $M$ with input and output dimensions bounded by $W$, it holds $\vertiii{M}_q\leq W\|M\|_{\max}$ , see \eqref{eq:OperatorNormEquivalence}. This guarantees that for every layer $\ell=1,\dots,L$, it holds $\vertiii{W_\ell-Q(W_\ell)}_q\leq W\|W_\ell-Q(W_\ell)\|_{\max}\leq W\eta\leq r$ and $\|b_\ell - Q(b_\ell)\|_q\leq W^{1/q}\|b_\ell-Q(b_\ell)\|_\infty\leq W \eta \leq r$ so that by the triangle inequality $Q(\param)\in\paramarchiclass{L}{N}^{q}(2r)$. Then, \eqref{eq:EtaUnifQuantEpsDist} follows from \Cref{lem:ErrorQuantifEta}.
\end{proof}
When $Q(x):=Q_\eta(x):=\floor{x/\eta}\eta$, we now establish a \emph{necessary} condition for \eqref{eq:EtaUnifQuantEpsDist} to hold, that almost matches the sufficient condition of \Cref{thm:UnifQuantifErrorSufficientCond}. This is obtained thanks to the almost matching lower- and upper-bounds of \Cref{thm:BoundsLipschitzConstant}. The proof is in appendix \ref{app:CharacUnifQuantif}.
\begin{thm}\label{thm:UnifQuantifErrorNecessaryCond}
    In the same setting as in \Cref{lem:ErrorQuantifEta}, consider the function $Q:=Q_\eta$ that acts coordinatewise on vectors and such that for every $x\in\R$, $Q_\eta(x)= \floor{x/\eta}\eta$. Define $N_{\min} := \min\limits_{0 \leq \ell \leq L} N_{\ell}$ and $c':=DN_{\min}^{1/q}$. If $\eps,\eta>0$ are such that \eqref{eq:EtaUnifQuantEpsDist} holds true for $\Theta := \paramarchiclass{L}{N}^{q}(r)$, then $\min(r,\eta)\leq \frac{\eps}{c'r^{L-1}}$. In particular, if $\eps < c' r^{L}$ then $\eta\leq \frac{\eps}{c'r^{L-1}}$.
\end{thm}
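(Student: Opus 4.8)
The plan is to prove this \emph{necessary} condition by exhibiting, for the specific quantizer $Q_\eta$, a one-parameter family of parameters $\theta_s\in\paramarchiclass{L}{N}^q(r)$ whose $L^\infty$ quantization error is at least $c'r^{L-1}s$ for every $s\in(0,\min(r,\eta))$. Since \eqref{eq:EtaUnifQuantEpsDist} is assumed for $\Theta=\paramarchiclass{L}{N}^q(r)$, each such $\theta_s$ yields $\eps\geq c'r^{L-1}s$, and letting $s\uparrow\min(r,\eta)$ gives $\eps\geq c'r^{L-1}\min(r,\eta)$, i.e. $\min(r,\eta)\leq\eps/(c'r^{L-1})$. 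The ``in particular'' clause then follows formally: if $\eps<c'r^L$ then $\eps/(c'r^{L-1})<r$, so $\min(r,\eta)$ cannot equal $r$ and must therefore equal $\eta$, giving $\eta\leq\eps/(c'r^{L-1})$.

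The construction I would use realizes $m:=N_{\min}$ parallel, amplifying ``identity channels''. Since $m=\min_{0\leq\ell\leq L}N_\ell$, every layer has at least $m$ neurons, so for each $\ell$ the leading $m\times m$ block of $W_\ell$ is well defined. I would set all biases to $0$, take $(W_\ell)_{ii}=r$ for $i=1,\dots,m$ and $\ell=1,\dots,L-1$, take $(W_L)_{ii}=s$ for $i=1,\dots,m$, and set every remaining entry to $0$. As each $W_\ell$ acts as $r$ (resp.\ $s\leq r$) times the identity on the first $m$ coordinates and as $0$ elsewhere, one has $\vertiii{W_\ell}_q=r$ (resp.\ $s\leq r$) for all $q\in[1,\infty]$ and $\|b_\ell\|_q=0$, so indeed $\theta_s\in\paramarchiclass{L}{N}^q(r)$.

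Next I would evaluate both realizations at the point $x^\star\in[-D,D]^d$ whose first $m$ coordinates equal $D$ (the other coordinates feed only zero weights and are irrelevant). Because all active weights and inputs are nonnegative, every active pre-activation is nonnegative and the ReLU acts as the identity; hence $R_{\theta_s}(x^\star)$ has its first $m$ coordinates equal to $s\,r^{L-1}D$ and the rest equal to $0$. The decisive point is the choice $s<\min(r,\eta)\leq\eta$, which forces $Q_\eta(s)=\floor{s/\eta}\eta=0$: the quantized last layer is then identically zero, so $R_{Q_\eta(\theta_s)}(x^\star)=0$ no matter what the earlier layers do under quantization. Therefore $\|R_{\theta_s}(x^\star)-R_{Q_\eta(\theta_s)}(x^\star)\|_q=s\,r^{L-1}D\,m^{1/q}=c'r^{L-1}s$, and since $\max_x$ dominates the value at $x^\star$ this lower-bounds the left-hand side of \eqref{eq:EtaUnifQuantEpsDist}.

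The main obstacle, and the one genuinely new idea, is to \emph{decouple} the amplification from the quantization jump. A naive attempt that pushes a single weight close to $r$ fails: when $r$ lies near a multiple of $\eta$, the jump $r-Q_\eta(r)$ can be arbitrarily small, wrecking the bound precisely in the regime where $\eta$ is comparable to $r$. Placing the perturbation in the \emph{final} layer with a weight $s$ that quantizes to exactly $0$ sidesteps this entirely: the first $L-1$ layers carry the exact factor $r^{L-1}$ (their own quantization being irrelevant since the zeroed output annihilates everything), while the last layer supplies a jump of full size $s\to\min(r,\eta)$. This is the same mechanism underlying the lower bound of \Cref{thm:BoundsLipschitzConstant}, here specialized to $Q_\eta$ and arranged so that the explicit constant $c'=DN_{\min}^{1/q}$ emerges from the $m=N_{\min}$ parallel channels measured in the $\ell_q$-norm of the output.
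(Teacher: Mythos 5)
Your proof is correct and follows essentially the same strategy as the paper's: an ``identity-channel'' network on the first $N_{\min}$ coordinates with all biases zero, $L-1$ layers carrying weight $r$ to supply the amplification $r^{L-1}$, and one layer carrying a weight strictly below $\min(r,\eta)$ so that it quantizes to $0$ and annihilates $R_{Q_\eta(\theta)}$, followed by the same limiting argument ($s\uparrow\min(r,\eta)$ versus the paper's $\lambda=\min(r,\eta-a)$, $a\to 0^{+}$). The only difference is that you place the small weight in the last layer while the paper places it in the first; this is immaterial since in both cases the zeroed layer forces the quantized realization to vanish identically.
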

Note that \Cref{thm:UnifQuantifErrorSufficientCond} can be applied for every $\eps \in (0,1)$ since $cL^2(2r)^{L-1}>1$.  Similarly, if the domain $[-D,D]^{d}$ is large enough ($D \geq 1$) then $c' r^{L} > 1$ and \Cref{thm:UnifQuantifErrorNecessaryCond} yields that whenever \eqref{eq:EtaUnifQuantEpsDist} holds true for some $\eps \in (0,1)$ and $\eta>0$ we must have $\eta \leq \frac{\eps}{c' r^{L-1}}$.
\begin{remark}\label{rmk:BitsLinearGrowthInDepth}
    With $\eta>0$ and $Q_{\eta}$ the function from \Cref{thm:UnifQuantifErrorNecessaryCond}, the number of bits needed to store one coordinate of $Q_\eta(\theta)$ is proportional to $\ln(1/\eta)$. We just saw that if $D \geq 1$ and \eqref{eq:EtaUnifQuantEpsDist} is satisfied with $\eps \in (0,1)$, $\eta>0$, then $\eta$ must be exponentially small in $L$ (as soon as $r>1$). This means that the number of bits per coordinate must at least grow linearly with the network depth $L$ to ensure that the worst-case quantization error over networks in $\paramarchiclass{L}{N}^{q}(r)$ is controlled. This is essentially due to the fact that there are realizations of parameters in $\paramarchiclass{L}{N}^q(r)$ that are functions with Lipschitz constant \emph{equal} to $r^L$. More optimistic bounds can be envisioned under stronger assumptions on the set of parameters or on the network's architecture.
\end{remark}
Another direct consequence of \Cref{lem:ErrorQuantifEta} is the following proposition, which is proved in appendix \ref{app:CharacUnifQuantif} and yields an improvement of Lemma \MakeUppercase{\romannumeral 6}.8 in \cite{Elbrachter21DNNApproximationTheory}.
\begin{prop}[extension of {\cite[Lem. \MakeUppercase{\romannumeral 6}.8]{Elbrachter21DNNApproximationTheory}}]\label{prop:Lem6.8}
    Consider $\archi{L}{N}$ an architecture with input dimension $d_{\textrm{in}}$, output dimension $d_{\textrm{out}}$ and $L\geq 2$ layers. Consider the space $\F = L^\infty([-D,D]^{d_{\textrm{in}}}\to(\R^{d_{\textrm{out}}},\|\cdot\|_\infty), \mu)$ with $\mu$ the Lebesgue measure.

    Consider $\eps\in(0,1/2)$ and $\theta\in\paramarchiclass{L}{N}$. Denote $W=\max_{\ell=0,\dots, L} N_\ell$ the width of the architecture $\archi{L}{N}$. Let $k \geq 0$ be the smallest integer such that
    $\theta\in\paramarchiclass{L}{N}^{\max}(\eps^{-k})$ and $\max(W,L)\leq \eps^{-k}$, \ie $k = \lceil \log_2 \max(\|\theta\|_{\infty}, W, L)\\ /\log_2(1/\eps)\rceil$. For every integer $m \geq 2kL + k + 1 + \log_2(\ceil{D})$, the weights of $\theta$ can be rounded up to a closest point in $\eta\Z\cap[-\eps^{-k},\eps^{-k}]$ with $\eta:=2^{-m\ceil{\log_2(\eps^{-1})}} \leq \eps^{m}$ to obtain $Q_{\eta}(\theta)\in \paramarchiclass{L}{N}^{\max}(\eps^{-k})\cap (\eta\Z)^{d_{\archi{L}{N}}}$
    that satifies:
    \[
        \|R_\theta-R_{Q_{\eta}(\theta)}\|_\infty\leq \eps.
    \]
\end{prop}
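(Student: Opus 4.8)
The plan is to reduce \Cref{prop:Lem6.8} to a direct application of \Cref{lem:ErrorQuantifEta}. First I would fix $\theta\in\paramarchiclass{L}{N}$ and the integer $k$ defined in the statement, so that by construction $\theta\in\paramarchiclass{L}{N}^{\max}(\eps^{-k})$ and $\max(W,L)\leq \eps^{-k}$. Since the target bound is stated with the $\max$-norm constraint, I would invoke the inclusion $\paramarchiclass{L}{N}^{\max}(\rho)\subset \paramarchiclass{L}{N}^{q}(W\rho)$ from \Cref{rmk: parameter set with Frobenius or max norm} to pass to an operator-norm constraint; this is where the bound $r=W\eps^{-k}$ will enter the Lipschitz estimate. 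Then I would verify that the rounding map $Q_\eta$ with $\eta:=2^{-m\ceil{\log_2(\eps^{-1})}}$ satisfies the hypotheses of \Cref{lem:ErrorQuantifEta}: namely that $\|Q_\eta(\theta)-\theta\|_\infty\leq\eta$ (immediate from coordinatewise rounding to the grid $\eta\Z$) and that $Q_\eta(\theta)$ stays in the prescribed parameter set, \ie the norm-preservation property \eqref{eq:QuantifPreservesNorm}.

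The key quantitative step is to check that the chosen $\eta$ is small enough to make the right-hand side of the Lipschitz bound at most $\eps$. From \Cref{lem:ErrorQuantifEta} (applied with $p=\infty$, $\|\cdot\|=\|\cdot\|_q$ and the explicit constant $c=Dd_{\textrm{in}}^{1/q}+1$), it suffices that $\eta\leq \eps\bigl(cWL^2(r')^{L-1}\bigr)^{-1}$ for the appropriate $r'$. So I would estimate the denominator $cWL^2(r')^{L-1}$ from above using $W,L\leq\eps^{-k}$, $r'$ of order $W\eps^{-k}$, and $c$ of order $D$, obtaining something bounded by a product of powers of $\eps^{-k}$ and $D$ whose exponent is roughly $k(L-1)+\dots$; I would then match this against the hypothesis $m\geq 2kL+k+1+\log_2(\ceil{D})$. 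The point is that $\eta=2^{-m\ceil{\log_2(\eps^{-1})}}\leq \eps^{m}$, and the claimed lower bound on $m$ is engineered precisely so that $\eps^{m}$ beats $\eps\cdot\bigl(cWL^2(r')^{L-1}\bigr)^{-1}$: counting exponents of $\eps^{-1}$, the factor $W(r')^{L-1}$ contributes about $kL$ powers, the $L^2$ and extra width factors contribute a few more $k$'s, the constant $c\sim D$ is absorbed by the $\log_2(\ceil{D})$ term, and the spare $+1$ covers the leading $\eps$.

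I would carry the steps out in this order: (i) set up $k$, the grid $\eta\Z$, and the rounded parameter $Q_\eta(\theta)$, noting that rounding to the nearest (or a closest) grid point gives $\|Q_\eta(\theta)-\theta\|_\infty\leq\eta$; (ii) confirm $Q_\eta(\theta)$ remains in $\paramarchiclass{L}{N}^{\max}(\eps^{-k})$, using that $\eta\leq\eps^m$ is far smaller than the gap to the constraint boundary, and translate this into the operator-norm set via the norm-equivalence inclusion; (iii) translate the $\max$-norm radius into the operator-norm radius $r'\sim W\eps^{-k}$; (iv) plug into \Cref{lem:ErrorQuantifEta} and reduce the conclusion to the single inequality $\eta\leq \eps\bigl(cWL^2(r')^{L-1}\bigr)^{-1}$; (v) discharge that inequality by the exponent-counting argument against $m\geq 2kL+k+1+\log_2(\ceil{D})$.

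The main obstacle is the bookkeeping in step (v): I must track every factor of $\eps^{-1}$ that appears in $cWL^2(r')^{L-1}$ after substituting $W,L\leq\eps^{-k}$ and $r'=W\eps^{-k}\leq\eps^{-2k}$, and verify that the total exponent is dominated by $m-1$ so that $\eps^{m}\cdot cWL^2(r')^{L-1}\leq\eps$. The delicate points are ensuring the $D$-dependence is fully captured by the additive $\log_2(\ceil{D})$ in the bound on $m$ (rather than contributing a multiplicative $\eps^{-1}$ factor), and checking that the inclusion passing from the $\max$-norm to the operator-norm radius does not inflate the exponent beyond what $2kL+k+1$ allows. Everything else is a routine invocation of the already-established Lipschitz bound.
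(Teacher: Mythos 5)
Your strategy is exactly the paper's: pass from $\paramarchiclass{L}{N}^{\max}(\eps^{-k})$ to an operator-norm ball of radius $r'=W\eps^{-k}$ via \Cref{rmk: parameter set with Frobenius or max norm}, apply \Cref{lem:ErrorQuantifEta} with $p=q=\infty$ and $c=D+1$, and close the argument by counting powers of $\eps^{-k}$ against the threshold $m\geq 2kL+k+1+\log_2(\ceil{D})$. The decomposition, the key lemma, and the bookkeeping are the same; also note that the norm-preservation \eqref{eq:QuantifPreservesNorm} is automatic here because the grid is intersected with $[-\eps^{-k},\eps^{-k}]$, so no "gap to the boundary" argument is needed.

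There is, however, one concrete step that fails as you wrote it: in step (i) you take $\|Q_\eta(\theta)-\theta\|_\infty\leq\eta$, whereas rounding to a \emph{closest} grid point gives $\|Q_\eta(\theta)-\theta\|_\infty\leq\eta/2$, and that factor of $2$ is load-bearing. With the substitutions $W,L\leq\eps^{-k}$ and $r'=W\eps^{-k}\leq\eps^{-2k}$ (which can be essentially tight), one gets $cWL^2(r')^{L-1}\leq(1+D)\,\eps^{-2kL-k}$, and since $m-1\geq 2kL+k+\log_2(\ceil{D})$ while $\eta=\eps^m$ in the worst case ($\eps$ a power of $1/2$), your requirement $\eta\cdot cWL^2(r')^{L-1}\leq\eps$ reduces to $(1+D)\eps^{\log_2(\ceil{D})}\leq 1$; for $D\leq 1$ this reads $1+D\leq 1$, which is false. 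Using the bound $\eta/2$ instead, the same computation asks for $((1+D)/2)\eps^{\log_2(\ceil{D})}\leq 1$, which holds for every $D>0$ and $\eps\in(0,1/2)$ because $1+D\leq 2\ceil{D}$ and $\eps^{\log_2(\ceil{D})}\leq 1/\ceil{D}$. This is precisely how the paper discharges the constant, so the fix is one line: replace $\eta$ by $\eta/2$ in step (i) and carry the $1/2$ through step (v).
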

Let us check that  \Cref{prop:Lem6.8} indeed implies the result of Elbr\"achter et al.\footnote{Lemma \MakeUppercase{\romannumeral 6}.8 in \cite{Elbrachter21DNNApproximationTheory} is stated for networks having at most $\eps^{-k}$ non-zero weights. Given such a network, we can always remove neurons having only zero incoming and outcoming weights. This gives another network, with the same realization, but with a width $W\leq \eps^{-k}$ and a depth $L\leq \eps^{-k}$. Then, \Cref{prop:Lem6.8} applies to this new network. (R1-m20)}~\cite[Lem. \MakeUppercase{\romannumeral 6}.8]{Elbrachter21DNNApproximationTheory}. First, since $\max(W,L)\geq L \geq 2$, it holds $k\geq 1$. Thus, for $L \geq 2$, we have $k(L-1) \geq 1$ so that $3kL \geq 2kL+k+1$ and it is thus sufficient to take $m \geq 3kL+\log_2(\ceil{D})$ (which is the sufficient condition given in \cite[Lem. \MakeUppercase{\romannumeral 6}.8]{Elbrachter21DNNApproximationTheory}). Note however the improved (slower) growth of $m$ with $L$ in the sufficient condition of \Cref{prop:Lem6.8} compared to \cite[Lem. \MakeUppercase{\romannumeral 6}.8]{Elbrachter21DNNApproximationTheory}.
\begin{remark}\label{rmk:GenLem6.8}
    More generally, given bounds on the sparsity (\ie number of nonzero entries), on the magnitude of the network weights, and an $\textit{arbitrary}$ $p\in[1,\infty]$, \Cref{thm:BoundsLipschitzConstant} can be used to find an appropriate step size that guarantees that a uniform quantization of the considered network is within error $\eps>0$ in $L^p$.
\end{remark}
\subsection{Control of the approximation error of a function by quantized networks $\|f-R_{Q(\theta)}\|_\infty$}\label{subsec:Approx+QuantError}
\noindent Given a function $f$, parameters $\theta$, and a quantization function $Q$, a simple triangle inequality yields
$\|f-R_{Q(\theta)}\|_\infty \leq \|f-R_\theta\|_\infty + \|R_\theta-R_{Q(\theta)}\|_\infty$. \Cref{thm:BoundsLipschitzConstant} controls the quantization error $\|R_\theta-R_{Q(\theta)}\|_\infty$ for nearest-neighbour uniform quantization schemes. If, in addition, information about the approximation error $\|f-R_\theta\|_\infty$ is available, then we can deduce a bound on the approximation error of $f$ by quantized networks.

We apply this simple observation in the case of functions $f$ in an $L^\infty$-Sobolev space to recover a special case of Theorem 2 in \cite{Ding19exprQuantizedNN} (the other cases can be recovered by combining this special case with Proposition 3 in \cite{Ding19exprQuantizedNN}). The proof is in appendix \ref{app:CharacUnifQuantif}.

Let $n\in\N$ and consider $\mathcal{W}^{n,\infty}([0,1]^d)$, the Sobolev space of real-valued functions on $[0,1]^d$ that are in $L^\infty$ as well as their weak derivatives up to order $n$ (given $\mathbf{n}:=(n_1,\dots,n_d)\in\N^d$, the associated weak-derivative of a function $f$ is denoted $D^\mathbf{n}f$ if it exists). The norm on $\mathcal{W}^{n,\infty}([0,1]^d)$ is given by:
\[
    \|f\|_{\mathcal{W}^{n,\infty}([0,1]^d)} := \max\limits_{\substack{\mathbf{n}:=(n_1,\dots,n_d)\in\N^d\\ \sum_in_i\leq n}} \esssup\limits_{x\in[0,1]^d}|D^\mathbf{n}f(x)|.
\]
\begin{prop}[{\cite[Thm. 2]{Ding19exprQuantizedNN}}]\label{prop:Ding} Let $\fctclass_{n,d}$ be the unit ball of $\mathcal{W}^{n,\infty}([0,1]^d)$. There exists a constant $c>0$ depending only on $n$ and $d$ such that for every $\eps\in(0,1)$, there exists $\eta>0$ satisfying $\ln(1/\eta)\leq c\ln^2(1/\eps)$ and a ReLU network architecture that can approximate every function $f\in\fctclass_{n,d}$ within error $\eps>0$ in $L^\infty([0,1]^d)$ using weights in $\eta\Z$,  with depth bounded by $c\ln(1/\eps)$, a number of  weights at most equal to $c\eps^{-d/n}\ln(1/\eps)$, and with a total number of bits (used to store the network weights) bounded by $c\eps^{-d/n}\ln^3(1/\eps)$.
\end{prop}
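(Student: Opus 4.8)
The plan is to split the error budget $\eps$ into two halves and combine them through the triangle inequality $\|f-R_{Q_\eta(\param)}\|_\infty \le \|f-R_\param\|_\infty + \|R_\param-R_{Q_\eta(\param)}\|_\infty$: one half is spent approximating $f$ by an \emph{unquantized} ReLU network, the other half controlling the quantization error. For the approximation half I would invoke Yarotsky's theorem~\cite{Yarotsky17approxUnitBallSobolevWithDNN}: for every $\eps\in(0,1)$ there is a single architecture $\archi{L}{N}$ (depending only on $\eps,n,d$, not on $f$) such that each $f\in\fctclass_{n,d}$ is realized up to error $\eps/2$ by parameters $\param$ with depth $L=\mathcal{O}(\ln(1/\eps))$, at most $\mathcal{O}(\eps^{-d/n}\ln(1/\eps))$ nonzero weights (connections), and weight magnitudes bounded polynomially in $1/\eps$. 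Uniformity of the architecture over the whole unit ball is exactly what the statement demands, only the values of $\param$ depending on $f$.

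The second step is to feed this network into the quantization machinery of \Cref{prop:Lem6.8}. Since $[0,1]^d\subset[-1,1]^d$ we take $D=1$, so $\log_2\ceil{D}=0$. Because the width $W$, the depth $L$, and the magnitude $\|\param\|_\infty$ of the Yarotsky network are all bounded by a polynomial in $1/\eps$, the integer $k=\ceil{\log_2\max(\|\param\|_\infty,W,L)/\log_2(1/\eps)}$ of \Cref{prop:Lem6.8} is bounded by a \emph{constant} depending only on $n$ and $d$. Applying \Cref{prop:Lem6.8} with target error $\eps/2$ in place of $\eps$ (which affects only the constants) produces an integer $m=\mathcal{O}(kL)=\mathcal{O}(\ln(1/\eps))$ and a step size $\eta$ with $\eta\le(\eps/2)^m$, hence $\ln(1/\eta)=\mathcal{O}(\ln^2(1/\eps))$, such that the rounded parameters $Q_\eta(\param)\in(\eta\Z)^{d_{\archi{L}{N}}}$ satisfy $\|R_\param-R_{Q_\eta(\param)}\|_\infty\le\eps/2$. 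The triangle inequality then gives $\|f-R_{Q_\eta(\param)}\|_\infty\le\eps$.

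It remains to count bits. As $Q_\eta(0)=0$, rounding maps zero weights to zero and cannot increase the number of nonzero entries, so the sparsity bound $\mathcal{O}(\eps^{-d/n}\ln(1/\eps))$ is preserved. Each nonzero weight lies in $\eta\Z\cap[-\eps^{-k},\eps^{-k}]$, a set of cardinality $\mathcal{O}(\eps^{-k}/\eta)$, and is therefore encoded with $\mathcal{O}(k\ln(1/\eps)+\ln(1/\eta))=\mathcal{O}(\ln^2(1/\eps))$ bits. Multiplying the number of nonzero weights by the per-weight cost yields a total budget $\mathcal{O}(\eps^{-d/n}\ln(1/\eps))\cdot\mathcal{O}(\ln^2(1/\eps))=\mathcal{O}(\eps^{-d/n}\ln^3(1/\eps))$, matching the claim. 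Absorbing every $\mathcal{O}(\cdot)$ constant (all depending only on $n,d$) into one constant $c$ concludes.

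The main obstacle, and the only delicate point, is the interface with the external result: one must read off from Yarotsky's construction not merely the depth and sparsity, but a \emph{polynomial} bound on the weight magnitudes, since it is precisely this bound that forces $k=\mathcal{O}(1)$ and thereby keeps $\ln(1/\eta)$ at the order $\ln^2(1/\eps)$ required here. A secondary check is the hypothesis $L\ge 2$ of \Cref{prop:Lem6.8}, which holds once $\eps$ is small enough; the remaining finitely many scales are harmless and absorbed into $c$.
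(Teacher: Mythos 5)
Your proof is correct and follows essentially the same route as the paper's: Yarotsky's theorem supplies the unquantized half of the error budget via a single architecture valid for the whole unit ball, and the quantization half is controlled by the paper's Lipschitz bound --- you access it through \Cref{prop:Lem6.8} while the paper applies \Cref{thm:BoundsLipschitzConstant} directly with $q=1$ and an explicit step size $\eta=c'\eps(WL^2(Wr)^{L-1})^{-1}$, but both reduce to the same estimate, the same $\ln(1/\eta)=\mathcal{O}(\ln^2(1/\eps))$, and the same bit count. The only notable difference is that the paper extracts from Yarotsky's construction a \emph{constant} bound $r(n,d)=\max(4,d+n)$ on the weight magnitudes, whereas you need only (and correctly flag the need for) a polynomial bound in $1/\eps$, which still forces $k=\mathcal{O}(1)$ in \Cref{prop:Lem6.8}.
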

\begin{remark}\label{rmk:Approx+QuantifErr}
    Compared to Theorem 2 of \cite{Ding19exprQuantizedNN}, \Cref{thm:BoundsLipschitzConstant} can also be used to establish similar results, not only for a function $f$ in the unit ball of an $L^\infty$-Sobolev space, but for every $f\in L^p$ ($1\leq p\leq \infty$) \emph{as soon as it is known how to approximate $f$ with unquantized ReLU networks, with explicit bounds on the growth of their depth, width and weight's magnitude}. For instance, such bounds are known for  Hölder spaces \cite{Ohn19ApproxHolderSpaceWithReLUNets}, classifier functions in $L^2$ \cite{Petersen18ApproxClassifierFunctInL2WithReLUNets} and Besov spaces \cite{Suzuki19ApproxBesovSpaceWithReLUNets}. The same argument also applies for networks with arbitrary Lipschitz activation (such as the sigmoid function) for which an analog of \Cref{thm:BoundsLipschitzConstant} can be derived, and for which we know how to approximate "smooth" functions \cite[Table 1]{Guhring21ApproxReLUNNGeneralActivation}.
\end{remark}
\section{Approximation speeds of quantized \emph{vs.} unquantized ReLU neural networks}\label{sec:ApproxSpeedQuantNN}
\noindent Consider a function $f$ and a sequence of parameters $(\theta_M)_{M\in\N}$ such that $\|f-R_{\theta_M}\|_p$ goes to zero as $M$ goes to infinity. Given a nearest-neighbour uniform quantization scheme $Q_M$ with a step size that depends on $M$, we saw in the previous sections how to control the quantization error $\|R_{\theta_M}-R_{Q(\theta_M)}\|_p$. We used the latter to control $\|f-R_{Q(\theta_M)}\|_p\leq \|f-R_{\theta_M}\|_p + \|R_{\theta_M}-R_{Q(\theta_M)}\|_p$, in specific situations such as when $f$ is in an $L^\infty$-Sobolev space, see \cref{subsec:Approx+QuantError}. In this section, we give sufficient conditions on the step size used for quantization with $Q_M$ to guarantee that the quantization error $\|R_{\theta_M}-R_{Q(\theta_M)}\|_p$ decreases, with $M$, at the same asymptotic polynomial rate as the approximation error $\|f-R_{\theta_M}\|_p$. This leads to an explicit sufficient number of bits, depending on the growth with $M$ of the architecture of the parameters $\theta_M$, that guarantees that quantized ReLU networks have the same approximation speeds as unquantized ones, see \Cref{ex:CompSpeedQuantizedNN}. First, we define the considered approximation families and their uniformly quantized version. Notations are somewhat cumbersome but necessary to introduce the different types of constraints on the considered architectures (depth and width constraints) and on the parameters (sparsity and norm constraints).
\begin{definition}[Sequence of sets of architectures]\label{def:SeqSetsArchi}
    Consider $d_{\textrm{in}}, d_{\textrm{out}}\in\N$ and $(L_M)_{M\in\N}\in\N^\N$. For each $M\in\N$ define $\archiclass{M}$, the set of architectures with input dimension $d_{\textrm{in}}$, output dimension $d_{\textrm{out}}$, depth bounded by $L_M$ and widths of the hidden layers bounded by $M$:
    \begin{equation}\label{eq:DefArchiM}
        \archiclass{M} :=
        \{(L,(N_0,\dots,N_L)) :  L,N_0,\dots, N_L\in\mathbb{N}, L\leqslant L_M, N_0=d_{\textrm{in}}, N_L=d_{\textrm{out}},\nonumber  N_\ell\leqslant M, \ell=1,\dots ,L-1\}. \nonumber
    \end{equation}
    For every $M\in\N$ and every architecture $\archi{L}{N}\in\archiclass{M}$, define $S^M_{\archi{L}{N}}$ as the set of all supports $S\subset \{0,1\}^{d_{\archi{L}{N}}}$ \emph{of cardinality at most $M$}, used to constrain the non-zero entries of a vector $\param$ with architecture $\archi{L}{N}$.
\end{definition}
The width constraint $N_{\ell}\leq M$ in the definition of the architectures in $\archiclass{M}$ is written for clarity but is superfluous in what follows, given that the realization of a network $\theta$ (with arbitrary activation function and an architecture of arbitrary width) with at most $M$ nonzero coefficients can always be written as the realization of a parameter $\theta'$ on a ``pruned'' architecture where every hidden layer has width $N_{\ell} \leq M$.
\begin{definition}[ReLU networks approximation family]\label{def:ApproxFamilyNN}
    Consider a sequence $(\archiclass{M})_{M\in\N}$ of sets of architectures as in \Cref{def:SeqSetsArchi}. Consider $L^p(\Omega\to(\R^{d_{\textrm{out}}},\|\cdot\|),\mu)$ satisfying \eqref{ineq: fct affine dans F}, $q\in[1,\infty]\cup\{F,\max\}$ ($F$ and $\max$ refer to the Frobenius norm and the max-norm, see \Cref{rmk: parameter set with Frobenius or max norm}), and a sequence $(r_M)_{M\in\N}$ of real numbers such that $r_M\geq 1$. Define the approximation family $\nnclass:=(\nnclass_M)_{M\in\N}$ of sets $\nnclass_M\subset L^p(\Omega\to(\R^{d_{\textrm{out}}},\|\cdot\|),\mu)$ of realizations of ReLU neural networks with an architecture $\archi{L}{N}\in\archiclass{M}$ and parameters in $\paramarchiclass{L}{N}^q(r_M)$:
    \[
        \nnclass_M :=\bigcup\limits_{\archi{L}{N}\in\archiclass{M}}\bigcup\limits_{S\in S^M_{\archi{L}{N}}}R_{\paramarchiclass{L}{N}^q(r_M), S}
    \]
    where for any parameter set $\Theta$ and support $S$ we denote $R_{\Theta, S} := \{R_{\theta}, \theta \in \Theta\text{ supported on } S\}$.
\end{definition}
\begin{definition}[Quantization of parameters on a uniform grid]\label{def:QThetaetar} Given a set of parameters $\Theta$ associated to an architecture $\archi{L}{N}$, we define, for every $\eta>0$ and $r\in(0,\infty]$, the quantized version $\mQ(\Theta,\eta,r)$ of $\Theta$ on a bounded uniform grid $(\eta\Z\cap[-r,r])^{d_{\archi{L}{N}}}$:
    \[
        \mQ(\Theta,\eta,r):=\Theta \cap (\eta\Z\cap[-r,r])^{d_{\archi{L}{N}}}.
    \]

    \Cref{def:QThetaetar} will essentially be used when $\Theta$ is a ball. In that case, $Q(\Theta,\eta,r)$ is not empty provided that $r$ and $\Theta$ are sufficiently large compared to the step size $\eta$.
\end{definition}
\begin{definition}[Quantized version of \Cref{def:ApproxFamilyNN}]\label{def:QuantifApproxFamilyNN}
    Consider an approximation family $\nnclass:=(\nnclass_M)_{M\in\N}$ as in \Cref{def:ApproxFamilyNN}, with the associated sequences $(r_M)_{M\in\N}, (L_M)_{M\in\N}$ and $q\in[1,\infty]\cup\{F,\max\}$. For every $M\in\N$, define
    \[
        \Lips(M,q) := \begin{cases}
            \max(d_{\textrm{in}}, d_{\textrm{out}},M)L_M^2r_M^{L_M-1}   & \text{if } q\in[1,\infty]\cup\{F\}, \\
            \Lips(M,2)\max(d_{\textrm{in}}, d_{\textrm{out}},M)^{L_M-1} & \text{for }q=\max,
        \end{cases}
    \]
    and observe that $\Lips(M,q) \geq 1$.
    Given any $\speed>0$ define for every $M\in\N$ the step size $\eta_M=\eta_{M}(\gamma,q):=(M^{\gamma}\Lips(M,q))^{-1}$. The \emph{$\speed$-uniformly quantized} version $\mQ(\nnclass|\speed):=(\mQ_{M}(\nnclass_M|\speed))_{M\in\N}$ of $\nnclass$ is
    \begin{equation*}
        \mQ_{M}(\nnclass_M|\speed) = \bigcup\limits_{\archi{L}{N}\in\archiclass{M}}\bigcup\limits_{S\in S^M_{\archi{L}{N}}} R_{\mQ(\paramarchiclass{L}{N}^q(r_M), \eta_M, r_M), S}
    \end{equation*}
    with $\archiclass{M}$ and $S^M_{\archi{L}{N}}$ the families of architectures and supports (cf \Cref{def:SeqSetsArchi}) associated to $\mathcal{N}_{M}$, see \Cref{def:ApproxFamilyNN} where we recall that the notation $R_{\Theta,S}$ given a support $S$ is also introduced.
\end{definition}
In general, Lipschitz-parameterized approximation families can be uniformly quantized into sequences having comparable approximations speeds, if a step size sufficiently small is chosen. \Cref{thm:ApproxSpeedQuantifNN} deals only with the case of Lipschitz-parameterized approximation families we are interested in: ReLU neural networks. The upper-bound on the Lipschitz constant established in \Cref{thm:BoundsLipschitzConstant} yields explicit conditions on the growth of the depth and the weight's magnitude, that guarantee that the \emph{$\speed$-uniformly quantized} sequence $Q(\nnclass|\speed)$ has, on every set $\fctclass\subset L^p$, an approximation speed which is comparable to its unquantized version $\nnclass$. The proof of \Cref{thm:ApproxSpeedQuantifNN} is in appendix \ref{app:LipsParamUnifQuant}.
\begin{thm}\label{thm:ApproxSpeedQuantifNN}
    Consider the context of \Cref{def:QuantifApproxFamilyNN}. Then, for every $\speed>0$, the $\speed$-\emph{uniformly quantized} sequence $\mQ(\nnclass|\speed):=(\mQ_{M}(\nnclass_M|\speed))_{M\in\N}$ has, on every (non-empty) set $\fctclass\subset L^p$, an approximation speed comparable to the unquantized one $\nnclass$:
    \begin{equation}\label{eq:QuantSpeedNN}
        \begin{array}{ll}
            \approximationspeed(\fctclass|\mQ(\nnclass|\speed)) = \approximationspeed(\fctclass|\nnclass) & \text{if $\speed\geq \approximationspeed(\fctclass|\nnclass)$}, \\
            \approximationspeed(\fctclass|\mQ(\nnclass|\speed)) \geq \speed                               & \text{otherwise}.
        \end{array}
    \end{equation}
\end{thm}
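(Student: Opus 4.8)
The plan is to establish the two one-sided bounds that together yield \eqref{eq:QuantSpeedNN}. Writing $\alpha := \approximationspeed(\fctclass|\nnclass)$, I will show that $\approximationspeed(\fctclass|\mQ(\nnclass|\speed)) \leq \alpha$ always holds, and that $\approximationspeed(\fctclass|\mQ(\nnclass|\speed)) \geq \min(\alpha,\speed)$. These combine to give \eqref{eq:QuantSpeedNN}: if $\speed \geq \alpha$ they pin the quantized speed to $\alpha$, while if $\speed < \alpha$ the second bound gives $\geq \speed$. The degenerate cases ($\alpha = -\infty$, or $\alpha \leq 0 < \speed$) are covered by the same argument together with the conventions in the definition of $\approximationspeed$.

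For the upper bound I would simply use that quantization only shrinks each family: by construction $\mQ(\paramarchiclass{L}{N}^q(r_M),\eta_M,r_M) \subseteq \paramarchiclass{L}{N}^q(r_M)$ with the same architectures in $\archiclass{M}$ and the same supports in $S^M_{\archi{L}{N}}$, so $\mQ_M(\nnclass_M|\speed) \subseteq \nnclass_M$ for every $M$. Hence $\inf_{\Psi \in \mQ_M(\nnclass_M|\speed)} d(f,\Psi) \geq \inf_{\Phi \in \nnclass_M} d(f,\Phi)$ for all $f$; taking the supremum over $f \in \fctclass$ shows that every polynomial rate attained by the quantized family is attained by $\nnclass$, and the definition of $\approximationspeed$ gives $\approximationspeed(\fctclass|\mQ(\nnclass|\speed)) \leq \alpha$.

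The lower bound is where the work lies, and I would obtain it from a triangle inequality. Fix $\beta < \min(\alpha,\speed)$; by the definition of $\alpha$ there are $C>0$ and $M_0$ with $\sup_{f\in\fctclass}\inf_{\Phi\in\nnclass_M} d(f,\Phi) \leq CM^{-\beta}$ for $M \geq M_0$. For each such $M$ and each $f$, I pick near-optimal unquantized parameters $\theta$, with architecture in $\archiclass{M}$, support of size at most $M$, and $\theta \in \paramarchiclass{L}{N}^q(r_M)$, so that $d(f,R_\theta) \leq 2CM^{-\beta}$. The crux is to replace $\theta$ by a grid parameter $\tilde\theta \in \mQ(\paramarchiclass{L}{N}^q(r_M),\eta_M,r_M)$ on the same support with $\|\theta - \tilde\theta\|_\infty = \mathcal{O}(\eta_M)$. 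Granting this, \Cref{thm:BoundsLipschitzConstant} applied inside the ball of radius $r_M$ — after the norm conversions of \Cref{rmk: parameter set with Frobenius or max norm} for $q \in \{F,\max\}$, which is exactly what the two branches of $\Lips(M,q)$ in \Cref{def:QuantifApproxFamilyNN} encode — gives $\|R_\theta - R_{\tilde\theta}\|_p \leq c\,\Lips(M,q)\,\|\theta-\tilde\theta\|_\infty = \mathcal{O}(\Lips(M,q)\,\eta_M) = \mathcal{O}(M^{-\speed})$, since $\eta_M = (M^{\speed}\Lips(M,q))^{-1}$. The triangle inequality then gives $\inf_{\Psi\in\mQ_M(\nnclass_M|\speed)} d(f,\Psi) \leq d(f,R_\theta) + \|R_\theta - R_{\tilde\theta}\|_p \leq 2CM^{-\beta} + \mathcal{O}(M^{-\speed})$, which is $\mathcal{O}(M^{-\beta})$ because $\beta < \speed$. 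Taking the supremum over $f$ and then letting $\beta \uparrow \min(\alpha,\speed)$ yields $\approximationspeed(\fctclass|\mQ(\nnclass|\speed)) \geq \min(\alpha,\speed)$.

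The main obstacle is the construction of $\tilde\theta$: I need a grid point that lies in the prescribed ball $\paramarchiclass{L}{N}^q(r_M)$, respects the support of $\theta$, and stays within $\mathcal{O}(\eta_M)$ of $\theta$ in $\|\cdot\|_\infty$. Rounding each coordinate toward zero preserves the support (zeros stay zero, so the architecture and support classes are unchanged) and moves each coordinate by less than $\eta_M$; for the max-norm, the Frobenius norm and the operator norms $q \in \{1,\infty\}$ it does not increase the relevant norm of any $W_\ell$ or $b_\ell$, so $\tilde\theta$ remains in the ball and the claimed rate follows at once. The delicate point is that for a general operator norm $q \in (1,\infty)$ this entrywise monotonicity fails (entrywise shrinking can increase, e.g., the spectral norm), so rounding may push $\theta$ just outside the ball. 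The key thing to verify is therefore that this overflow can be absorbed at a cost that is only a constant multiple of $\eta_M$ — and in particular does not scale with the width — so that the quantization error stays $\mathcal{O}(M^{-\speed})$ and the rate is not degraded; I expect this verification to be the technically hardest part of the proof.
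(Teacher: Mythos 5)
Your strategy is exactly the one the paper uses: the paper factors the argument into \Cref{lem:ApproxSpeedGammaEncod} (a subfamily that is an internal $cM^{-\speed}$-covering of $\nnclass_M$ has approximation speed at least $\min(\approximationspeed(\fctclass|\nnclass),\speed)$ and at most $\approximationspeed(\fctclass|\nnclass)$) and \Cref{lem:gammaencodNN} (the quantized family is such a covering, via \Cref{thm:BoundsLipschitzConstant} and \Cref{lem:CovNumberLipImage}). Your inclusion argument and your triangle-inequality argument are precisely the two halves of the proof of \Cref{lem:ApproxSpeedGammaEncod}, and your ``replace $\theta$ by a nearby grid point in the ball'' step is exactly the content of \Cref{lem:gammaencodNN}. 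Those parts, including the treatment of the degenerate cases and of the $q\in\{F,\max\}$ branches of $\Lips(M,q)$, are fine.

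The step you leave open is, however, a genuine gap, and the escape route you sketch does not close it. You need $\tilde\theta\in\paramarchiclass{L}{N}^q(r_M)\cap(\eta_M\Z\cap[-r_M,r_M])^{d_{\archi{L}{N}}}$, supported on $S$, with $\|\theta-\tilde\theta\|_\infty=\mathcal{O}(\eta_M)$. For $q\in\{1,\infty,F,\max\}$, and for the bias vectors for every $q$, rounding each coordinate toward zero works because the relevant norms are entrywise monotone in the absolute values of the entries; this is exactly the mechanism of \Cref{lem:CovNumberLipImage}, which covers an $\ell^q$-ball by its intersection with the grid. For operator norms with $q\in(1,\infty)$ entrywise monotonicity fails --- shrinking the $(2,2)$ entry of $\left(\begin{smallmatrix}1&1\\1&-1\end{smallmatrix}\right)$ from $-1$ to $0$ raises the spectral norm from $\sqrt{2}$ to $(1+\sqrt{5})/2$ --- and the only generic bound on the overflow, $\vertiii{\tilde W_\ell}_q\leq \vertiii{W_\ell}_q+\vertiii{\tilde W_\ell-W_\ell}_q\leq r_M+W\eta_M$, \emph{does} scale with the width. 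The natural repair (first shrink $\theta$ by the factor $1-W\eta_M/r_M$ so that rounding cannot leave the ball) costs $\|\theta-\tilde\theta\|_\infty=\mathcal{O}(W\eta_M)$, hence an $L^p$ error of order $\Lips(M,q)\,W\eta_M= WM^{-\speed}=\mathcal{O}(M^{1-\speed})$, which degrades the exponent by one rather than by a constant. So for $q\in(1,\infty)$ you must either establish a dimension-free net property of operator-norm balls intersected with the grid (not obvious), or build the extra factor $\max(d_{\textrm{in}},d_{\textrm{out}},M)$ into $\Lips(M,q)$ and hence into $\eta_M$ (harmless for the bit count, but it changes the stated step size). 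You have in fact put your finger on the one point where the paper is itself terse: \Cref{lem:gammaencodNN} invokes \Cref{lem:CovNumberLipImage}, a statement about $\ell^q$-balls of the flattened parameter vector, directly on the operator-norm balls $\paramarchiclass{L}{N}^q(r_M)$, so the published argument is only airtight for the entrywise-monotone choices of $q$.
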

We will see in the following sections that the approximation speed $\approximationspeed(\fctclass|\nnclass)$ can be bounded from above by a quantity denoted $\encodingspeed(\fctclass)$, the latter quantity being known for several classical sets $\fctclass$ (see \cite[Table 1]{Elbrachter21DNNApproximationTheory}). In such a situation, this guides the choice of $\speed$ to define a concrete $\speed$-quantized sequence in the context of \Cref{thm:ApproxSpeedQuantifNN}. Indeed, considering  $\fctclass\subset\F$ a classical function class for which the quantity $\encodingspeed(\fctclass)$ is known, choosing $\speed\geq\encodingspeed(\fctclass)$ is sufficient to ensure that $\speed \geq  \approximationspeed(\fctclass|\nnclass)$. Vice-versa, among all such $\speed$, choosing the smallest one $\speed = \encodingspeed(\fctclass)$ is probably the best choice to yield the largest possible step sizes $\eta_M$ and the best concrete compromise.
\begin{example}[Comparable approximation speeds with controlled growth of the number of bits]\label{ex:CompSpeedQuantizedNN}
    Let $q\in[1,\infty]\cup\{F,\max\}$ be an exponent and $\pi$ be a positive polynomial and consider $\nnclass_M^\pi$ the set of functions parameterized by a ReLU neural network with arbitrary architecture $\archi{L}{N}$ with depth bounded by $\pi(\log M)$, with at most $M$ non-zero parameters and with parameters in $\paramarchiclass{L}{N}^q(\pi(M))$. For every $\speed>0$, there exists a constant $c(\gamma)>0$ such that the $\speed$-uniformly quantized sequence $Q(\nnclass_M^\pi|\speed)$ of $\nnclass_M^\pi$ is obtained with step size $\eta_M=\mathcal{O}_{M\to\infty}(M^{-c(\gamma)\log M})$, \ie using $\mathcal{O}_{M\to\infty}((\log M)^2)$ bits per weight. \Cref{thm:ApproxSpeedQuantifNN} guarantees that this quantized sequence still has approximation speeds comparable to $\nnclass_M^\pi$. In the same setup, if we assume in addition that the depths $L_M$ are uniformly bounded in $M$, then for every $\speed>0$, a step size $\eta_M=\mathcal{O}_{M\to\infty}(M^{-c(\gamma)})$ (\ie $\mathcal{O}_{M\to\infty}(\log M)$ bits per parameter) suffices to get comparable speeds as in Equation~\eqref{eq:QuantSpeedNN}.
\end{example}

\section{Encoding speeds \emph{vs} approximation speeds}
\label{sec: encoding speeds vs approximation speeds}
\noindent We now investigate a fundamental limitation of many approximations families (including ReLU networks): the approximation speed of a set by an approximation family cannot be greater than the encoding speed of this set (see \eqref{ineq: main ineq}). \Cref{sec: encoding speeds vs approximation speeds} and \cref{sec: gamma-encodable} are essentially independent from the others. We introduce an abstract property of approximation families, called "encodability", in \Cref{def:GammaEncodability}. In \Cref{thm:unifying_thm}, we prove that every approximation family satisfying this encodability property must satisfy Inequality \eqref{ineq: main ineq}. As we will see in \cref{sec: gamma-encodable}, this lays a unified and generic framework that captures and recovers different known situations \cite[Thm. \MakeUppercase{\romannumeral 5}.3, Thm. \MakeUppercase{\romannumeral 6}.4]{Elbrachter21DNNApproximationTheory}\cite[Thm. 5.24]{Grohs15OptimallySparseDataRep}\cite[Prop. 11]{Kerkyacharian04Entropy} where \eqref{ineq: main ineq} holds.
\subsection{The notion of $\gamma$-encodability}
\noindent Let $\approxclass:=(\approxclass_M)_{M\in\N}$ be a sequence of non-empty subsets of a metric space $(\F,d)$. Let $\fctclass\subset\F$ and $\eps>0$. If $\approximationspeed(\fctclass|\approxclass)>0$, since $\approxclass$ approximates $\fctclass$ at speed $\approximationspeed(\fctclass|\approxclass)$, there exists a positive integer $M$ large enough such that every element $f\in\fctclass$ can be $\eps$-approximated (with respect to the metric $d$) by an element of $\approxclass_M$. Since $\approxclass_M$ can be $\eps$-covered (with respect to $d$) with $N(\approxclass_M, d, \eps)$ elements, $\fctclass$ can be $2\eps$-covered with $N(\approxclass_M, d, \eps)$ elements. Instances of this simple reasoning can be found in \cite[Thm. \MakeUppercase{\romannumeral 5}.3, Thm. \MakeUppercase{\romannumeral 6}.4]{Elbrachter21DNNApproximationTheory}\cite[Thm. 5.24]{Grohs15OptimallySparseDataRep}\cite[Prop. 11]{Kerkyacharian04Entropy}. This suggests the existence of a relation between the approximation speed $\approximationspeed(\fctclass|\approxclass)$ and the encoding speed $\encodingspeed(\fctclass)$ that depends on the growth with $M$ of the covering numbers of $\approxclass_M$.

We claim that a "reasonable" growth of the covering numbers of $\approxclass_M$  consists in a situation where, for some $\speed>0$, the set $\approxclass_M$ can be $M^{-\speed}$-covered with "roughly" $2^{M\log M}$ elements. Indeed, this covers the case where each element of $\approxclass_M$ can be described by $M$ parameters that can be stored with a number of bits per parameter that grows logarithmically in $M$. For instance if $\approxclass_M$ is a bounded set in dimension $M$ then it can be uniformly quantized along each dimension with a size step of order $M^{-\gamma}$, so that $\log M$ bits is roughly enough to encode each of the $M$ coordinates. This "reasonable" growth for the covering numbers of $\approxclass_M$ is formalized in \Cref{def:GammaEncodability}, and yields the simple relation $\min(\approximationspeed(\fctclass|\approxclass),\speed)\leq \encodingspeed(\fctclass)$ for every set $\fctclass\subset\F$, as shown in \Cref{thm:unifying_thm}.
\begin{definition}[$(\speed,h)$-encoding]\label{def:(gamma,h)-encoding}
    Let $(\F,d)$ be a metric space. Let $\approxclass:=(\approxclass_M)_{M\in\N}$ be an arbitrary sequence of (non-empty) subsets of $\F$. Let $\speed>0$ and $h>0$. A sequence $(\approxclass(\speed,h)_M)_{M\in\N}$ is said to be a \emph{$(\speed,h)$-encoding} of $\approxclass$ if there exist constants $c_1,c_2>0$ such that for every $M\in\N$, the set $\approxclass(\speed,h)_M$ is a $c_1M^{-\speed}$-covering of $\approxclass_M$ (recall \Cref{def:CovNumber}, in particular $\approxclass(\speed,h)_M$ must be a subset of $\approxclass_M$) of size satisfying $\log_2(|\approxclass(\speed,h)_M|)\leq c_2M^{1+h}$.
\end{definition}
The following definition captures a "reasonable" growth with $M$ of the covering numbers of $\approxclass_M$.
\begin{definition}[$\speed$-encodable $\approxclass$ in $(\F,d)$]\label{def:GammaEncodability} Let $(\F,d)$ be a metric space. Let $\approxclass:=(\approxclass_M)_{M\in\N}$ be an arbitrary sequence of (by default, non-empty) subsets of $\F$. Let $\speed>0$. We say that $\approxclass$ is \emph{$\speed$-encodable} in $(\F,d)$ if for every $h>0$, there exists a $(\speed,h)$-encoding of $\approxclass$. We say that $\approxclass$ is \emph{$\infty$-encodable} in $(\F,d)$ if it is $\speed$-encodable in $(\F,d)$ for all $\speed>0$. When the context is clear, we will omit the mention to $(\F,d)$.
\end{definition}
Note that if $\approxclass$ is $\speed$-encodable then it is $\speed'$-encodable for every $\speed'\leq\speed$. Several examples of $\infty$-encodable sequences are given in \cref{sec: gamma-encodable}, including classical approximation families defined with dictionaries or ReLU networks.
\subsection{The encoding speed as a universal upper bound for approximation speeds}
\noindent It is known that $\approximationspeed(\fctclass|\approxclass)\leq \encodingspeed(\fctclass)$ for various sets $\fctclass$ when $\approxclass$ is defined with neural networks \cite[Thm. \MakeUppercase{\romannumeral 6}.4]{Elbrachter21DNNApproximationTheory} or dictionaries \cite[Thm. \MakeUppercase{\romannumeral 5}.3]{Elbrachter21DNNApproximationTheory}\cite[Thm. 5.24]{Grohs15OptimallySparseDataRep}. The following proposition shows that $\infty$-encodability implies $\approximationspeed(\fctclass|\approxclass)\leq \encodingspeed(\fctclass)$. This settles a unified and generalized framework for the aforementioned known cases that implicitly use, one way or another, the $\infty$-encodability property, as we will detail in \cref{subsec:InfEncDict} and \cref{subsec:NNInfEnc}.
\begin{thm}\label{thm:unifying_thm}
    Consider $(\F,d)$ a metric space and $\approxclass:=(\approxclass_M)_{M\in\N}$ an arbitrary sequence of (non-empty) subsets of $\F$ which is $\speed$-encodable in $(\F,d)$, with $\gamma\in(0,\infty]$.
    Then for every (non-empty) $\fctclass\subset\F$:
    \[
        \min (\approximationspeed(\fctclass|\approxclass),\speed )\leq \encodingspeed(\fctclass).
    \]
\end{thm}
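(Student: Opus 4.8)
The plan is to chain together the two defining speeds through the covering-number bound supplied by $\speed$-encodability. The statement is trivial if $\approximationspeed(\fctclass|\approxclass)\leq 0$, since $\encodingspeed(\fctclass)\geq 0$ by convention, so I would assume $\approximationspeed(\fctclass|\approxclass)>0$ and set $\speed_0 := \min(\approximationspeed(\fctclass|\approxclass),\speed)$. The goal is to show $\encodingspeed(\fctclass)\geq\speed_0$, \ie that for every $\speed_1<\speed_0$ one has $H(\fctclass,d,\eps)=\mathcal{O}_{\eps\to 0}(\eps^{-1/\speed_1})$.

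First I would fix an arbitrary $\speed_1<\speed_0$ and an arbitrary small $h>0$ (to be tuned at the end). Since $\speed_0\leq\speed$ and $\approxclass$ is $\speed$-encodable, hence $\speed_0$-encodable (encodability at a given speed implies encodability at every smaller speed), there exists a $(\speed_0,h)$-encoding $(\approxclass(\speed_0,h)_M)_M$: a family of subsets $\approxclass(\speed_0,h)_M\subset\approxclass_M$ that $c_1M^{-\speed_0}$-cover $\approxclass_M$ with $\log_2|\approxclass(\speed_0,h)_M|\leq c_2M^{1+h}$. In parallel, since $\speed_0\leq\approximationspeed(\fctclass|\approxclass)$, for any $\speed_2$ with $\speed_1<\speed_2<\speed_0$ the approximation error $\sup_{f\in\fctclass}\inf_{\Phi\in\approxclass_M}d(f,\Phi)$ is $\mathcal{O}(M^{-\speed_2})$, so there is a constant $c_3$ and, for $M$ large, every $f\in\fctclass$ admits some $\Phi\in\approxclass_M$ with $d(f,\Phi)\leq c_3M^{-\speed_2}$.

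The key step is then the standard two-step covering argument: given $f\in\fctclass$, pick $\Phi\in\approxclass_M$ within $c_3M^{-\speed_2}$, then replace $\Phi$ by a point of $\approxclass(\speed_0,h)_M$ within $c_1M^{-\speed_0}$; by the triangle inequality $f$ is within $c_3M^{-\speed_2}+c_1M^{-\speed_0}$ of a point of $\approxclass(\speed_0,h)_M$. Since $\speed_2\leq\speed_0$, this is $\leq (c_1+c_3)M^{-\speed_2}=:\eps_M$. Hence $\approxclass(\speed_0,h)_M$ is an $\eps_M$-covering of $\fctclass$, giving $H(\fctclass,d,\eps_M)\leq\log_2|\approxclass(\speed_0,h)_M|\leq c_2M^{1+h}$. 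Solving $\eps_M\asymp M^{-\speed_2}$ for $M$, namely $M\asymp\eps_M^{-1/\speed_2}$, substituting, and using monotonicity of $H$ in $\eps$ to pass from the discrete sequence $\eps_M$ to all small $\eps$, yields $H(\fctclass,d,\eps)=\mathcal{O}(\eps^{-(1+h)/\speed_2})$.

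The main obstacle is purely bookkeeping with the exponent: I need $(1+h)/\speed_2\leq 1/\speed_1$, which is where the freedom in choosing $h$ small and $\speed_2$ close to $\speed_0$ is spent. Concretely, since $\speed_1<\speed_0$, I first choose $\speed_2\in(\speed_1,\speed_0)$, then choose $h>0$ small enough that $(1+h)/\speed_2\leq 1/\speed_1$, which is possible because $1/\speed_2<1/\speed_1$ strictly leaves room for a factor $1+h>1$. With this choice $H(\fctclass,d,\eps)=\mathcal{O}(\eps^{-1/\speed_1})$, so $\speed_1$ is admissible in the definition \eqref{def: encoding speed} of $\encodingspeed(\fctclass)$; taking the supremum over all $\speed_1<\speed_0$ gives $\encodingspeed(\fctclass)\geq\speed_0=\min(\approximationspeed(\fctclass|\approxclass),\speed)$. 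One minor care point is the case $\speed=\infty$: there $\speed_0=\approximationspeed(\fctclass|\approxclass)$ and encodability is invoked at the finite speed $\speed_0$, so the argument is unchanged.
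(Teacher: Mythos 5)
Your proposal is correct and follows essentially the same route as the paper's proof: the two-step covering argument (approximate $f$ in $\approxclass_M$, then snap to the $(\cdot,h)$-encoding), the triangle inequality to absorb the two error terms into a single $M^{-\speed_2}$ term, and the inversion $M\asymp\eps^{-1/\speed_2}$ with monotonicity of the metric entropy. The only cosmetic difference is quantifier bookkeeping — the paper concludes $\encodingspeed(\fctclass)\geq\speed'/(1+h)$ and takes the supremum over $h$ and $\speed'$ at the end, whereas you pre-tune $h$ and the intermediate exponent to certify each $\speed_1<\speed_0$ directly — which changes nothing of substance.
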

The proof of \Cref{thm:unifying_thm} is in appendix \ref{app:EncImpliesUnifyingBound}. We derive from \Cref{thm:unifying_thm} a generic lower bound on the encoding speed of the set of functions uniformly approximated at a given speed.
\begin{cor}\label{cor:EncSpeedBallApproxClass} Let $(\F,d)$ be a metric space. Consider $\speed\in(0,\infty]$ and $\approxclass:=(\approxclass_M)_{M\in\N}$ an arbitrary sequence of (non-empty) subsets of $\F$ which is $\speed$-encodable in $(\F,d)$. Consider $\alpha,\beta>0$ and $\mathcal{A}^\alpha(\F,\approxclass,\beta)$ the set of all $f\in\F$ such that
    $\sup_{M\geq 1} M^{\alpha}d(f,\approxclass_M)\leq \beta$. This set satisfies
    \[
        \encodingspeed(\mathcal{A}^\alpha(\F,\approxclass,\beta)) \geq \min(\alpha,\speed).
    \]
\end{cor}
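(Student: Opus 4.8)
The plan is to deduce this corollary directly from \Cref{thm:unifying_thm} by lower-bounding the approximation speed of the set $\fctclass := \mathcal{A}^\alpha(\F,\approxclass,\beta)$ by $\alpha$. The point is that membership in $\mathcal{A}^\alpha$ is tailored precisely so that every element is approximated by $\approxclass_M$ at rate $M^{-\alpha}$, which is exactly the kind of information that the definition of $\approximationspeed$ consumes.

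Concretely, I would first unwind the definition of $\fctclass$. By construction, every $f\in\fctclass$ satisfies $\inf_{\Phi\in\approxclass_M}d(f,\Phi)=d(f,\approxclass_M)\leq \beta M^{-\alpha}$ for every $M\geq 1$. Taking the supremum over $f\in\fctclass$ gives $\sup_{f\in\fctclass}\inf_{\Phi\in\approxclass_M}d(f,\Phi)\leq \beta M^{-\alpha}=\mathcal{O}_{M\to\infty}(M^{-\alpha})$. Hence $\gamma=\alpha$ belongs to the set over which the supremum defining $\approximationspeed(\fctclass|\approxclass)$ is taken, so that
\[
    \approximationspeed(\fctclass|\approxclass)\geq \alpha.
\]
Since $\approxclass$ is $\speed$-encodable and $\fctclass\subset\F$, I then apply \Cref{thm:unifying_thm} to obtain $\min(\approximationspeed(\fctclass|\approxclass),\speed)\leq \encodingspeed(\fctclass)$. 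Combining this with the lower bound just established yields
\[
    \min(\alpha,\speed)\leq \min(\approximationspeed(\fctclass|\approxclass),\speed)\leq \encodingspeed(\fctclass)=\encodingspeed(\mathcal{A}^\alpha(\F,\approxclass,\beta)),
\]
which is the claimed inequality.

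The only genuinely delicate point, and the step I would flag as the main obstacle, is the applicability of \Cref{thm:unifying_thm}, which requires $\fctclass$ to be non-empty: the inequality $\sup_{f\in\fctclass}\inf_{\Phi\in\approxclass_M}d(f,\Phi)\leq \beta M^{-\alpha}$ holds vacuously when $\fctclass=\emptyset$, but the encoding-speed conclusion would then depend on the convention for $\encodingspeed(\emptyset)$ and could fail since $\min(\alpha,\speed)>0$. I would therefore either treat non-emptiness of $\mathcal{A}^\alpha(\F,\approxclass,\beta)$ as the operative hypothesis, or exhibit an explicit element (for instance, in the common setting where the family $(\approxclass_M)_M$ is non-decreasing, any fixed $\Phi\in\approxclass_1$ lies in every $\approxclass_M$, so $d(\Phi,\approxclass_M)=0$ for all $M$ and $\Phi\in\mathcal{A}^\alpha(\F,\approxclass,\beta)$). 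Apart from this bookkeeping, the argument is a one-line reduction and needs no further estimates.
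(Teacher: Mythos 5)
Your proof is correct and follows exactly the paper's argument: the definition of $\mathcal{A}^\alpha(\F,\approxclass,\beta)$ immediately gives $\approximationspeed(\mathcal{A}^\alpha(\F,\approxclass,\beta)|\approxclass)\geq\alpha$, and \Cref{thm:unifying_thm} then yields the claim. Your remark about non-emptiness is a reasonable piece of bookkeeping that the paper leaves implicit, but it does not change the substance of the argument.
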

\begin{proof}
    By the very definition of $\mathcal{A}^\alpha(\F,\approxclass,\beta)$, it holds $\approximationspeed(\mathcal{A}^\alpha(\F,\approxclass,\beta)|\approxclass)\geq \alpha$. \Cref{thm:unifying_thm} then gives the result.
\end{proof}
The reader may wonder about the role of $\beta$ in the above result, and whether a similar result can be achieved with $\mathcal{A}^\alpha(\F,\approxclass) := \cup_{\beta>0}\mathcal{A}^\alpha(\F,\approxclass,\beta)$. While this is left open, a related discussion after \Cref{prop:AppSpace} suggests this may not be possible without additional assumptions on $\approxclass$.

As an immediate corollary of \Cref{thm:unifying_thm}  we also obtain the following result.

\begin{cor}\label{cor: approximation speed bounded from above by encoding speed}
    Consider $\approxclass:=(\approxclass_M)_{M\in\N}$ an arbitrary sequence of (non-empty) subsets of a metric space $\F$ and a (non-empty) set $\fctclass\subset\F$. If $\approxclass$ is $\speed$-encodable for every $\speed<\approximationspeed(\fctclass|\approxclass)$ then:
    \[
        \approximationspeed(\fctclass|\approxclass)\leq\encodingspeed(\fctclass).
    \]
\end{cor}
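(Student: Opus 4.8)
The plan is to reduce Corollary \ref{cor: approximation speed bounded from above by encoding speed} to \Cref{thm:unifying_thm} by a simple limiting argument on the encodability parameter $\speed$. The key observation is that \Cref{thm:unifying_thm} gives, for any fixed $\speed$ for which $\approxclass$ is $\speed$-encodable, the inequality $\min(\approximationspeed(\fctclass|\approxclass),\speed)\leq\encodingspeed(\fctclass)$. The hypothesis here is stronger in the sense that it provides such $\speed$-encodability for \emph{every} $\speed<\approximationspeed(\fctclass|\approxclass)$, so I would like to push $\speed$ up towards $\approximationspeed(\fctclass|\approxclass)$ and take a limit.

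Concretely, first I would treat the degenerate cases. If $\approximationspeed(\fctclass|\approxclass)\leq 0$ (including the value $-\infty$), then since $\encodingspeed(\fctclass)\geq 0$ by its definition in \eqref{def: encoding speed} (the convention makes it $0$ when the supremum is over the empty set, and it is a supremum of positive reals otherwise), the inequality $\approximationspeed(\fctclass|\approxclass)\leq\encodingspeed(\fctclass)$ holds trivially and there is nothing to prove. So I may assume $\approximationspeed(\fctclass|\approxclass)>0$, which guarantees that the set of admissible $\speed$, namely $\{\speed : 0<\speed<\approximationspeed(\fctclass|\approxclass)\}$, is non-empty.

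Next, for the main case, fix any $\speed$ with $0<\speed<\approximationspeed(\fctclass|\approxclass)$. By hypothesis $\approxclass$ is $\speed$-encodable, so \Cref{thm:unifying_thm} applies and gives $\min(\approximationspeed(\fctclass|\approxclass),\speed)\leq\encodingspeed(\fctclass)$. Because $\speed<\approximationspeed(\fctclass|\approxclass)$, the minimum on the left equals $\speed$, hence $\speed\leq\encodingspeed(\fctclass)$. This inequality holds for every $\speed<\approximationspeed(\fctclass|\approxclass)$, so taking the supremum over all such $\speed$ yields $\approximationspeed(\fctclass|\approxclass)\leq\encodingspeed(\fctclass)$, as desired.

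I do not anticipate any genuine obstacle: the content is entirely carried by \Cref{thm:unifying_thm}, and the corollary is a routine ``$\sup$ over a dense family of parameters'' argument. The only points requiring a little care are the sign/convention handling in the degenerate case $\approximationspeed(\fctclass|\approxclass)\leq 0$ and the observation that $\encodingspeed(\fctclass)\geq 0$ always holds, so that the trivial case is genuinely trivial. The passage to the limit is clean precisely because $\encodingspeed(\fctclass)$ is a fixed quantity independent of $\speed$, so the supremum of the lower bounds $\speed$ transfers directly to the left-hand side.
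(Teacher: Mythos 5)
Your proposal is correct and follows essentially the same route as the paper: apply \Cref{thm:unifying_thm} for each $\speed<\approximationspeed(\fctclass|\approxclass)$, note that the minimum then equals $\speed$, and take the supremum. Your explicit handling of the degenerate case $\approximationspeed(\fctclass|\approxclass)\leq 0$ (using $\encodingspeed(\fctclass)\geq 0$) is a small tidiness bonus the paper leaves implicit, but the argument is otherwise identical.
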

\begin{proof}
    For every $\speed<\approximationspeed(\fctclass|\approxclass)$, since $\Sigma$ is $\gamma$-encodable,  we have $\speed = \min ( \approximationspeed(\fctclass|\approxclass), \gamma ) \leq\encodingspeed(\fctclass)$ by Proposition \ref{thm:unifying_thm}. Taking the supremum of such $\speed$, we get the inequality.
\end{proof}

As we will see in \cref{sec: gamma-encodable}, applying \Cref{cor: approximation speed bounded from above by encoding speed} to specific $\infty$-encodable sequences allows one to unify and generalize different cases where $\approximationspeed(\fctclass|\approxclass)\leq\encodingspeed(\fctclass)$ is known to hold \cite[Thm. \MakeUppercase{\romannumeral 5}.3, Thm. \MakeUppercase{\romannumeral 6}.4]{Elbrachter21DNNApproximationTheory}\cite[Thm. 5.24]{Grohs15OptimallySparseDataRep}.

Note that the quantity $\encodingspeed(\fctclass)$ is known in several cases, see \cite[Table 1]{Elbrachter21DNNApproximationTheory}. In the next section, we discuss concrete examples of $\infty$-encodable sequences $\approxclass$. For such a sequence $\approxclass$ and an arbitrary set $\fctclass$, independently of the adequation of $\Sigma$ and $\fctclass$, \Cref{cor: approximation speed bounded from above by encoding speed} automatically yields an upper bound for the approximation speed of $\fctclass$ by $\Sigma$.

In some situations, the converse of \Cref{cor: approximation speed bounded from above by encoding speed} can be established.
\begin{thm}
    Let $\fctclass$ be a (non-empty) subset of a metric space $(\F,d)$ and $\approxclass:=(\approxclass_M)_{M\in\N}$ a sequence of (non-empty) subsets of $\F$ such that  $\approxclass_M\subset \fctclass$ for every $M$ large enough. If $\min(\approximationspeed(\fctclass|\approxclass), \encodingspeed(\fctclass))>0$ then the sequence $\approxclass$ is $\speed$-encodable for each $0<\speed<\min(\approximationspeed(\fctclass|\approxclass), \encodingspeed(\fctclass))$. \\
    In particular, if $\approximationspeed(\fctclass|\approxclass)\leq \encodingspeed(\fctclass)$ then $\Sigma$ is $\speed$-encodable for every $0<\speed<\approximationspeed(\fctclass|\approxclass)$.
\end{thm}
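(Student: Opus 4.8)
The plan is to establish $\speed$-encodability of $\approxclass$ by constructing, for each $h>0$, a suitable $(\speed,h)$-encoding $(\approxclass(\speed,h)_M)_{M\in\N}$ directly from the covering properties guaranteed by the encoding speed of $\fctclass$. The key observation is that the hypothesis $\speed<\encodingspeed(\fctclass)$ means $H(\fctclass,d,\eps)=\mathcal{O}_{\eps\to 0}(\eps^{-1/\speed})$, so $\fctclass$ admits small $\eps$-coverings; and the hypothesis $\speed<\approximationspeed(\fctclass|\approxclass)$ together with $\approxclass_M\subset\fctclass$ for large $M$ lets me relate coverings of $\fctclass$ back to coverings of the $\approxclass_M$'s. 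First I would fix $h>0$ and pick an auxiliary exponent $\speed'$ with $\speed<\speed'<\min(\approximationspeed(\fctclass|\approxclass),\encodingspeed(\fctclass))$, which exists by assumption.

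\textbf{Construction of the encoding.} For each $M$, set the target radius to $c_1 M^{-\speed}$ for a constant $c_1$ to be chosen. Since $\speed'<\encodingspeed(\fctclass)$, there is an $\eps_M$-covering $X_M$ of $\fctclass$ with $\eps_M:=\tfrac{1}{2}c_1 M^{-\speed}$ and $|X_M|=N(\fctclass,d,\eps_M)\leq 2^{C\eps_M^{-1/\speed'}}$ for some constant $C$ and all small $\eps_M$ (\ie large $M$). Because $\speed'>\speed$, the exponent $\eps_M^{-1/\speed'}$ grows like $M^{\speed/\speed'}$, which is $\mathcal{O}(M)$ and hence $\mathcal{O}(M^{1+h})$; this is exactly where the gap $\speed'>\speed$ pays off, giving the required bound $\log_2|X_M|\leq c_2 M^{1+h}$. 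The remaining issue is that \Cref{def:(gamma,h)-encoding} requires the covering $\approxclass(\speed,h)_M$ to be a \emph{subset of} $\approxclass_M$, whereas $X_M$ consists of centers in $\fctclass$, not necessarily in $\approxclass_M$. I would fix this by the standard trick: for each center $x\in X_M$ whose ball $B_d(x,\eps_M)$ meets $\approxclass_M$, replace $x$ by a point $\tilde{x}\in\approxclass_M\cap B_d(x,\eps_M)$; discard the others. The resulting set $\approxclass(\speed,h)_M\subset\approxclass_M$ has size at most $|X_M|$ and, by the triangle inequality, is a $2\eps_M=c_1 M^{-\speed}$-covering of $\approxclass_M$ (since $\approxclass_M\subset\fctclass$ for large $M$ is covered by the balls $B_d(x,\eps_M)$). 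For the finitely many small $M$ where $\approxclass_M\not\subset\fctclass$, I would cover by a single arbitrary point of each $\approxclass_M$, absorbing these into the constants.

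\textbf{Verification and conclusion.} It remains to check the two defining inequalities of a $(\speed,h)$-encoding with appropriate constants $c_1,c_2$: the covering radius is $c_1 M^{-\speed}$ by construction, and the cardinality bound $\log_2|\approxclass(\speed,h)_M|\leq c_2 M^{1+h}$ follows from $\log_2|X_M|\leq C\eps_M^{-1/\speed'}=C(\tfrac{1}{2}c_1)^{-1/\speed'}M^{\speed/\speed'}\leq c_2 M^{1+h}$, valid for large $M$ because $\speed/\speed'<1\leq 1+h$. Since $h>0$ was arbitrary, this shows $\approxclass$ is $\speed$-encodable for every $\speed<\min(\approximationspeed(\fctclass|\approxclass),\encodingspeed(\fctclass))$, which is the first claim; the ``in particular'' statement follows by noting that when $\approximationspeed(\fctclass|\approxclass)\leq\encodingspeed(\fctclass)$ the minimum equals $\approximationspeed(\fctclass|\approxclass)$.

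\textbf{Main obstacle.} The delicate point is handling the quantifiers in the approximation-speed hypothesis: the definition of $\approximationspeed$ guarantees only an $\mathcal{O}_{M\to\infty}$ bound on $\sup_{f\in\fctclass}d(f,\approxclass_M)$, so the relationship ``every $f\in\fctclass$ is $\eps_M$-approximable by $\approxclass_M$'' holds only asymptotically and with the radius tied to $M$. I must therefore carefully couple the two indices — the covering radius $\eps$ of $\fctclass$ and the index $M$ of the approximation family — choosing $M=M(\eps)$ so that $\approxclass_M$ both approximates $\fctclass$ finely enough and admits a covering of controlled size. Since the statement asks only to produce \emph{some} $(\speed,h)$-encoding rather than to optimize it, I expect the coupling $\eps_M\asymp M^{-\speed}$ above to suffice, but reconciling the asymptotic ``for $M$ large enough'' quantifiers with the requirement that the encoding be defined for \emph{every} $M\in\N$ (handled by the trivial single-point covering for small $M$) is the step needing the most care.
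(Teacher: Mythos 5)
Your proof is correct, and it re-centers the covering by a genuinely different device than the paper. Both arguments fix $\speed<\speed'<\min(\approximationspeed(\fctclass|\approxclass),\encodingspeed(\fctclass))$, take an $\eps_M$-covering of $\fctclass$ with $\eps_M\asymp M^{-\speed}$ whose $\log_2$-cardinality is $\mathcal{O}(\eps_M^{-1/\speed'})=\mathcal{O}(M^{\speed/\speed'})=\mathcal{O}(M^{1+h})$, and then must move the centers into $\approxclass_M$. The paper does this by composing with the near-best approximation map $\Phi_M:\fctclass\to\approxclass_M$ furnished by the hypothesis $\speed<\approximationspeed(\fctclass|\approxclass)$, obtaining the $(1+c)M^{-\speed}$-covering $\Phi_M(\fctclass_{\eps_M})\subset\approxclass_M$; you instead use the standard internal-covering trick of replacing each center whose ball meets $\approxclass_M$ by an element of $\approxclass_M$ inside that ball, at the cost of doubling the radius. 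A consequence of your route worth noting is that the approximation-speed hypothesis becomes superfluous: your construction yields $\speed$-encodability for every $0<\speed<\encodingspeed(\fctclass)$ using only the inclusion $\approxclass_M\subset\fctclass$ for large $M$, which is formally stronger than the stated conclusion (the paper's map $\Phi_M$ is genuinely used only to land the centers back in $\approxclass_M$, a job your trick does for free). One shared loose end: for the finitely many small $M$ with $\approxclass_M\not\subset\fctclass$, a single point of $\approxclass_M$ is a valid $c_1M^{-\speed}$-covering only if that $\approxclass_M$ admits some finite covering at all; the paper's proof silently makes the same assumption, so this does not distinguish your argument from theirs.
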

\begin{proof}
    Consider $0<\speed<\min(\approximationspeed(\fctclass|\approxclass), \encodingspeed(\fctclass))$. By definition of $\approximationspeed(\fctclass|\approxclass)$, there exists a constant $c>0$ such that for every $f\in\fctclass$ and every $M\in\N$, there exists $\Phi_M(f)\in\approxclass_M$ such that $d(f,\Phi_M(f))\leq cM^{-\speed}$. Consider $\speed'>0$ such that $\speed<\speed'<\min(\approximationspeed(\fctclass|\approxclass), \encodingspeed(\fctclass))$. For $M\in\N$, define $\eps_M:=M^{-\speed}$.  By definition of $\encodingspeed(\fctclass)$, there is a constant $c'>0$ such that for every $\eps>0$, there exists an $\eps$-covering $\fctclass_\eps$ of $\fctclass$ of size satisfying $\log_2(|\fctclass_\eps|)\leq c'\eps^{-1/\speed'}$. For $M$ large enough, $\approxclass_M\subset\fctclass$, hence for every such $M$ and every $f\in\approxclass_M$, there exists $f_{\eps_{M}}\in\fctclass_{\eps_{M}}$ such that $d(f,f_{\eps_{M}})\leq \eps_{M}$. Using the triangle inequality, we obtain that for every $M$ large enough and every $f\in\approxclass_M$: $d(f,\Phi_M(f_{\eps_M}))\leq (1+c)M^{-\speed}$. This shows that $\Phi_M(\fctclass_{\eps_M})$ is a $(1+c)M^{-\speed}$-covering of $\approxclass_M$ of size satisfying $\log_2(|\Phi_M(\fctclass_{\eps_M})|)\leq c'M^{\speed/\speed'}$, with $\speed/\speed'<1$. This shows that $\approxclass$ is $\speed$-encodable. The rest of the claim follows.
\end{proof}
\section{Examples of $\infty$-encodable approximation families}\label{sec: gamma-encodable}
\noindent We now give several examples of $\infty$-encodable sequences $\approxclass$. We start with a gentle warmup in \cref{subsec:FiniteToLipsParamInfEnc}. It is proven that some sequences of balls (in the sense of the metric space $\F$) of increasing radius and dimension are $\infty$-encodable. Quite naturally, $\infty$-encodability is preserved under some Lipschitz transformation, as shown in \Cref{thm:LipsParamAreUnifQuant} in the specific case of $\infty$-encodable sequences of balls (this can be generalized to other $\infty$-encodable sequences, but this is not useful here). In \cref{subsec:InfEncDict}, we give examples of $\infty$-encodable sequences in the context of approximations with dictionaries, see \cref{subsec:InfEncDict}, showing that \Cref{thm:unifying_thm} unifies and generalizes Theorem \MakeUppercase{\romannumeral 5}.3 in \cite{Elbrachter21DNNApproximationTheory} and Theorem 5.24 in \cite{Grohs15OptimallySparseDataRep}. Finally, in \cref{subsec:NNInfEnc}, we give an example of an $\infty$-encodable approximation family defined with ReLU networks. Once again, \Cref{thm:unifying_thm} applied to this $\infty$-encodable sequence recovers a known result, see \Cref{ex:InfEncNN}.
\subsection{First examples of $\infty$-encodable sequences}\label{subsec:FiniteToLipsParamInfEnc}
\noindent This subsection is a gentle warmup, where basic examples of $\infty$-encodable sequences are given in order to manipulate the notion of encodability. Let $(\F,d)$ be a metric space and $c>0$. Let $\approxclass:=(\approxclass_M)_{M\in\N}$ be a sequence of sets $\approxclass_M\subset\F$ that can be \emph{covered} with $N_M=\mathcal{O}_{M\to\infty}(2^{cM\pi(\log M)})$ balls (with respect to the ambient metric space) centered in $\approxclass_M$ of radius $\eps_M=\mathcal{O}_{M\to\infty}(M^{-\speed})$. Since $\mathcal{O}_{M\to\infty}(2^{cM\pi(\log M)})=\mathcal{O}_{M\to\infty}(2^{M^{1+h}})$ for every $h>0$, it is clear from the definition that $\approxclass$ is $\infty$-encodable. This is trivially the case when $\approxclass:=(\approxclass_M)_{M\in\N}$ is a sequence of finite sets $\approxclass_M\subset\F$ with at most $2^{cM \pi(\log M)}$ elements since each $\approxclass_M$ is an exact covering of itself. Another example consists of some sequences of balls (in the sense of the metric space $\F$) of increasing radius and dimension as described in the next lemma. The proof is in appendix \ref{app:FirstExInfEnc}.
\begin{lemma}\label{lem: some sequences of balls in finite dim are infinite-encodable}
    Consider $q\in[1,\infty]$, $(d_M)_{M\in\N}\in\N^\N$, $(r_M)_{M\in\N}$ a sequence of real numbers satisfying $r_M\geq 1$ and define $\approxclass:=(\approxclass_M)_{M\in\N}$, with $\approxclass_M:=B_{d_M,\|\cdot\|_q}(0,r_M)$ being the set of sequences of $\ell^q(\N)$ bounded by $r_M$ and supported in the first $d_M$ coordinates. Then, $\approxclass$ is either $\infty$-encodable in $\ell^q(\N)$ or it is never $\speed$-encodable in $\ell^q(\N)$, whatever $\speed>0$ is. Moreover, it is $\infty$-encodable if, and only if,
    \[
        d_M\left(\log_2(r_M)+1\right)=\mathcal{O}_{M\to\infty}(M^{1+h}), \qquad \forall h >0.
    \]
\end{lemma}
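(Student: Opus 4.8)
The plan is to prove both directions of the equivalence (and the dichotomy) by directly estimating the covering numbers of the balls $\approxclass_M = B_{d_M,\|\cdot\|_q}(0,r_M)$ against the $(\speed,h)$-encoding requirement of \Cref{def:GammaEncodability}. The key quantity is the metric entropy $H(\approxclass_M, \|\cdot\|_q, \eps_M)$ with the target radius $\eps_M = c_1 M^{-\speed}$, and the crux is that this entropy is, up to universal constants, of order $d_M \log_2(r_M/\eps_M)$ — which already makes visible why the condition $d_M(\log_2(r_M)+1) = \mathcal{O}(M^{1+h})$ is the right one.

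\textbf{Step 1: Covering-number estimate for a single ball.} First I would recall the standard volumetric bound for the covering number of a $d$-dimensional $\|\cdot\|_q$-ball of radius $r$ at scale $\eps$: for $\eps \le r$ one has
\[
    \left(\frac{r}{\eps}\right)^{d} \le N\!\left(B_{d,\|\cdot\|_q}(0,r), \|\cdot\|_q, \eps\right) \le \left(\frac{2r}{\eps}+1\right)^{d} \le \left(\frac{3r}{\eps}\right)^{d},
\]
the upper bound coming from a uniform grid argument (and the lower bound from volume, valid for $q<\infty$; for $q=\infty$ one still gets matching bounds up to constants, or one uses a packing argument). Taking base-$2$ logarithms gives
\[
    d\,\log_2\!\left(\frac{r}{\eps}\right) \le H\!\left(B_{d,\|\cdot\|_q}(0,r), \|\cdot\|_q, \eps\right) \le d\,\log_2\!\left(\frac{3r}{\eps}\right).
\]
A small technical point to handle cleanly is that \Cref{def:CovNumber} requires the covering centers to lie \emph{inside} $\approxclass_M$; a grid covering can be pushed back into the ball by projecting centers onto the ball, at the cost of only doubling the radius, so this does not affect the asymptotics.

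\textbf{Step 2: Plug in the scale $\eps_M = c_1 M^{-\speed}$ and match the definition.} With $\eps_M = c_1 M^{-\speed}$, the entropy of $\approxclass_M$ is, up to additive $\mathcal{O}(d_M \log_2 M)$ and $\mathcal{O}(d_M)$ terms, comparable to $d_M\bigl(\log_2(r_M)+1\bigr)$. Concretely,
\[
    H\!\left(\approxclass_M, \|\cdot\|_q, c_1 M^{-\speed}\right) = \Theta\!\left(d_M\bigl(\log_2(r_M) + \speed\log_2 M + \log_2(1/c_1) + 1\bigr)\right).
\]
For the \emph{sufficiency} direction, assume $d_M(\log_2(r_M)+1) = \mathcal{O}(M^{1+h'})$ for all $h'>0$. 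Fix $\speed>0$ and $h>0$; choose $h' = h/2$. Since the logarithmic factor $\speed\log_2 M$ grows only polynomially in $\log M$, it is absorbed: $d_M(\log_2(r_M)+\speed\log_2 M + 1) = \mathcal{O}(M^{1+h'}\log M) = \mathcal{O}(M^{1+h})$. Hence for each $h$ the grid covering at scale $c_1 M^{-\speed}$ furnishes a $(\speed,h)$-encoding (with $c_1=1$, say), so $\approxclass$ is $\speed$-encodable for every $\speed$, i.e. $\infty$-encodable.

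\textbf{Step 3: Necessity and the dichotomy.} For the converse, suppose $\approxclass$ is $\speed$-encodable for \emph{some} $\speed>0$. Then for every $h>0$ there are constants $c_1,c_2$ and coverings of $\approxclass_M$ at radius $c_1 M^{-\speed}$ of log-size $\le c_2 M^{1+h}$; but any such covering has size at least $N(\approxclass_M,\|\cdot\|_q,c_1 M^{-\speed})$, so the lower bound of Step 1 forces
\[
    d_M\,\log_2\!\left(\frac{r_M}{c_1 M^{-\speed}}\right) \le c_2 M^{1+h},
\]
and since $\log_2(r_M/(c_1 M^{-\speed})) \ge \log_2 r_M + \speed\log_2 M - \log_2(1/c_1) \gtrsim \log_2(r_M)+1$ for $M$ large (using $r_M\ge 1$ and that the $M^{\speed}$ factor only helps), this yields $d_M(\log_2(r_M)+1) = \mathcal{O}(M^{1+h})$ for every $h>0$. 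This simultaneously proves necessity of the stated condition and the dichotomy: the entropy growth condition does not depend on $\speed$ (the $\speed\log_2 M$ contribution is swallowed by $M^{h}$ in either direction), so $\speed$-encodability for one $\speed$ is equivalent to $\infty$-encodability, and if the condition fails then no $\speed$-encodability can hold. I would organize the write-up so the $\speed$-independence is made explicit once and reused for both the dichotomy and the final equivalence.

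\textbf{Main obstacle.} The genuinely delicate part is not the asymptotic matching but the clean two-sided covering estimate in Step 1 \emph{uniformly in $q\in[1,\infty]$ and with centers constrained to lie in the ball}. The upper bound via a grid is routine, but verifying that the constant-factor losses (the $3$ in $3r/\eps$, the projection onto the ball, the $q=\infty$ case of the volumetric lower bound) are all absorbable into the $\mathcal{O}(M^{1+h})$ budget — and checking that the $\log_2 M$ terms coming from the scale $M^{-\speed}$ never dominate $d_M(\log_2 r_M+1)$ nor are dominated by it in a way that breaks the equivalence — requires careful bookkeeping. The rest is a direct comparison of growth rates.
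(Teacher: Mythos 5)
Your proposal is correct and follows essentially the same route as the paper's proof: both rest on the two-sided entropy estimate $d_M\log_2(r_M/\eps)\leq H(\approxclass_M,\|\cdot\|_q,\eps)\leq d_M\log_2(3r_M/\eps)$ for balls, evaluated at $\eps\asymp M^{-\speed}$, followed by the observation that the $\speed\log_2 M$ contribution is absorbed into the $\mathcal{O}(M^{1+h})$ budget, making the condition $\speed$-independent and yielding both the dichotomy and the equivalence. Your extra bookkeeping (centers inside the ball, the constant $c_1$, the $q=\infty$ lower bound) is sound but covered by the standard covering-number bounds the paper cites.
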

Quite naturally, $\infty$-encodability can be preserved under Lipschitz maps as shown in the following theorem. The proof is in appendix \ref{app:FirstExInfEnc}.
\begin{thm}\label{thm:LipsParamAreUnifQuant} Consider the same setting as in \Cref{lem: some sequences of balls in finite dim are infinite-encodable}. Consider also a sequence $\varphi:=(\varphi_M)_{M\in\N}$ of maps $\varphi_M:(\approxclass_{M}, \|\cdot\|_q)\to(\F,d)$ that are $\Lips(\varphi_M)$-Lipschitz for some constants $\Lips(\varphi_M)\geq 1$. Define $\varphi(\approxclass):=(\varphi_M(\approxclass_M))_{M\in\N}$. Assume that for every $h>0$:
    \begin{equation}\label{eq:LipCondition}
        d_M\left(\log_2(r_M)+\log_2(\Lips(\varphi_M))+1\right)=\mathcal{O}_{M\to\infty}(M^{1+h}).
    \end{equation}
    Then $\varphi(\approxclass)$ is $\infty$-encodable.
\end{thm}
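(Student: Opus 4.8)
The plan is to reduce the claim to Lemma \ref{lem: some sequences of balls in finite dim are infinite-encodable} by transporting a good covering of each ball $\approxclass_M$ through the Lipschitz map $\varphi_M$. The key observation is that if $X_M\subset\approxclass_M$ is a $\delta$-covering of $\approxclass_M$ (with respect to $\|\cdot\|_q$), then $\varphi_M(X_M)$ is a $\Lips(\varphi_M)\delta$-covering of $\varphi_M(\approxclass_M)$ with respect to $d$, simply because $d(\varphi_M(x),\varphi_M(y))\leq \Lips(\varphi_M)\|x-y\|_q$ for all $x,y\in\approxclass_M$, and the cardinality does not increase: $|\varphi_M(X_M)|\leq |X_M|$.

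First I would fix $h>0$ and, to produce an $(\speed,h)$-encoding of $\varphi(\approxclass)$, aim to cover each $\approxclass_M$ at the finer scale $\delta_M := M^{-\speed}/\Lips(\varphi_M)$ so that after pushing through $\varphi_M$ the radius becomes of order $M^{-\speed}$. Quantizing the ball $B_{d_M,\|\cdot\|_q}(0,r_M)$ on a uniform grid of step comparable to $\delta_M$ along each of the $d_M$ active coordinates yields a $\delta_M$-covering whose log-cardinality is of order $d_M\log_2\!\bigl(1+r_M/\delta_M\bigr) = d_M\bigl(\log_2(r_M)+\log_2(\Lips(\varphi_M))+\log_2 M^{\speed}+O(1)\bigr)$. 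Applying $\varphi_M$ gives a $c_1 M^{-\speed}$-covering $\varphi_M(X_M)\subset\varphi_M(\approxclass_M)$ of the same log-size bound.

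The remaining step is to check that this log-cardinality is $\mathcal{O}_{M\to\infty}(M^{1+h})$, which is exactly where hypothesis \eqref{eq:LipCondition} enters: the extra term $d_M\log_2 M^{\speed}=\speed\, d_M\log_2 M$ is dominated by $d_M(\log_2(r_M)+\log_2(\Lips(\varphi_M))+1)$ up to the polylogarithmic factor $\log_2 M$, and since \eqref{eq:LipCondition} holds for \emph{every} $h>0$, the factor $\log_2 M$ can be absorbed by passing from $h/2$ to $h$. Concretely, $\speed\,d_M\log_2 M = \mathcal{O}(M^{1+h/2}\log_2 M)=\mathcal{O}(M^{1+h})$, and likewise for the main term, so the total is $\mathcal{O}(M^{1+h})$. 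Since $h>0$ was arbitrary, $\varphi(\approxclass)$ is $\speed$-encodable for every $\speed>0$, hence $\infty$-encodable.

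The main obstacle, though a mild one, is the bookkeeping around the $\log_2 M$ factor introduced by quantizing at scale $M^{-\speed}$ rather than at the native scale of the ball: one must be careful that the presence of $\Lips(\varphi_M)$ inside the step size $\delta_M$ feeds exactly the $\log_2(\Lips(\varphi_M))$ term appearing in \eqref{eq:LipCondition}, and that the additive $\speed\log_2 M$ contribution is harmless precisely because the hypothesis is quantified over all $h>0$. I would therefore present the argument as an explicit construction of the grid-covering (mirroring the covering count used in the proof of Lemma \ref{lem: some sequences of balls in finite dim are infinite-encodable}), push it through $\varphi_M$, and then verify the two-line asymptotic estimate under \eqref{eq:LipCondition}.
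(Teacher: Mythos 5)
Your proposal is correct and follows essentially the same route as the paper: the paper's proof invokes \Cref{lem:CovNumberLipImage}, which covers the ball with a uniform grid of step $(M^{\speed}d_M^{1/q}\Lips(\varphi_M))^{-1}$, pushes it through $\varphi_M$, and absorbs the resulting $d_M\log_2 M$ and $d_M\log_2 d_M$ terms exactly as you do, by exploiting that \eqref{eq:LipCondition} is quantified over all $h>0$. The only detail to watch is that for $q<\infty$ a coordinate-wise grid of step $\delta_M$ is only a $d_M^{1/q}\delta_M$-covering in $\|\cdot\|_q$, which costs an extra $\tfrac{1}{q}d_M\log_2 d_M$ in the log-cardinality; this is harmless (same absorption) and is avoided entirely if you use the volumetric covering bound from the proof of \Cref{lem: some sequences of balls in finite dim are infinite-encodable}, as you suggest.
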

\subsection{The case of dictionaries}\label{subsec:InfEncDict}
\noindent We now consider sequences $\approxclass$ defined with dictionaries. As detailed below, results of the literature \cite[Thm. 5.24]{Grohs15OptimallySparseDataRep}\cite[Prop. 11]{Kerkyacharian04Entropy} use arguments that implicitly prove $\speed$-encodability. Let us start with the case of approximation in Banach spaces as in \cite{Kerkyacharian04Entropy}. We only explicit the sequence used in \cite{Kerkyacharian04Entropy} which is $\speed$-encodable and we do not delve into more details as results of \cite{Kerkyacharian04Entropy} are out of scope of this paper. A part of the proof of \cite[Prop. 11]{Kerkyacharian04Entropy} consists of implicitly showing that some specific sequence $\approxclass^q$ is $s$-encodable, for $q$ and $s$ as described below in \Cref{prop:KerkyacharianEnc}, as shown in appendix \ref{app:DictNNInfEnc}. In particular, the setup of \Cref{prop:KerkyacharianEnc} applies when $\F$ is the $L^p$ space on $\mathbb{R}^d$ or $[0,1]^d$, $1<p<\infty$, and the basis $B$ is a compactly supported wavelet basis or associated wavelet-tensor product basis.
\begin{prop}\label{prop:KerkyacharianEnc}
    Let $\F$ be a Banach space with a basis $B=(e_i)_{i\in\N}$ satisfying $\sup_{i\in\N}\|e_i\|_\F<\infty$. Consider $p\in(0,\infty)$ and assume that $B$ satisfies the so-called $p$-Telmyakov property \cite[Def. 2]{Kerkyacharian04Entropy}, \ie assume that there exists $c>0$ such that for every finite subset $I$ of $\N$:
    \begin{equation}\label{eq:p-Telmyakov}
        \frac{1}{c}|I|^{1/p}\min_{i\in I}|c_i|\leq \|\sum_{i\in I}c_ie_i  \|_\F\leq c |I|^{1/p}\max_{i\in I}|c_i|, \quad \forall (c_i)_{i\in I}\in\R^I.
    \end{equation}
    Consider $0<q<p$. For every $M\in\N$, define\footnote{In terms of weak-$\ell^q$-space, the set $\approxclass^q_M$ is simply the set of linear combinations of elements of $B$ given by sequences $(c_i)_{i\in\N}$ in the closed unit ball of $\ell^{q,\infty}(\N)$ with zero coordinates outside the first $M$ ones.}:
    \[
        \approxclass^q_M := \left\{\sum_{i=1}^M c_ie_i, c_i\in\R, \sup_{0<\lambda<\infty}\lambda|\{i, |c_i|\geq \lambda\}|^{1/q}\leq 1\right\}.
    \]
    Define $s=\frac{1}{q}-\frac{1}{p}$. Then the sequence $\approxclass^q:=(\approxclass_M^q)_{M\in\N}$ is $s$-encodable in $\F$.
\end{prop}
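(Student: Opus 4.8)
The plan is to build, for each $h>0$, an explicit $(s,h)$-encoding of $\approxclass^q$ by quantizing the coefficient vectors of the elements of $\approxclass^q_M$ coordinate by coordinate. The key preliminary remark is that the tolerance of $s$-encodability is generous: we are allowed up to $c_2M^{1+h}$ bits, whereas a coordinatewise quantization of the $M$ active coefficients costs only of the order of $M\log M$ bits. Consequently, although the proof of \cite[Prop. 11]{Kerkyacharian04Entropy} proceeds through a best-$m$-term argument, here a direct quantization of \emph{all} the coefficients already fits within the budget, which keeps the construction short.

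First I would record that every $f=\sum_{i=1}^M c_ie_i\in\approxclass^q_M$ has its coefficient vector $c=(c_i)$ in the unit ball of weak-$\ell^q$, supported on $\{1,\dots,M\}$. Applying the defining inequality $\sup_\lambda\lambda|\{i:|c_i|\ge\lambda\}|^{1/q}\le 1$ with $\lambda=|c_i|$ (so that $|\{j:|c_j|\ge|c_i|\}|\ge 1$) gives $|c_i|\le 1$ for every $i$. Next I would fix the step size $\delta:=M^{-1/q}$ and define the quantized coefficient $\tilde c_i$ as $c_i$ rounded \emph{towards zero} to the nearest multiple of $\delta$, so that $|\tilde c_i|\le|c_i|$ and $|c_i-\tilde c_i|<\delta$. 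Setting $\tilde f:=\sum_{i=1}^M\tilde c_ie_i$, the upper $p$-Telmyakov bound \eqref{eq:p-Telmyakov} applied to the support $I\subset\{1,\dots,M\}$ of $c-\tilde c$ yields
\[
    \|f-\tilde f\|_\F\le c|I|^{1/p}\max_i|c_i-\tilde c_i|\le cM^{1/p}\delta=cM^{1/p-1/q}=cM^{-s},
\]
using $s=\tfrac1q-\tfrac1p$. Thus the set $\approxclass(s,h)_M$ of all such $\tilde f$ is a $cM^{-s}$-covering of $\approxclass^q_M$ (take $c_1=c$). Finally I would bound its cardinality: each $\tilde c_i$ ranges over the multiples of $\delta$ in $[-1,1]$, of which there are at most $2M^{1/q}+1$, so $\log_2|\approxclass(s,h)_M|\le M\log_2(2M^{1/q}+1)=\mathcal{O}_{M\to\infty}(M\log M)$. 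Since $M\log M=\mathcal{O}_{M\to\infty}(M^{1+h})$ for every $h>0$, there is a constant $c_2=c_2(h)$ with $\log_2|\approxclass(s,h)_M|\le c_2M^{1+h}$ for all $M$, which is exactly the required bit budget.

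The one point requiring care — and the only genuine obstacle — is the constraint that a $(s,h)$-encoding must be a \emph{subset} of $\approxclass^q_M$: the quantized vectors $\tilde c$ must themselves lie in the unit ball of weak-$\ell^q$. This is precisely why I would round towards zero rather than to the nearest grid point: the functional $c\mapsto\sup_\lambda\lambda|\{i:|c_i|\ge\lambda\}|^{1/q}$ is non-decreasing under a coordinatewise increase of the magnitudes $|c_i|$, so shrinking each $|c_i|$ to $|\tilde c_i|\le|c_i|$ cannot increase it; hence $\tilde c$ stays in the unit ball and $\tilde f\in\approxclass^q_M$. With this monotonicity observation in hand, the three ingredients above (boundedness of the coefficients, the Telmyakov error estimate, and the elementary counting) combine to show that $\approxclass^q$ is $s$-encodable in $\F$, as claimed.
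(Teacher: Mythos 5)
Your proof is correct, and it reaches the conclusion by a more elementary and self-contained route than the paper. The paper's proof also quantizes the coefficients on a uniform grid — and in fact with the very same step size, since its choice $\lambda^{1-q/p}=M^{-s}$ resolves to $\lambda=M^{-1/q}=\delta$ — but it delegates both key estimates to external results of Kerkyacharian and Picard: the error bound $\|f-Q_\lambda(f)\|_\F\leq c(p,q)\lambda^{1-q/p}\sup_i\|e_i\|_\F$ comes from \cite[Prop. 6]{Kerkyacharian04Entropy}, which exploits the weak-$\ell^q$ decay of the coefficients to obtain a bound \emph{uniform in $M$}, and the cardinality bound $2^{\lambda^{-q}(1-\log_2\lambda+\log_2 M)}$ comes from \cite[Lem. 4]{Kerkyacharian04Entropy}, which counts only the $\lambda^{-q}$ coefficients that survive thresholding. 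You replace both by direct computations: the upper Telmyakov inequality \eqref{eq:p-Telmyakov} applied crudely to the full support ($|I|\leq M$) gives $cM^{1/p}\delta=cM^{-s}$, and counting all $M$ quantized coordinates gives $M\log_2(2M^{1/q}+1)=\mathcal{O}(M\log M)$ bits. As you correctly observe, this cruder accounting is harmless here precisely because the $(s,h)$-encodability budget $M^{1+h}$ is generous; the finer sparsity-based estimates of \cite{Kerkyacharian04Entropy} would only matter if one needed coverings at radii $\eps$ decoupled from $M$. Two further points in your favour: your observation that $|c_i|\leq 1$ and your rounding-towards-zero argument make explicit why the quantized elements remain \emph{inside} $\approxclass^q_M$ (the weak-$\ell^q$ quasi-norm is monotone under coordinatewise shrinking of $|c_i|$), a requirement of the definition of a $(\speed,h)$-encoding that the paper's proof leaves implicit; and you do not even need the hypothesis $\sup_i\|e_i\|_\F<\infty$, which in any case follows from \eqref{eq:p-Telmyakov} with $|I|=1$.
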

In the case of Hilbert spaces, much more generic sequences than $\approxclass^q$ above are in fact $\infty$-encodable, as we now discuss. The $\infty$-encodability can be used to recover \cite[Thm. 5.24]{Grohs15OptimallySparseDataRep} (see \Cref{cor:Grohs}), and to generalize \Cref{cor:EncSpeedBallApproxClass} (see \Cref{prop:AppSpace}). Let $\F$ be a Hilbert space and $d$ be the metric associated to the norm on $\F$. A dictionary is, by definition \cite[Def. 5.19]{Grohs15OptimallySparseDataRep}, a subset $\D=(\phi_i)_{i\in\N}$ of $\F$ indexed by a countable set, which we assume to be $\N$ without loss of generality. The dictionary $\D$ can be used to approach elements of $\F$ by linear combinations of a growing number $M$ of its elements.
\begin{thm}\label{thm:GSInfEnc} Let $\F$ be a Hilbert space. Let $\D=(\phi_i)_{i\in\N}$ be a dictionary in $\F$, and $\pi:\N\to\N$ be a function with at most polynomial growth. For every $I\subset\N$, define $(\Tilde{\phi}^I_i)_{i\in I}$ as any orthonormalization of $(\phi_i)_{i\in I}$ (for instance we may consider the Gram-Schmidt orthonormalization). Define for every $M\in\N$ and $c>0$:
    \begin{align*}
        \approxclass^\pi_M             & := \left\{\sum_{i\in I} c_i\phi_i, I\subset\{1,\dots,\pi(M)\}, |I| \leq M, (c_i)_{i\in I}\in\R^I \right\},                                \\
        \tilde{\approxclass}^{\pi,c}_M & := \left \{\sum_{i\in I}\Tilde{c}_i\Tilde{\phi}^I_i,  I\subset\{1,\dots,\pi(M)\}, |I|\leq M, (\Tilde{c}_i)_{i\in I}\in[-c,c]^I \right \}.
    \end{align*}
    The sequence $\tilde{\approxclass}^{\pi,c}:=(\tilde{\approxclass}^{\pi,c}_M)_{M\in\N}$ is $\infty$-encodable in $(\F,d)$, and for every bounded set $\fctclass\subset\F$, it holds:
    \begin{equation}\label{eq:GSSpeed}
        \approximationspeed(\fctclass|\approxclass^\pi) = \max_{c>0}\approximationspeed(\fctclass|\tilde{\approxclass}^{\pi,c}).
    \end{equation}
\end{thm}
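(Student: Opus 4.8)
The plan is to prove the two claims of Theorem~\ref{thm:GSInfEnc} separately. For the $\infty$-encodability of $\tilde{\approxclass}^{\pi,c}$, the natural strategy is to invoke the machinery already built in \Cref{subsec:FiniteToLipsParamInfEnc}: exhibit $\tilde{\approxclass}^{\pi,c}$ as a Lipschitz image of a suitable sequence of balls and apply \Cref{thm:LipsParamAreUnifQuant}. Concretely, for a fixed support $I\subset\{1,\dots,\pi(M)\}$ with $|I|\le M$, the map $(\tilde c_i)_{i\in I}\in[-c,c]^I\mapsto \sum_{i\in I}\tilde c_i\tilde\phi^I_i\in\F$ is an \emph{isometry} onto its image, because the $(\tilde\phi^I_i)_{i\in I}$ are orthonormal in the Hilbert space $\F$; hence it is $1$-Lipschitz for the $\ell^2$ metric on coefficients. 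The only subtlety is that $\tilde{\approxclass}^{\pi,c}_M$ is a \emph{union} over the $\binom{\pi(M)}{\le M}$ choices of support $I$. I would therefore cover each orthonormal ball of radius $c\sqrt{M}$ (in $\R^{|I|}$) to precision $M^{-\speed}$ separately, and take the union of these coverings. The total cardinality is the number of supports times the per-support covering number.

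**Next I would** bound these two factors. The per-support covering number of a ball of radius $c\sqrt M$ in dimension at most $M$ to precision $M^{-\speed}$ is at most $(1+2c\sqrt M/M^{-\speed})^M = \mathcal{O}(2^{cM\log M})$ for a suitable constant, i.e.\ of the form $2^{\mathcal{O}(M\log M)}$. The number of supports is $\binom{\pi(M)}{\le M}\le (\pi(M)+1)^M$, and since $\pi$ grows at most polynomially, $\log_2$ of this is $\mathcal{O}(M\log M)$ as well. Thus the total log-cardinality is $\mathcal{O}(M\log M)=\mathcal{O}(M^{1+h})$ for every $h>0$, while the covering radius is $\mathcal{O}(M^{-\speed})$; since $\speed>0$ is arbitrary and the covering is by construction a subset of $\tilde{\approxclass}^{\pi,c}_M$, the sequence is $\infty$-encodable. (Alternatively, one packages the per-support balls as a single sequence of balls in the ambient space and invokes the warmup observation that a covering by $2^{\mathcal{O}(M\pi(\log M))}$ balls of radius $\mathcal{O}(M^{-\speed})$ already gives $\infty$-encodability, but writing out the support-union explicitly is cleanest.)

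**For the identity** \eqref{eq:GSSpeed}, the plan is a double inequality comparing best $M$-term approximation over the two families. The inequality $\approximationspeed(\fctclass|\approxclass^\pi)\ge \max_{c>0}\approximationspeed(\fctclass|\tilde{\approxclass}^{\pi,c})$ is immediate because for each fixed $c$ and each support $I$, $\tilde{\approxclass}^{\pi,c}_M$ with support $I$ sits inside the linear span of $(\phi_i)_{i\in I}$, which is exactly the span reachable by $\approxclass^\pi_M$ (orthonormalizing does not change the span), so $\tilde{\approxclass}^{\pi,c}_M\subset\approxclass^\pi_M$ and $d(f,\approxclass^\pi_M)\le d(f,\tilde{\approxclass}^{\pi,c}_M)$ for every $f$. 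The reverse inequality is where the Hilbert structure and the boundedness of $\fctclass$ enter: given $f\in\fctclass$ and its best approximant $\sum_{i\in I}c_i\phi_i\in\approxclass^\pi_M$, rewriting it in the orthonormal basis $(\tilde\phi^I_i)_{i\in I}$ yields coefficients $(\tilde c_i)$, and the \emph{key point} is to show these coefficients are bounded by some $c$ independent of $M$ (and of $f$ over the bounded set $\fctclass$). Since $\sum_{i\in I}\tilde c_i\tilde\phi^I_i$ is the orthogonal projection of $f$ onto $\mathrm{span}(\phi_i)_{i\in I}$, by Parseval $\sum_i \tilde c_i^2=\|P_I f\|^2\le\|f\|^2\le \sup_{g\in\fctclass}\|g\|^2=:c^2$, so each $|\tilde c_i|\le c$ with $c$ uniform over $\fctclass$; hence the best approximant lies in $\tilde{\approxclass}^{\pi,c}_M$ for that fixed $c$, giving $d(f,\tilde{\approxclass}^{\pi,c}_M)\le d(f,\approxclass^\pi_M)$ and thus $\approximationspeed(\fctclass|\tilde{\approxclass}^{\pi,c})\ge\approximationspeed(\fctclass|\approxclass^\pi)$ for that $c$, which suffices after taking the max over $c$.

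**The main obstacle** I anticipate is precisely this uniform coefficient bound: one must use that orthonormalization turns the best approximant into an orthogonal projection, so that Parseval controls the orthonormal coefficients by $\|f\|$ regardless of the (possibly ill-conditioned) original dictionary $\D$ — this is exactly why the theorem passes to the orthonormalized family $\tilde{\approxclass}^{\pi,c}$ rather than trying to bound coefficients in the raw dictionary. The boundedness hypothesis on $\fctclass$ is what makes the constant $c$ finite and uniform, so it cannot be dropped; everything else (the covering-number estimates, the support count) is routine bookkeeping once the orthonormal-isometry viewpoint is fixed.
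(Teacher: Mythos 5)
Your proposal is correct and follows essentially the same route as the paper's proof: decompose $\tilde{\approxclass}^{\pi,c}_M$ over the at most $\pi(M)^M$ supports $I$, use orthonormality to view each piece as the $1$-Lipschitz (indeed isometric) image of a bounded coefficient set and cover it with $2^{\mathcal{O}(M\log M)}$ balls of radius $M^{-\speed}$, and prove \eqref{eq:GSSpeed} by combining the trivial inclusion $\tilde{\approxclass}^{\pi,c}_M\subset\approxclass^\pi_M$ with the fact that the orthogonal projection onto $\mathrm{span}(\phi_i)_{i\in I}$ has orthonormal coefficients bounded by $\sup_{f\in\fctclass}\|f\|_\F$ (the paper uses Cauchy--Schwarz coefficientwise where you use Bessel/Parseval, which is the same estimate). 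The only cosmetic difference is that the paper routes the per-support covering through its \Cref{lem:CovNumberLipImage} with the $\ell^\infty$ cube $[-c,c]^I$ rather than the enclosing $\ell^2$ ball, which changes nothing in the final bound.
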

The proof of \Cref{thm:GSInfEnc} is in appendix \ref{app:DictNNInfEnc}. As a consequence of \Cref{thm:GSInfEnc}, one can recover \cite[Thm. 5.24]{Grohs15OptimallySparseDataRep} as we now describe.
\begin{cor}[{\cite[Thm. 5.24]{Grohs15OptimallySparseDataRep}}]\label{cor:Grohs}
    Let $(\F,d)$ be a Hilbert space and $\fctclass\subset\F$. Under the assumptions of \Cref{thm:GSInfEnc}, the sequence $\approxclass^\pi=(\approxclass^\pi_M)_{M\in\N}$ satisfies for every relatively compact\footnote{Recall that a set is relatively compact if its closure is compact. In particular, it must be totally bounded, and in particular bounded.} set $\fctclass$:
    \[
        \approximationspeed(\fctclass|\approxclass^\pi) \leq \encodingspeed(\fctclass).
    \]
\end{cor}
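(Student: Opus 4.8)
The plan is to combine the two outputs of \Cref{thm:GSInfEnc} --- the $\infty$-encodability of the truncated family $\tilde{\approxclass}^{\pi,c}$ and the identity \eqref{eq:GSSpeed} relating its approximation speed to that of $\approxclass^\pi$ --- with the generic upper bound of \Cref{cor: approximation speed bounded from above by encoding speed}. The obstacle one must sidestep is that $\approxclass^\pi$ itself allows arbitrarily large real coefficients, so the sets $\approxclass^\pi_M$ are unbounded and cannot be covered by finitely many balls; it is therefore not clear that $\approxclass^\pi$ is directly $\speed$-encodable, and \Cref{cor: approximation speed bounded from above by encoding speed} cannot be applied to it as is. Equation \eqref{eq:GSSpeed} is precisely what lets us transfer the question to the truncated family $\tilde{\approxclass}^{\pi,c}$, whose coefficients live in $[-c,c]^I$ and which \emph{is} $\infty$-encodable.

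Concretely, I would first observe that a relatively compact set $\fctclass$ is in particular bounded, so the identity \eqref{eq:GSSpeed} applies and gives
\[
    \approximationspeed(\fctclass|\approxclass^\pi) = \max_{c>0}\approximationspeed(\fctclass|\tilde{\approxclass}^{\pi,c}).
\]
Next, I would fix an arbitrary $c>0$ and invoke \Cref{thm:GSInfEnc} to get that $\tilde{\approxclass}^{\pi,c}$ is $\infty$-encodable in $(\F,d)$, hence $\speed$-encodable for every $\speed>0$ and \emph{a fortiori} for every $\speed<\approximationspeed(\fctclass|\tilde{\approxclass}^{\pi,c})$. \Cref{cor: approximation speed bounded from above by encoding speed} then yields $\approximationspeed(\fctclass|\tilde{\approxclass}^{\pi,c})\leq\encodingspeed(\fctclass)$.

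Since this upper bound holds for \emph{every} $c>0$ and its right-hand side does not depend on $c$, taking the maximum over $c>0$ and using the displayed identity concludes:
\[
    \approximationspeed(\fctclass|\approxclass^\pi) = \max_{c>0}\approximationspeed(\fctclass|\tilde{\approxclass}^{\pi,c}) \leq \encodingspeed(\fctclass).
\]
The only point requiring a little care is that the right-hand side of \eqref{eq:GSSpeed} is written as a genuine maximum over $c$ rather than a supremum, but this is harmless here: because the bound $\approximationspeed(\fctclass|\tilde{\approxclass}^{\pi,c})\leq\encodingspeed(\fctclass)$ is obtained uniformly in $c$, no attainment argument is needed. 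Everything else is immediate once \eqref{eq:GSSpeed} and the $\infty$-encodability furnished by \Cref{thm:GSInfEnc} are granted.
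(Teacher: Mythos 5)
Your proposal is correct and follows essentially the same route as the paper: boundedness of the relatively compact set $\fctclass$ activates Equation~\eqref{eq:GSSpeed}, the $\infty$-encodability of each $\tilde{\approxclass}^{\pi,c}$ gives $\approximationspeed(\fctclass|\tilde{\approxclass}^{\pi,c})\leq\encodingspeed(\fctclass)$ uniformly in $c$, and the maximum over $c$ concludes. The only cosmetic difference is that you route through \Cref{cor: approximation speed bounded from above by encoding speed} while the paper invokes \Cref{thm:unifying_thm} directly with $\speed=\infty$; since the former is an immediate consequence of the latter, this is the same argument.
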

Actually, instead of stating the previous result with the approximation speed $\approximationspeed(\fctclass|\approxclass)$, Theorem 5.24 in \cite{Grohs15OptimallySparseDataRep} considers the following quantity \cite[Def. 5.23]{Grohs15OptimallySparseDataRep}:
\[
    \speed^*(\fctclass|\approxclass):=\sup\{\speed\in\R, \forall f\in\fctclass,\exists c>0,\forall M\in\N, d(f,\approxclass_M)\leq cM^{-\speed}\},
\]
which satisfies $\speed^*(\fctclass|\approxclass) \geq \approximationspeed(\fctclass|\approxclass)$ but generally differs from $\approximationspeed(\fctclass|\approxclass)$ since in the definition of $\approximationspeed(\fctclass|\approxclass)$, the implicit constant $c>0$ is not allowed to depend on $f\in\fctclass$. However, when $\fctclass$ is relatively compact (that is, its closure is compact), then $c>0$ can be chosen independently of $f$ \cite[Proof of Thm 5.24]{Grohs15OptimallySparseDataRep} so that the two quantities coincide. The proof of \Cref{cor:Grohs} that can be found below is essentially a rewriting in the formalism of \cref{sec: encoding speeds vs approximation speeds} of the original proof of Theorem 5.24 in \cite{Grohs15OptimallySparseDataRep}. The rewriting makes explicit the use of equality \eqref{eq:GSSpeed} and the $\infty$-encodability of the sequences $\tilde{\approxclass}^{\pi,c}$ for $c>0$, which are only implicitly used in the original proof.
\begin{proof}
    Since $\fctclass$ is relatively compact, it must be bounded so Equation~\eqref{eq:GSSpeed} of \Cref{thm:GSInfEnc} holds. For every $c>0$, \Cref{thm:unifying_thm} applied to $\tilde{\approxclass}^{\pi,c}$ of \Cref{thm:GSInfEnc}, which is $\infty$-encodable, shows that the right hand-side of Equation~\eqref{eq:GSSpeed} is bounded from above by $\encodingspeed(\fctclass)$. This yields the result.
\end{proof}
We also obtain a generic lower bound on the encoding speed of balls of approximation spaces \cite[Sec. 7.9]{DeVore1993} (also called maxisets \cite{Kerkyacharian2000}) with general dictionaries.
\begin{cor}\label{prop:AppSpace}
    Let $(\F,d)$ be a Hilbert space. Under the assumptions of \Cref{thm:GSInfEnc}, consider $\alpha,\beta>0$ and the set\footnote{This is the ball of radius $\beta$ of an approximation space \cite[Sec. 7.9]{DeVore1993}/maxiset\cite{Kerkyacharian2000}.}\footnote{Note that compared to the set in \Cref{cor:EncSpeedBallApproxClass}, we additionally require that $\|f\| \leq \beta$ so that $\mathcal{A}^\alpha(\F,\approxclass^\pi,\beta)$ is a bounded set and Equation~\eqref{eq:GSSpeed} of \Cref{thm:GSInfEnc} holds.} $\mathcal{A}^\alpha(\F,\approxclass^\pi,\beta)$  of all $f\in\F$ such that $\|f\| \leq \beta$ and $\sup_{M\geq 1} M^{\alpha}d(f,\approxclass_M)\leq \beta$. This set satisfies
    \[
        \encodingspeed(\mathcal{A}^\alpha(\F,\approxclass^\pi,\beta)) \geq \alpha.
    \]
\end{cor}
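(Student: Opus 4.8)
The plan is to reduce to the bounded-coefficient sequences $\tilde{\approxclass}^{\pi,c}$ of \Cref{thm:GSInfEnc}, which are $\infty$-encodable, and then to transfer the bound back through the approximation speed. Write $\fctclass := \mathcal{A}^\alpha(\F,\approxclass^\pi,\beta)$. The first observation I would make is that $\fctclass$ is \emph{bounded}: by definition every $f\in\fctclass$ satisfies $\|f\|\leq\beta$. This is precisely the purpose of including the norm constraint in the definition of the set (cf.\ the accompanying footnote), and it is exactly what is needed so that Equation~\eqref{eq:GSSpeed} of \Cref{thm:GSInfEnc} applies to $\fctclass$.

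Next I would read off the approximation speed of $\fctclass$ by $\approxclass^\pi$ directly from the definition of $\fctclass$. Indeed $\sup_{M\geq 1}M^\alpha d(f,\approxclass^\pi_M)\leq\beta$ for every $f\in\fctclass$, so $\sup_{f\in\fctclass}d(f,\approxclass^\pi_M)\leq\beta M^{-\alpha}=\mathcal{O}_{M\to\infty}(M^{-\alpha})$ with a constant \emph{uniform} in $f$. Hence $\approximationspeed(\fctclass|\approxclass^\pi)\geq\alpha$.

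Then I would invoke \eqref{eq:GSSpeed}: since $\fctclass$ is bounded, $\max_{c>0}\approximationspeed(\fctclass|\tilde{\approxclass}^{\pi,c})=\approximationspeed(\fctclass|\approxclass^\pi)\geq\alpha$. As the maximum is attained, there is some $c>0$ with $\approximationspeed(\fctclass|\tilde{\approxclass}^{\pi,c})\geq\alpha$. For this $c$, the sequence $\tilde{\approxclass}^{\pi,c}$ is $\infty$-encodable by \Cref{thm:GSInfEnc}, hence $\speed$-encodable for every $\speed>0$; \Cref{cor: approximation speed bounded from above by encoding speed} then yields $\approximationspeed(\fctclass|\tilde{\approxclass}^{\pi,c})\leq\encodingspeed(\fctclass)$. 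Combining the last two inequalities gives $\encodingspeed(\fctclass)\geq\alpha$, which is the claim.

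The only genuine subtlety --- the reason this is a separate corollary rather than a direct application of \Cref{cor:EncSpeedBallApproxClass} --- is that the raw sequence $\approxclass^\pi$, having unbounded coefficients, is \emph{not} $\infty$-encodable, so the encodability machinery cannot be applied to it directly. The boundedness hypothesis $\|f\|\leq\beta$ is what lets me pass to the well-behaved truncated sequences $\tilde{\approxclass}^{\pi,c}$ through \eqref{eq:GSSpeed}; once on that side the argument is routine. I expect the main point to verify carefully is exactly this transfer, namely that the maximum in \eqref{eq:GSSpeed} is attained (so that a single $c$ with $\approximationspeed(\fctclass|\tilde{\approxclass}^{\pi,c})\geq\alpha$ exists, rather than only a sequence of $c$'s approaching it).
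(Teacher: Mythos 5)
Your proof is correct and follows essentially the same route as the paper's: boundedness of $\mathcal{A}^\alpha(\F,\approxclass^\pi,\beta)$ gives Equation~\eqref{eq:GSSpeed}, the $\infty$-encodability of the $\tilde{\approxclass}^{\pi,c}$ transfers the encoding-speed bound, and the definition of the set gives $\approximationspeed(\fctclass|\approxclass^\pi)\geq\alpha$. The only (cosmetic) difference is that the paper bounds \emph{every} term $\approximationspeed(\fctclass|\tilde{\approxclass}^{\pi,c})$ by $\encodingspeed(\fctclass)$ and hence the whole right-hand side of \eqref{eq:GSSpeed}, which sidesteps the attainment question you flag at the end; in any case the maximum is attained (the proof of \Cref{thm:GSInfEnc} exhibits an explicit $c$ achieving it), and even without attainment a limiting argument over $\alpha'<\alpha$ would close your version.
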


\Cref{prop:AppSpace} cannot be generalized to $\mathcal{A}^\alpha(\F, \approxclass^\pi):=\bigcup_{\beta> 0} \mathcal{A}^\alpha(\F, \approxclass^\pi, \beta)$: this set is homogeneous (stable by multiplication by any scalar), thus it cannot be encoded at any positive rate. Indeed, a positive encoding rate implies total boundedness of a set, whereas homogeneity implies that the set cannot be totally bounded (at least under the assumption that the metric is induced by a norm; there should, in general, be metrics with respect to which a homogeneous set may be totally bounded).

In some situations, the converse inequality $\encodingspeed(\mathcal{A}^\alpha(\F,\approxclass^\pi,\beta)) \leq \alpha$ can typically be proven by studying the existence of large enough packing sets of $\mathcal{A}^\alpha(\F,\approxclass^\pi,\beta)$, but this falls out of the scope of this paper. The reader can refer to \cite[Sec. 4]{Kerkyacharian04Entropy} for an example.
\begin{proof}[Proof of \Cref{prop:AppSpace}]
    By the very definition of $\fctclass := \mathcal{A}^\alpha(\F,\approxclass^\pi,\beta)$, this is a bounded set so Equation~\eqref{eq:GSSpeed} of \Cref{thm:GSInfEnc} holds. For every $c>0$, \Cref{thm:unifying_thm} applied to $\tilde{\approxclass}^{\pi,c}$ of \Cref{thm:GSInfEnc}, which is $\infty$-encodable, shows that the right hand-side of Equation~\eqref{eq:GSSpeed} is bounded from above by $\encodingspeed(\fctclass)$, so that
    \[
        \encodingspeed(\fctclass) \geq \max_{c>0} \approximationspeed(\fctclass|\tilde{\approxclass}^{\pi,c}) = \approximationspeed(\fctclass|\approxclass^{\pi}).
    \]
    Finally, again by definition of $\fctclass := \mathcal{A}^\alpha(\F,\approxclass^\pi,\beta)$, we have $\approximationspeed(\fctclass|\approxclass^\pi)\geq \alpha$.
\end{proof}

Note that if $\approxclass^\pi$ was $\speed$-encodable for some $\speed>0$ large enough then \Cref{cor:Grohs} would be a special case of \Cref{cor: approximation speed bounded from above by encoding speed} whereas \Cref{prop:AppSpace} would be a special case of \Cref{cor:EncSpeedBallApproxClass}. But in this situation, $\approxclass^\pi$ has no reason to be $\speed$-encodable, whatever $\speed>0$ is (since the dictionary is arbitrary and the coefficients of the linear combinations are not bounded). This shows that \Cref{cor: approximation speed bounded from above by encoding speed} and \Cref{cor:EncSpeedBallApproxClass} actually holds more generally for some sequences $\approxclass$ that are not $\speed$-encodable, whatever $\speed>0$ is, as soon as $\approxclass$ can be recovered as a limit of non-decreasing sequences $\approxclass^c$, $c>0$, that are $\speed$-encodable, in the sense that for every $M\in\N$, if $0<c\leq c'$ then $\approxclass_M^c\subseteq\approxclass_M^{c'}$ and $\approxclass_M=\cup_{c>0}\approxclass_M^c$.
\subsection{The case of ReLU networks}\label{subsec:NNInfEnc}
When $\approxclass$ is defined with ReLU feed-forward neural networks, we now explicitly study how the property of $\infty$-encodability depends on (bounds on) the neural network sparsity, depth, and weights. In particular, \Cref{prop:InfEncNN} establishes a "simple" explicit condition under which \Cref{thm:unifying_thm} generalizes Theorem \MakeUppercase{\romannumeral 6}.4 in \cite{Elbrachter21DNNApproximationTheory} to other type of constraints. \Cref{prop:InfEncNN} is proven in appendix \ref{app:DictNNInfEnc}.
\begin{prop}\label{prop:InfEncNN}
    Consider the context of \Cref{def:QuantifApproxFamilyNN} and assume that for every $h>0$, it holds:
    \begin{equation}\label{eq:NSCArchGrowth2}
        L_{M} M \left(1+\log_{2}(r_{M})\right) = \mathcal{O}_{M\to\infty}(M^{1+h}).
    \end{equation}
    Then the approximation family $\nnclass$ (\Cref{def:ApproxFamilyNN}) defined with ReLU networks is $\infty$-encodable.
\end{prop}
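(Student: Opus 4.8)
The plan is to produce, for every $\speed>0$ and every $h>0$, an explicit $(\speed,h)$-encoding of $\nnclass$ in the sense of \Cref{def:(gamma,h)-encoding}, and the natural candidate is the $\speed$-uniformly quantized family $\mQ(\nnclass|\speed)$ of \Cref{def:QuantifApproxFamilyNN}. By construction $\mQ_M(\nnclass_M|\speed)\subset\nnclass_M$ (its parameters lie in $\mQ(\paramarchiclass{L}{N}^q(r_M),\eta_M,r_M)\subset\paramarchiclass{L}{N}^q(r_M)$, over the same architectures and supports), so the subset requirement of \Cref{def:(gamma,h)-encoding} holds automatically. The covering property --- that every $R_\theta\in\nnclass_M$ lies within $c_1M^{-\speed}$ of an element of $\mQ_M(\nnclass_M|\speed)$ --- is exactly the quantization-error estimate already available: given $\theta\in\paramarchiclass{L}{N}^q(r_M)$ supported on $S$, rounding each coordinate to the grid yields $\theta'$ supported on $S$ with $\|\theta-\theta'\|_\infty\leq\eta_M$, and \Cref{thm:BoundsLipschitzConstant} (through \Cref{lem:ErrorQuantifEta} and \Cref{thm:UnifQuantifErrorSufficientCond}) then gives $\|R_\theta-R_{\theta'}\|_p\leq c_1\Lips(M,q)\eta_M=c_1M^{-\speed}$, since $\eta_M=(M^{\speed}\Lips(M,q))^{-1}$ and $\Lips(M,q)$ is designed to dominate the Lipschitz bound $cWL^2r_M^{L-1}$. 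The one delicate point, that the rounded parameter can be kept inside the norm-ball so its realization stays in $\nnclass_M$, is handled exactly as in the proof of \Cref{thm:LipsParamAreUnifQuant}, which covers Lipschitz images of balls by centers lying in the image.

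It remains to bound the cardinality, and this is where hypothesis \eqref{eq:NSCArchGrowth2} is used. I would split
\[
\log_2\bigl|\mQ_M(\nnclass_M|\speed)\bigr| \leq \log_2(\#\archiclass{M}) + \max_{\archi{L}{N}}\log_2\bigl(\#S^M_{\archi{L}{N}}\bigr) + \max_{\archi{L}{N},S}\log_2\bigl|\mQ(\paramarchiclass{L}{N}^q(r_M),\eta_M,r_M)\bigr|
\]
and estimate the three terms separately. The number of architectures is at most $L_M M^{L_M-1}$ (depth at most $L_M$, at most $M$ choices for each of the $\leq L_M-1$ hidden widths), so $\log_2(\#\archiclass{M})=\mathcal{O}(L_M\log_2 M)$. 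For a fixed architecture the number of parameters is $d_{\archi{L}{N}}=\mathcal{O}(L_M M^2)$, hence the number of supports of cardinality at most $M$ is bounded by $(M+1)\binom{d_{\archi{L}{N}}}{M}\leq (M+1)(2eL_MM)^M$, giving $\log_2=\mathcal{O}(M(\log_2 M+\log_2 L_M))$. Finally, the grid $\eta_M\Z\cap[-r_M,r_M]$ has at most $2r_M/\eta_M+1$ points per coordinate and the support has at most $M$ coordinates, so $\log_2|\mQ(\cdots)|\leq M\log_2(3r_MM^{\speed}\Lips(M,q))$, which upon expanding $\Lips(M,q)$ becomes $\mathcal{O}(ML_M\log_2 r_M+M\log_2 M+M\log_2 L_M+\speed M\log_2 M)$, with an extra $ML_M\log_2 M$ term in the case $q=\max$ coming from the factor $\max(d_{\textrm{in}},d_{\textrm{out}},M)^{L_M-1}$.

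Collecting these bounds, the dominant contribution is $ML_M\log_2 r_M$, and by \eqref{eq:NSCArchGrowth2} it satisfies $ML_M\log_2 r_M\leq L_MM(1+\log_2 r_M)=\mathcal{O}(M^{1+h})$ for every $h>0$. The same hypothesis forces $L_MM=\mathcal{O}(M^{1+h})$, hence $L_M=\mathcal{O}(M^{h})$ and $\log_2 L_M=\mathcal{O}(\log_2 M)$, and this absorbs every remaining term ($M\log_2 M$, $M\log_2 L_M$, $L_M\log_2 M$, $\speed M\log_2 M$, and the $q=\max$ term $ML_M\log_2 M$) into $\mathcal{O}(M^{1+h'})$ for arbitrarily small $h'>0$. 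Therefore $\log_2|\mQ_M(\nnclass_M|\speed)|=\mathcal{O}(M^{1+h'})$, so $\mQ(\nnclass|\speed)$ is a $(\speed,h')$-encoding of $\nnclass$; since $\speed>0$ and $h'>0$ are arbitrary, $\nnclass$ is $\infty$-encodable. I expect the main obstacle to be precisely this cardinality bookkeeping: one must check that the combinatorial counts of architectures and of supports, together with the grid count, all stay within $\mathcal{O}(M^{1+h})$, the binding term being the grid count, which inflates like $r_M^{L_M-1}$ because the realization map's Lipschitz constant does --- exactly the growth that \eqref{eq:NSCArchGrowth2} is calibrated to control.
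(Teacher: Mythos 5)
Your proposal is correct and follows essentially the same route as the paper: the $\speed$-uniformly quantized family $\mQ(\nnclass|\speed)$ serves as the $(\speed,h)$-encoding, the covering radius comes from the Lipschitz bound of \Cref{thm:BoundsLipschitzConstant} combined with the grid-covering argument (the paper packages this as \Cref{lem:gammaencodNN} via \Cref{lem:CovNumberLipImage}), and the cardinality is controlled by exactly the same three counts (architectures, supports, grid points) with \eqref{eq:NSCArchGrowth2} absorbing the dominant $ML_M\log_2 r_M$ term. The only cosmetic imprecision is your parenthetical appeal to \Cref{lem:ErrorQuantifEta} and \Cref{thm:UnifQuantifErrorSufficientCond}, which are stated for $L^\infty$ on $[-D,D]^d$, whereas the general-$L^p$ covering estimate should be drawn directly from \Cref{thm:BoundsLipschitzConstant} as you also indicate.
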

\begin{example}[{$\infty$-encodable sequences of \emph{sparse} neural networks - \cite[Thm. \MakeUppercase{\romannumeral 6}.4]{Elbrachter21DNNApproximationTheory}}]\label{ex:InfEncNN}
    Let $\pi$ be a positive polynomial and consider, as in Definition \MakeUppercase{\romannumeral 6}.2 of \cite{Elbrachter21DNNApproximationTheory}, $\nnclass_M^\pi$ the set of functions parameterized by a ReLU neural network with weights' amplitude bounded by $\pi(M)$, depth bounded by $\pi(\log M)$ and at most $M$ non-zero parameters. Assumption \eqref{eq:NSCArchGrowth2} holds since this corresponds to the case where $L_M\leq \pi(\log(M))$ and $1\leq r_M\leq \max(1,\pi(M))$. Then, \Cref{prop:InfEncNN} guarantees that $\nnclass^\pi:=(\nnclass^\pi_M)_{M\in\N}$ is $\infty$-encodable. Given \Cref{thm:unifying_thm}, the fact that $\nnclass^\pi$ is $\infty$-encodable gives $\approximationspeed(\fctclass|\nnclass^\pi)\leq\encodingspeed(\fctclass)$ for arbitrary $p\in[1,\infty]$ and arbitrary $\fctclass\subset L^p$. This is exactly Theorem \MakeUppercase{\romannumeral 6}.4 in \cite{Elbrachter21DNNApproximationTheory}.
\end{example}
\section{Conclusion}\label{sec:conc}
We now summarize our different contributions and discuss perspectives.

\noindent \textbf{Approximation with quantized ReLU networks} We characterized the error of simple uniform quantization scheme $Q_\eta$ that acts coordinatewise as $Q_\eta(x)=\floor{x/\eta}\eta$. We proved in \Cref{thm:UnifQuantifErrorNecessaryCond} that the number of bits per coordinate must grow linearly with the depth of the network in order to provide $\eps$-error in $L^\infty([-D,D]^d)$, uniformly on a bounded set of parameters $\paramarchiclass{L}{N}^q(r)$. The proof exploits a new lower-bound on the Lipschitz constant of the parameterization of ReLU networks that we established in \Cref{thm:BoundsLipschitzConstant}. We also proved a generic upper-bound for this Lipschitz constant, which generalizes upper-bounds known in specific situations. As a consequence, we gave explicit conditions on the number of bits per coordinate that guarantees quantized ReLU networks to have the same approximation speeds as unquantized ones in generic $L^p$ spaces, see \Cref{ex:CompSpeedQuantizedNN}. We further used in \cref{{sec:UnifQuantifNN}} the upper bound on the Lipschitz constant of $\theta\mapsto R_\theta$ to recover a known approximation result of quantized ReLU networks in $L^\infty$-Sobolev spaces \cite[Thm. 2]{Ding19exprQuantizedNN} and to improve a result on the error of nearest-neighbour uniform quantization \cite[Lem. \MakeUppercase{\romannumeral 6}.8]{Elbrachter21DNNApproximationTheory}.

\textbf{Notion of $\gamma$-encodability.} This paper introduced in \Cref{def:GammaEncodability} a new property of approximation families: being $\speed$-encodable. As soon as $\approxclass$ is $\speed$-encodable in a metric space $(\F,d)$, \Cref{thm:unifying_thm} shows that there is a simple relation between the approximation speed of every set $\fctclass\subset\F$ and its encoding speed:
\begin{equation}\label{ineq:ccl}
    \min (\approximationspeed(\fctclass|\approxclass),\speed )\leq \encodingspeed(\fctclass).
\end{equation}
As seen in \cref{sec: gamma-encodable}, several classical approximation families $\approxclass$ are $\speed$-encodable for some $\speed>0$, including classical families defined with dictionaries (\cref{subsec:InfEncDict}) or ReLU neural networks (\cref{subsec:NNInfEnc}). As a consequence, $\speed$-encodability lays a generic framework that unifies several situations where Inequality \eqref{ineq:ccl} is known, such as when doing approximation with dictionaries \cite[Thm. 5.24]{Grohs15OptimallySparseDataRep}\cite[Prop. 11]{Kerkyacharian04Entropy} or ReLU neural networks \cite[Thm. \MakeUppercase{\romannumeral 6}.4]{Elbrachter21DNNApproximationTheory}.

\textbf{Perspectives. } In \Cref{thm:UnifQuantifErrorSufficientCond} and \Cref{thm:UnifQuantifErrorNecessaryCond}, we saw necessary and sufficient conditions on $\eta>0$ to guarantee that quantizing coordinatewise by $Q_\eta(x)=\floor{x/\eta}\eta$ provides $\eps$-error in $L^\infty([-D,D]^d)$, uniformly on a bounded set of parameters $\paramarchiclass{L}{N}^q(r)$. In practical applications with post-training quantization, we are only interested in parameters that can be obtained with learning algorithms such as stochastic gradient descent. Moreover, we may not be interested in $\eps>0$ arbitrary small. For instance, quantization aware training techniques~\cite{Courbariaux15BinaryConnect} have been successfully applied for ReLU neural networks with three hidden layers and 1024 neurons per hidden layer \cite{Courbariaux15BinaryConnect}. Indeed, the modified learning procedure yields in \cite{Courbariaux15BinaryConnect} a network with quantized weights in $\{-1,1\}$ that performs similarly, on the MNIST dataset, as the network that would have been obtained with the original learning procedure. Is it possible to have better guarantees if we only care about some prescribed error $\eps>0$ and a "small set" of parameters, such as parameters than can indeed be learned in practice?

Another question would be to design schemes to quantize network parameters, in a way that adapts to the architecture. In the quantization schemes covered by \Cref{thm:UnifQuantifErrorSufficientCond}, the sufficient value of $\eta>0$ to ensure a prescribed error $\eps>0$ only takes into account the depth and the width of the architecture. However, in practice the network architecture is carefully designed to meet some criterion, such as reducing the inference cost (references can be found in the paragraph "Compact network design" of \cite{ZhangYYH18LQNets}). Specificities of the architecture could be taken into consideration when designing the quantization scheme.

Another perspective is to take into account functionally equivalent parameters when designing a quantization scheme, as we now detail. Given parameters $\theta$ of a ReLU neural network (and possibly a finite dataset), we say that $\theta'$ is functionally equivalent to $\theta$, denoted $\theta'\sim\theta$, if $R_{\theta}=R_{\theta'}$ (resp. equality on the considered dataset). Due to the positive homogeneity of the ReLU function, there are uncountably many equivalent parameters to $\theta$ that can be obtained by rescaling the coordinates of $\theta$ (but these are not the only ones since permuting coordinates can also lead to functionally equivalent parameters). When quantizing $\theta$, it would be interesting to take these equivalent parameters into account.

Finally, what is the minimum number of bits per coordinate needed to keep the same approximation speeds? While the question remains open, \Cref{thm:ApproxSpeedQuantifNN} makes a first step in that direction by giving an upper-bound.
\bibliographystyle{abbrv}

\appendices

\section{Norms}\label{app: norms}
\noindent
\begin{definition}[$p$-norm]
    Let $d\in\mathbb{N}$. For an exponent $p\in[1,\infty]$, the $p$-norm on $\R^d$ is defined by:
    \begin{align*}
        \forall x=(x_i)_{i=1,\dots,d}\in\mathbb{R}^d, \|x\|_p:= \left\{\begin{array}{ll}
                                                                           \left(\sum_{i=1}^d |x_i|^p\right)^{\frac{1}{p}} & \mbox{ if } p<\infty, \\
                                                                           \sup\limits_{i=1,\dots,d} |x_i|                 & \mbox{ if } p=\infty.
                                                                       \end{array}\right.
    \end{align*}
\end{definition}

\begin{definition}($\vertiii{\cdot}_p$)
    Let $d_1,d_2\in\mathbb{N}$. The operator norm $\vertiii{\cdot}_p$ on $\mathbb{R}^{d_2\times d_1}$ associated with the exponent $p\in[1,\infty]$ is defined by:
    \[
        \forall M\in\mathbb{R}^{d_2\times d_1}, \vertiii{M}_p:=\sup\limits_{\substack{x\in\mathbb{R}^{d_1}\\ x\neq 0}} \frac{\|Mx\|_p}{\|x\|_p}.
    \]
\end{definition}
\section{Characterization of the $L^p$ spaces containing all the functions realized by ReLU networks}\label{app:CharacLpSpaces}
\noindent
\begin{proof}[Proof of of \Cref{lem:CharacLpSpaces}]
    Assume that $C_{p}(\Omega,\mu) < \infty$ and consider the realization $R_{\param}$ of an arbitrary ReLU network on an arbitrary architecture with input dimension $N_0=d_{\textrm{in}}$ and arbitrary output dimension $N_L$. It is known \cite[Thm. 2.1]{arora2018understandingReLUnetworks} that $R_{\param}$ is (continuous and) piecewise linear, so that there is a partition of $\Omega$ into finitely many $\Omega_{i}$, $1 \leq i \leq n$ such that $R_{\param} = \sum_{i=1}^{n} \chi_{\Omega_{i}} f_{i}$ where $\chi_{E}(x)$ is the characteristic function of the set $E$ and each $f_{i}$ is an affine function. To prove the result it is thus sufficient to show that $\chi_{E} f \in L^p(\Omega\to\R^{N_L},\mu)$ for each set $E \subset \Omega$ and each affine function $f$. Since $\|\chi_{E}g\|_{p} \leq \|g\|_{p}$ for any $g$ it is enough to prove that any affine function is in the desired space. For this, consider arbitrary $A \in\R^{N_L \times N_0}, b \in\R^{N_L}$, and $f: x \mapsto Ax+b$. Denoting $c(f) := \max(\vertiii{A}_{\infty},\|b\|_{\infty})$ (the notation $\vertiii{\cdot}$ is defined in appendix \ref{app: norms}) we observe that $\|f(x)\|_{\infty} \leq \vertiii{A}_{\infty}\|x\|_{\infty}+\|b\|_{\infty} \leq c(f) (\|x\|_{\infty}+1)$ so that $\|f\|_{p} \leq c(f) C_{p}(\Omega,\mu) < \infty$, showing the result.

    Conversely, assume that for every architecture $(L,\mathbf{N})$ and parameter $\theta \in \Theta_{L,\mathbf{N}}$ we have $R_{\theta} \in L^p(\Omega\to\R^{N_{L}},\mu)$. Specializing to an architecture with $L=1$, $N_{1}=N_{0}=d_{\textrm{in}}$, consider
    $\theta = (W_{1},b_{1})$ with $W_{1}$ the identity matrix and $b_{1}$ the zero vector,
    $\theta' = (W'_{1},b'_{1})$ with $W'_{1}$ the zero matrix and $b'_{1}$ any vector with $\|b'_{1}\|_{\infty}=1$.
    We have $R_{\theta}(x) = x$ while $R_{\theta'}(x) = b'_{1}$. For $p<\infty$ we have $\int_{x \in \Omega} \|x\|_{\infty}^{p} \mathrm{d}\mu(x)  = \|R_{\theta}\|_{p}^{p} < \infty$ and $\int_{x \in \Omega} 1 \mathrm{d}\mu(x)  = \|R_{\theta'}\|_{p}^{p} < \infty$. By the triangle inequality we get $C_{p}(\Omega,\mu)  < \infty$. The case $p=\infty$ is similar.
\end{proof}
\section{Optimality of a bound on $\|R_\theta(x)-R_{\theta'}(x)\|_q$} \label{app:optimality}
\noindent We generalize a known inequality established for $q=\infty$  \cite[Eq. (37)]{Elbrachter21DNNApproximationTheory}\cite[Eq. (3.12)]{Bolcskei2018optimal} to arbitrary $q$-th norm $q\in[1,\infty]$. Moreover, we prove its optimality. This inequality is used in appendix \ref{app: proof Lips constant} to bound the Lipschitz constant of the parameterization of ReLU networks. With $I_{m \times m}$ the identity matrix in dimension $m$ and $0_{m \times n}$ the $m\times n$ matrix full of zeros, we introduce the following notation for ``rectangular identity matrices'': for $m < n$, we set $I_{m \times n} = (I_{m \times m}; 0_{m \times (n-m)})$, while for $m>n$ we set $I_{m \times n} = I_{n \times m}^{\top}$.
\begin{lemma}\label{lem: ineq norm(rparam-rparam') and optimality}
    Let $\archi{L}{N}$ be an architecture with any depth $L\geq 1$ and $\param=(W_1,\dots,W_L,b_1,\dots,b_L)$, $\param'=(W'_1,\dots,W'_L,b'_1,\dots,b'_L)\in\paramarchiclass{L}{N}$ (see Equation~\eqref{eq: def Theta L N et dim(L,N)} for the definition of $\paramarchiclass{L}{N}$) be parameters associated to this architecture. For every $\ell=1,\dots,L-1$, define $\param'_{\ell}$ as the parameter deduced from $\param'$, associated to the architecture $(\ell,(N_0,\dots, N_\ell))$:
    \[
        \param'_{\ell}=(W'_1,\dots,W'_\ell,b'_1,\dots,b'_\ell).
    \]
    Then for every exponent $q\in[1,\infty]$ and for every $x\in\mathbb{R}^{N_0}$, the realization of neural networks with any $1$-Lipschitz activation function $\varrho$ such that $\varrho(0)=0$ satisfy:
    \begin{align}\label{ineq: rparam-rparam'}
        \|R_{\param}(x)-R_{\param'}(x)\|_q \leqslant {} & \sum_{\ell=1}^L\left(\prod_{k=\ell+1}^L\vertiii{W_k}_q\right)\times\vertiii{W_\ell-W_\ell'}_q\times\|R_{{\param}'_{\ell-1}}(x)\|_q \\
                                                        & + \sum_{\ell=1}^L\left(\prod_{k=\ell+1}^L\vertiii{W_k}_q\right)\|b_\ell-b'_\ell\|_q,\nonumber
    \end{align}
    where the definition of the $q$-th norm and the operator norm of a matrix are recalled in appendix \ref{app: norms}, and where we set by convention $R_{{\param}'_{\ell-1}}(x)=x$ if $\ell=1$, and $\prod_{k=\ell+1}^L\vertiii{W_k}_q = 1$ if $\ell=L$.

    Let $\lambda_1,\dots,\lambda_L\geqslant 0$ and $\eps\geqslant 0$ and consider an input vector $x \in \mathbb{R}^{d_{\textrm{in}}}$ with nonnegative entries and supported on the first $s := \min_{\ell} N_{\ell}$ coordinates. There is equality in~\eqref{ineq: rparam-rparam'} for the parameters $\param=(W_1,\dots,W_L,b_1,\dots,b_L)$ and $\param'=(W'_1,\dots,W'_L,b'_1,\dots,b'_L)$ defined by, for every $\ell=1,\dots, L$:
    \begin{equation}\label{eq:ParamEqCase}
        W_\ell = \lambda_\ell I_{N_{\ell} \times N_{\ell-1}},\quad
        W'_\ell = (1+\eps)W_\ell,\quad b_\ell = b'_\ell = 0.
    \end{equation}
\end{lemma}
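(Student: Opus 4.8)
The plan is to prove the inequality~\eqref{ineq: rparam-rparam'} by induction on the depth $L$, and then to certify its sharpness by computing both sides explicitly for the parameters in~\eqref{eq:ParamEqCase}.

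For the induction the base case $L=1$ is immediate: $R_{\theta}(x)-R_{\theta'}(x)=(W_1-W'_1)x+(b_1-b'_1)$, and the triangle inequality together with $\|Mv\|_q\le\vertiii{M}_q\|v\|_q$ gives the claim, using the conventions $R_{\theta'_0}(x)=x$ and $\prod_{k=2}^{1}\vertiii{W_k}_q=1$. For the inductive step I assume the bound at depth $L-1$, let $\theta_{L-1}$ and $\theta'_{L-1}$ denote the depth-$(L-1)$ truncations of $\theta$ and $\theta'$, and set $z:=\varrho(R_{\theta_{L-1}}(x))$, $z':=\varrho(R_{\theta'_{L-1}}(x))$ for the post-activation outputs of layer $L-1$, so that $R_{\theta}(x)=W_Lz+b_L$ and $R_{\theta'}(x)=W'_Lz'+b'_L$. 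Inserting $\pm W_Lz'$ and applying the triangle inequality,
\[
\|R_{\theta}(x)-R_{\theta'}(x)\|_q\le \vertiii{W_L}_q\|z-z'\|_q+\vertiii{W_L-W'_L}_q\|z'\|_q+\|b_L-b'_L\|_q.
\]
Because $\varrho$ is $1$-Lipschitz with $\varrho(0)=0$, one has $\|z-z'\|_q\le\|R_{\theta_{L-1}}(x)-R_{\theta'_{L-1}}(x)\|_q$ and $\|z'\|_q\le\|R_{\theta'_{L-1}}(x)\|_q$. Bounding the first term by the induction hypothesis and multiplying the whole depth-$(L-1)$ bound by $\vertiii{W_L}_q$ turns every partial product $\prod_{k=\ell+1}^{L-1}\vertiii{W_k}_q$ into $\prod_{k=\ell+1}^{L}\vertiii{W_k}_q$, producing exactly the summands $\ell=1,\dots,L-1$ of~\eqref{ineq: rparam-rparam'}; the two remaining terms above are precisely the $\ell=L$ summands, which closes the induction.

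For the equality case I specialize to the ReLU activation, whose only relevant feature here is that $\rho(t)=t$ for $t\ge 0$. With the parameters in~\eqref{eq:ParamEqCase} the biases vanish, each rectangular identity has unit operator norm so $\vertiii{W_\ell}_q=\lambda_\ell$, and $W_\ell-W'_\ell=-\eps\lambda_\ell I_{N_\ell\times N_{\ell-1}}$ gives $\vertiii{W_\ell-W'_\ell}_q=\eps\lambda_\ell$. I then trace the input $x$, which is nonnegative and supported on the first $s=\min_\ell N_\ell$ coordinates: the rectangular identities preserve both this support and the nonnegativity, and ReLU therefore never truncates. Writing $\xi$ for the vector carrying the entries of $x$ in its first $s$ coordinates (in the relevant ambient dimension), one gets $R_{\theta_\ell}(x)=\big(\prod_{k=1}^{\ell}\lambda_k\big)\xi$ and $R_{\theta'_\ell}(x)=(1+\eps)^{\ell}\big(\prod_{k=1}^{\ell}\lambda_k\big)\xi$, hence $\|R_{\theta'_{\ell-1}}(x)\|_q=(1+\eps)^{\ell-1}\big(\prod_{k=1}^{\ell-1}\lambda_k\big)\|x\|_q$. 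The left-hand side of~\eqref{ineq: rparam-rparam'} equals $\big(\prod_{k=1}^{L}\lambda_k\big)\big((1+\eps)^{L}-1\big)\|x\|_q$, while on the right-hand side the $\lambda$-factors telescope, $\big(\prod_{k=\ell+1}^{L}\lambda_k\big)\lambda_\ell\big(\prod_{k=1}^{\ell-1}\lambda_k\big)=\prod_{k=1}^{L}\lambda_k$, leaving $\eps\big(\prod_{k=1}^{L}\lambda_k\big)\|x\|_q\sum_{\ell=1}^{L}(1+\eps)^{\ell-1}$; the geometric sum equals $\big((1+\eps)^{L}-1\big)/\eps$, so the two sides coincide (the case $\eps=0$ being trivial).

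I expect the induction to be routine; its only delicate point is to keep the pre-activation $R_{\theta'_{\ell-1}}(x)$ distinct from the post-activation $z'$ and to absorb the gap via $\|\varrho(v)\|_q\le\|v\|_q$. The genuinely subtle part is the equality case, where one must guarantee that \emph{all} the inequalities invoked in the induction are saturated at once. This rests on two facts: the input is nonnegative and supported within the first $s$ coordinates, so each rectangular identity attains its unit operator norm and ReLU acts as the identity throughout; and the two perturbation vectors $W_L(z-z')$ and $(W_L-W'_L)z'$ are collinear (both nonpositive multiples of $\xi$), which is exactly what makes the final triangle inequality tight. The remaining effort is the bookkeeping of the telescoping products and the geometric summation needed to match the two closed forms.
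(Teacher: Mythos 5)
Your proof is correct and follows essentially the same route as the paper's: the same induction on depth with the $\pm W_L z'$ insertion and the $1$-Lipschitz/$\varrho(0)=0$ absorption of the activation, and the same direct evaluation of both sides of \eqref{ineq: rparam-rparam'} for the parameters \eqref{eq:ParamEqCase} via the telescoping products and the geometric sum. Your added remark on the collinearity of the two perturbation vectors is a nice explanation of \emph{why} the triangle inequality saturates, but it is not needed once both closed forms are computed and seen to coincide, which is exactly what the paper does.
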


\begin{proof} The proof of Inequality \eqref{ineq: rparam-rparam'} follows by induction on $L\in\N$ in a similar way as in the case $q=\infty$ \cite[Eq. (37)]{Elbrachter21DNNApproximationTheory}\cite[Eq. (3.12)]{Bolcskei2018optimal}. For $L=1$, this is just saying that
    \begin{align*}
        \|R_{\param}(x)-R_{\param'}(x)\|_q {} & = \|W_1x+b_1 - W_1'x - b'_1\|_q                    \\
        {}                                    & \leq \vertiii{W_1-W'_1}_q\|x\|_q + \|b_1-b'_1\|_q.
    \end{align*}
    Assume that the property holds true for $L\geqslant 1$. Then at rank $L+1$ (using in the last inequality that the activation function $\rho$ is $1$-Lipschitz and $\varrho(0)=0$):
    \begin{align*}
        \|R_{\param}(x)-R_{\param'}(x)\|_q = {} & \|W_{L+1}\rho(R_{\param_L}(x))+b_{L+1} - W'_{L+1}\rho(R_{\param'_L}(x))-b'_{L+1}\|_q \\
        = {}                                    & \| W_{L+1}\left(\rho(R_{\param_L}(x))-\rho(R_{\param'_L}(x))\right)                  \\
        {}                                      & + (W_{L+1}-W'_{L+1})\rho(R_{\param'_L}(x)) + b_{L+1}-b'_{L+1}\|_q                    \\
        \leq {}                                 & \vertiii{W_{L+1}}_q\|\rho(R_{\param_L}(x)) - \rho(R_{\param'_L}(x))\|_q              \\
        {}                                      & + \vertiii{W_{L+1}-W'_{L+1}}_q\|\rho(R_{\param'_L}(x))\|_q+ \|b_{L+1}-b'_{L+1}\|_q   \\
        \leq {}                                 & \vertiii{W_{L+1}}_q\|R_{\param_L}(x) - R_{\param'_L}(x)\|_q                          \\
        {}                                      & + \vertiii{W_{L+1}-W'_{L+1}}_q\|R_{\param'_L}(x)\|_q+ \|b_{L+1}-b'_{L+1}\|_q.
    \end{align*}
    Using the induction hypothesis gives the desired result.

    For the equality case, recall the definition of the parameters $\theta$ and $\theta'$ in \Cref{eq:ParamEqCase}. Let $\lambda = \prod_{\ell=1}^L \lambda_\ell$. Since $x = (y^\top,0_{1 \times (d_{\textrm{in}}-s)})^{\top}$ with $y \in \mathbb{R}_{+}^{s}$ we have $\varrho(W_{1}x+b_{1}) = \lambda_1 (y^{\top},0_{1 \times (N_{1}-s)})^{\top}$. By induction on $\ell=1,\dots,L$, we can show $R_\theta(x)=\lambda (y^{\top},0_{1 \times (N_{L}-s)})^{\top}$, and similarly $R_{\theta'}(x)=(1+\eps)^L\lambda (y^{\top},0_{1 \times (N_{L}-s)})^{\top}$. This means that:
    \begin{align*}
        \|R_{\param}(x)-R_{\param'}(x)\|_q {} & = \|\lambda y - (1+\eps)^L\lambda y\|_q \\
        {}                                    & = ((1+\eps)^L - 1)\lambda\|x\|_q.
    \end{align*}
    Moreover, for every $\ell=1,\dots, L$, it is easy to check that $\vertiii{W_\ell}_q=\lambda_\ell$, $\vertiii{W'_\ell}_q=(1+\eps)\lambda_\ell$ and $\vertiii{W_\ell-W'_\ell}_q=\eps\lambda_\ell$ so that:
    \begin{align*}
        \left(\prod_{k=\ell+1}^L\vertiii{W_k}_q\right) {} & \times\vertiii{W_\ell-W_\ell'}_q\times\|R_{{\param}'_{\ell-1}}(x)\|_q                                                            \\
        {}                                                & = \left(\prod_{k=\ell+1}^L\lambda_k\right)\times\eps\lambda_\ell\times \left(\prod_{k=1}^{\ell-1}(1+\eps)\lambda_k\right)\|x\|_q \\
        {}                                                & = (1+\eps)^{\ell-1}\eps \lambda\|x\|_q,
    \end{align*}
    and:
    \begin{align*}
        \left(\prod_{k=\ell+1}^L\vertiii{W_\ell}_q\right)\|b_\ell-b'_\ell\|_q = 0.
    \end{align*}
    This yields the equality case, since:
    \begin{multline*}
        \sum_{\ell=1}^L\left(\prod_{k=\ell+1}^L\vertiii{W_k}_q\right) \times\vertiii{W_\ell-W_\ell'}_q\times\|R_{{\param}'_{\ell-1}}(x)\|_q {}
        + \sum_{\ell=1}^L\left(\prod_{k=\ell+1}^L\vertiii{W_\ell}_q\right)\|b_\ell-b'_\ell\|_q {}  \\ = \sum_{\ell=1}^L (1+\eps)^{\ell-1}\eps \lambda\|x\|_q
        {}  = \frac{(1+\eps)^L - 1}{1+\eps-1}\eps\lambda\|x\|_q
        {}  = ((1+\eps)^L - 1)\lambda\|x\|_q.
    \end{multline*}
\end{proof}

\section{Lipschitz parameterization of ReLU networks (Proof of \Cref{thm:BoundsLipschitzConstant})}\label{app: proof Lips constant}
\noindent Recall that we fixed a set $L^p(\Omega\to\R^{d_{\textrm{out}}},\mu)$ containing all functions realized by ReLU neural networks with input dimension $d_{\textrm{in}}$ and output dimension $d_{\textrm{out}}$. The parameter set $\paramarchiclass{L}{N}^q(r)$ is defined in \Cref{def: classe de param associee a une archi et G}.

First, \Cref{lem: ineq norm(rparam-rparam') and optimality} applied to any $\theta \in \paramarchiclass{L}{N}$, and $\theta'=(0,\ldots,0) \in \paramarchiclass{L}{N}$ yields for every $x \in \Omega$:
\begin{align}\label{eq:TmpBoundRealiz}
    \|R_{\param}(x)\|_q {} & \leqslant \prod_{k=1}^{L}\vertiii{W_k}_q\|x\|_q + \sum_{\ell=1}^{L}\left(\prod_{k=\ell+1}^{L}\vertiii{W_k}_q\right)\|b_\ell\|_q,
\end{align}
using that $\|R_{{\param}'_{\ell-1}}(x)\|_q = \|x\|_{q}$ for $\ell=1$ (by convention) and $\|R_{{\param}'_{\ell-1}}(x)\|_q=0$ for each $\ell \geq 2$ (since $\theta' = 0$).

Let $\param, \param'\in\paramarchiclass{L}{N}$. We are going to bound $\|R_{\param}-R_{\param'}\|_{p,\|\cdot\|}$ from above using Inequality \eqref{ineq: rparam-rparam'} of \Cref{lem: ineq norm(rparam-rparam') and optimality}. First, we introduce useful notations to write things compactly. Define for every $i,j\in\mathbb{N}$:
\begin{align*}
    \Pi_{i,j} {} & := \prod_{k=i}^j \vertiii{W_k}_q\mbox{ and }\Pi'_{i,j}: = \prod_{k=i}^j \vertiii{W'_k}_q \mbox{ if } i\leqslant j, \\
    \Pi_{i,j} {} & := \Pi'_{i,j} := 1 \mbox{ otherwise}.
\end{align*}

For $\ell=2,\dots, L$, we start by bounding $\|R_{\param'_{\ell-1}}(x)\|_{q}$ by a simple function of $x\in\Omega$, since this term appears on the right-handside of Inequality \eqref{ineq: rparam-rparam'}. Using~\eqref{eq:TmpBoundRealiz} for the architecture $(\ell-1,(N_{0},\ldots,N_{\ell-1}))$ we have:
\begin{align*}
    \|R_{\param'_{\ell-1}}(x)\|_q {} & \leqslant \prod_{k=1}^{\ell-1}\vertiii{W'_k}_q\|x\|_q + \sum_{k=1}^{\ell-1}\left(\prod_{j=k+1}^{\ell-1}\vertiii{W'_j}_q\right)\|b'_k\|_q \\
    {}                               & = \Pi'_{1,\ell-1}\|x\|_q + \sum_{k=1}^{\ell-1}\Pi'_{k+1,\ell-1}\|b'_k\|_q.
\end{align*}
If $\Omega \subseteq \mathbb{R}_{+}^{d_{\textrm{in}}}$ and $N_{0} = \min_{0 \leq \ell \leq L} N_{\ell}$ then for every $x \in \Omega$, the parameters defined in \Cref{eq:ParamEqCase} are such that the previous inequality is an equality.

Denote $c_{0}$ a constant such that for every $y\in\R^{d_\textrm{out}}$, $\|y\|\leq c_0\|y\|_q$. Note that if $\|\cdot\|=\|\cdot\|_s$ for $s\in[1,\infty]$, then we can take $c_{0}=d_{\textrm{out}}^{\max\left(0,\frac{1}{s}-\frac{1}{q}\right)}$. Now, using the previous inequality and integrating both sides of Inequality \eqref{ineq: rparam-rparam'} of \Cref{lem: ineq norm(rparam-rparam') and optimality}, we get for $1 \leq p < \infty$:
\begin{multline*}
    \biggl(\int_{x\in\Omega} \|R_{\param}(x) - R_{\param'}(x)\|^p \mathrm{d}\mu(x)\biggr)^{\frac{1}{p}} {}  \leq c_{0}
    \biggl(\int_{x\in\Omega}\|R_{\param}(x) - R_{\param'}(x)\|_q^p \mathrm{d}\mu(x)\biggr)^{\frac{1}{p}} \\
    {}  \leq c_{0}\biggl(\int_{x\in\Omega}\biggl[\sum_{\ell=1}^L \Pi_{\ell+1,L}\left(\Pi'_{1,\ell-1}\|x\|_q + \sum_{k=1}^{\ell-1}\Pi'_{k+1,\ell-1}\|b'_k\|_q\right)\\
        {}  \times \vertiii{W_\ell-W'_\ell}_q + \sum_{\ell=1}^L\Pi_{\ell+1,L}\times \|b_\ell-b'_\ell\|_q\biggr]^p\mathrm{d}\mu(x)\biggr)^{\frac{1}{p}}.
\end{multline*}
A trivial adaptation yields a similar result for $p = \infty$.

If $\Omega \subseteq \mathbb{R}_{+}^{d_{\textrm{in}}}$, $N_{0} = \min_{0 \leq \ell \leq L} N_{\ell}$, and if $\|\cdot\|=\|\cdot\|_q$ so that we can take $c_0:=1$, then the previous inequality is an equality for the parameters defined in Equation~\eqref{eq:ParamEqCase}.

Note that in the special case $p=\infty$, if we only assume that $\Omega \subseteq \mathbb{R}_{+}^{d_{\textrm{in}}}$ and $\|\cdot\|=\|\cdot\|_q$ (but not that $N_{0} = \min_{0 \leq \ell \leq L} N_{\ell}$), denoting by $N_{\min}:= \min_{0 \leq \ell \leq L} N_{\ell}$, then it holds for the parameters of Equation~\eqref{eq:ParamEqCase} and for every $x\in\Omega$ supported on the first $N_{\min}$ coordinates:
\[
    \|R_{\theta}(x)-R_{\theta'}(x)\| =  \sum_{\ell=1}^L \Pi_{\ell+1,L}\left(\Pi'_{1,\ell-1}\|x\|_q + \sum_{k=1}^{\ell-1}\Pi'_{k+1,\ell-1}\|b'_k\|_q\right) \times \vertiii{W_\ell-W'_\ell}_q + \sum_{\ell=1}^L\Pi_{\ell+1,L}\times \|b_\ell-b'_\ell\|_q.
\]

Recall that $W=\max\limits_{\ell=0,\dots L} N_\ell$ is the width of the network. For every matrix $M$ with input/output dimension bounded by $W$ and every vector $b$ with dimension bounded by $W$, denoting by $\|M\|_{\max}:=\max_{i,j} |M_{i,j}|$, standard results on equivalence of norms guarantees that for every $1\leq q\leq \infty$, it holds $\|b\|_q\leq W^{1/q}\|b\|_\infty\leq W\|b\|_\infty$ and $\max(\vertiii{M}_1,\vertiii{M}_{\infty})\leq W\|M\|_{\max}$. The latter, with Riesz-Thorin theorem \cite[Chap.2, Thm 4.3]{DeVore1993}, guarantee that for every $1\leq q\leq \infty$:
\begin{equation}\label{eq:OperatorNormEquivalence}
    \vertiii{M}_q\leqslant W\|M\|_{\max} \mbox{ and } \|b\|_q\leqslant W\|b\|_{\infty}.
\end{equation}
We deduce that for every $\ell=1,\dots, L$:
\begin{align*}
    \max\left(   \vertiii{W_\ell-W'_\ell}_q, \| b_\ell-b'_\ell\|_q\right) \leqslant W\|\param-\param'\|_{\infty}.
\end{align*}
This time, this is not an equality for the parameters defined in \Cref{eq:ParamEqCase}. For them it holds instead, assuming that all $\lambda_{\ell}$ are equal:
\[
    \vertiii{W_\ell-W'_\ell}_q =\eps\lambda_{\ell} =  \|W_\ell-W'_\ell\|_{\max} =  \|\param-\param'\|_{\infty}, \|b_\ell-b'_\ell\|_q = 0.
\]
Using the previous inequalities, we get for $1 \leq p < \infty$:
\begin{align*}
    \|R_{\param} - R_{\param'}\|_{p,\|\cdot\|}
    \leqslant {} & \biggl(\int_{x\in\Omega}\biggl[\sum_{\ell=1}^L \Pi_{\ell+1,L}\left(\Pi'_{1,\ell-1}\|x\|_q + \sum_{k=1}^{\ell-1}\Pi'_{k+1,\ell-1}\|b'_k\|_q\right) \\
    {}           & + \sum_{\ell=1}^L\Pi_{\ell+1,L}\biggr]^p\mathrm{d}\mu(x)\biggr)^{\frac{1}{p}} c_{0} W\|\param-\param'\|_{\infty}
\end{align*}
with a trivial adaptation for $p=\infty$.
Now, let's specialize this for $\param, \param'\in\paramarchiclass{L}{N}^q(r)$. It holds $\max(\Pi_{i,j}, \Pi'_{i,j})\leq r^{j-i+1}$ for $i \leq j$, and the same also holds for $i = j+1$ by definition of $\Pi_{i,j}$. Thus:
\begin{align*}
    {} & \sum_{\ell=1}^L\Pi_{\ell+1,L}\left(1+\Pi'_{1,\ell-1}\|x\|_q + \sum_{k=1}^{\ell-1}\Pi'_{k+1,\ell-1}\|b'_k\|_q\right)                                             \\
    {} & \leqslant \sum_{\ell=1}^Lr^{L-\ell}\left(1+r^{\ell-1}\|x\|_q + \sum_{k=1}^{\ell-1}r^{\ell-k}\right) \mbox{ since } \param,\param'\in\paramarchiclass{L}{N}^q(r) \\
    {} & = Lr^{L-1}\|x\|_q+ \sum_{\ell=1}^Lr^{L-\ell} + \sum_{\ell=1}^L\sum_{k=1}^{\ell-1}r^{L-k}                                                                        \\
    {} & \leqslant Lr^{L-1}\|x\|_q + Lr^{L-1} + L(L-1)r^{L-1} \mbox{ since } r\geqslant 1                                                                                \\
    {} & \leqslant L^2r^{L-1}(\|x\|_q + 1) \mbox{ since } L\geqslant 1.
\end{align*}
If we define:
\[
    c :=
    \begin{cases}
        c_{0} \left(\int_{x\in\Omega} (\|x\|_{q}+1)^p\mathrm{d}\mu(x)\right)^{1/p} & \mbox{ if } p<\infty, \\
        c_{0} \esssup\limits_{x\in\Omega} \|x\|_{q} + 1                            & \mbox{ if } p=\infty.
    \end{cases}
\]
where we recognize in the second factor the constant $C_{p}(\Omega,\mu)$ from \Cref{lem:CharacLpSpaces} when $q=\infty$, then we finally get \eqref{eq:LipsConstDNN}.
Let us now explicit $c$ in specific situations where
$\Omega=[-D,D]^d$ for some $D>0$, $\mu$ is the Lebesgue measure and $\|\cdot\|=\|\cdot\|_q$ so that we can take $c_0=1$. If $q = \infty$ we get $c=C_{p}(\Omega,\mu) \leq (D+1)(2D)^{d/p}$.
If $p=\infty$, then
\[
    c= \esssup\limits_{x\in\Omega} \|x\|_{q} + 1 = Dd^{1/q}+1.
\]
Indeed, the essential supremum is actually a maximum in this case and $\|x\|_q\leq d^{1/q}\|x\|_\infty\leq d^{1/q}D$ for every $x\in[-D,D]^d$ with equality for $x=(D,\ldots, D)^T$.

Let us now discuss the optimality of \eqref{eq:LipsConstDNN}. It can be checked that if $\Omega \subseteq \mathbb{R}_{+}^{d_{\textrm{in}}}$, $\|\cdot\|=\|\cdot\|_q$, so that we can take $c_0:=1$, and if $N_{0} = \min_{0 \leq \ell \leq L} N_{\ell}$, then the parameters $\param,\param'$ defined in \Cref{eq:ParamEqCase} with $\lambda_1=\ldots=\lambda_L=\frac{r}{1+\eps}\geq 0$ are in $\paramarchiclass{L}{N}^q(r)$ and satisfy $\|R_{\param}-R_{\param'}\|_p = c_0\left(\int_{x\in\Omega} \|x\|_q^p\mathrm{d}\mu(x))\right)^{\frac{1}{p}}r^{L-1}\sum_{\ell=1}^L \left(\frac{1}{1+\eps}\right)^{L-\ell}\|\theta-\theta'\|_\infty$.

In the special case where $p=\infty$, if we only assume that $\Omega \subseteq \mathbb{R}_{+}^{d_{\textrm{in}}}$ and $\|\cdot\|=\|\cdot\|_q$ then the parameters $\param,\param'$ defined in \Cref{eq:ParamEqCase} with $\lambda_1=\ldots=\lambda_L=\frac{r}{1+\eps}\geq 0$ are in $\paramarchiclass{L}{N}^q(r)$ and if we denote $N_{\min}:= \min_{0 \leq \ell \leq L} N_{\ell}$ and $\Omega_{\min}$ the set of $x\in\Omega$ supported on the first $N_{\min}$ coordinates:
\[
    \esssup\limits_{x\in\Omega_{\min}} \|R_\theta(x)-R_{\theta'}(x)\|\geq \left(\esssup\limits_{x\in\Omega_{\min}} \|x\|_q\right)r^{L-1}\sum_{\ell=1}^L \left(\frac{1}{1+\eps}\right)^{L-\ell}\|\theta-\theta'\|_\infty.
\]
This yields the conclusion.
\section{Nearest-neighbour uniform quantization on ReLU networks}\label{app:CharacUnifQuantif}
%
\begin{proof}[Proof of \Cref{thm:UnifQuantifErrorNecessaryCond}]
    Consider $\eps,\eta>0$ such that \eqref{eq:EtaUnifQuantEpsDist} holds true. We must prove that $\min(r,\eta)\leq \frac{\eps}{c'r^{L-1}}$. With $I_{m \times m}$ the identity matrix in dimension $m$ and $0_{m \times n}$ the $m\times n$ matrix full of zeros, we introduce the following notation for ``rectangular identity matrices'': for $m < n$, we set $I_{m \times n} = (I_{m \times m}; 0_{m \times (n-m)})$, while for $m>n$ we set $I_{m \times n} = I_{n \times m}^{\top}$. Consider $0 < a <\eta$ and define $\param=(W_1,\dots, W_L, b_1,\dots, b_L)$ with $b_1=\dots=b_L=0$, $W_1 = \lambda I_{N_{1} \times N_{0}}$ with $\lambda :=  \min(r,(\eta-a))$, and for every layer $\ell\geq 2$, $W_\ell= r I_{N_{\ell} \times N_{\ell-1}}$. Since $0<\lambda\leq \eta-a < \eta$, we have $Q_\eta(\lambda)= 0$ so that $Q_\eta(W_1)=0$. Since $b_1=0$, we also have $Q_\eta(b_1)=0$ so that $R_{Q_\eta(\theta)}=0$. We deduce that for every $x\in[0,D]^d$ supported in the first  $N_{\min}$ coordinates:
    \[
        \|R_{\theta}(x)-R_{Q_\eta(\param)}(x)\|_q
        = \|\lambda r^{L-1}x - 0\|_{q} = \lambda r^{L-1}\|x\|_{q}.
    \]
    Since the maximum of $\|x\|_q$ over all $x \in [0,D]^{d}$ supported in the first $N_{\min}$ coordinates is $c' = D N_{\min}^{1/q}$, we get:
    \[
        c' \lambda r^{L-1}\leq
        \max\limits_{x\in[-D,D]^d} \|R_{\theta}(x)-R_{Q_{\eta}(\theta)}(x)\|_q
    \]
    As $\vertiii{W_1}_q = \lambda =  \min(r,(\eta-a))\leq r$, for every $\ell\geq 2$, $\vertiii{W_\ell}_q = r$ and for every $\ell\geq 1$, $\|b_\ell\|_q=0\leq r$, we have $\theta\in\paramarchiclass{L}{N}^q(r)$ so \eqref{eq:EtaUnifQuantEpsDist} applies. This implies $c' \lambda r^{L-1}\leq \eps$, \ie $\min(r,(\eta-a))\leq \eps/(c' r^{L-1})$. This holds for every $0<a<\eta$: taking the limit $a\to 0^+$ yields the result.
\end{proof}
\begin{proof}[Proof of \Cref{prop:Lem6.8}]
    Let us see that \Cref{lem:ErrorQuantifEta} applies with $\Theta=\paramarchiclass{L}{N}^{\max}(\eps^{-k})$, $q=\infty$ and $r = W\eps^{-k}$. First, it holds $\paramarchiclass{L}{N}^{\max}(\eps^{-k})\subset\paramarchiclass{L}{N}^{\infty}(W\eps^{-k})$ (see \Cref{rmk: parameter set with Frobenius or max norm}). Since $\|\theta-\theta'\|_\infty\leq\eta/2 \leq\eps^{m}/2$, the $\eta$ to use for \Cref{lem:ErrorQuantifEta} is $\eta:=\eps^m /2>0$. \Cref{lem:ErrorQuantifEta} then gives the result if $\eps^m /2\leq \eps\left(cWL^2r^{L-1}\right)^{-1}$, where $c:=1+Dd_{\textrm{in}}^{1/q} = 1+D$. The latter condition is equivalent to $cWL^2r^{L-1}/2\leq \eps^{m-1}$. Recall that $r=W\eps^{-k}$ and $\max(W,L)\leq\eps^{-k}$. Thus the left hand-side satisfies: $cWL^2r^{L-1}/2 \leq ((1 +D)/2)\eps^{-2kL-2k}$. We can conclude if $((1 + D)/2)\eps^{-2kL-2k}\leq \eps^{m-1}$. By definition of $m$, this is true as soon as $((1+D)/2)\eps^{\log_2(\ceil{D})} \leq 1$. This is clear when $0< D \leq 1$. While for $D>1$, $((1+D)/2)\eps^{\log_2(\ceil{D})} \leq 1$ holds if and only if $\log_2(\eps)\leq -\frac{\log_2((1+D)/2)}{\log_2(\ceil{D})}$. Since $1<D$, it holds $\frac{1+D}{2}\leq \frac{\ceil{D}+\ceil{D}}{2} = \ceil{D}$ so that $-\frac{\log_2((1+D)/2)}{\log_2(\ceil{D})}\geq -1$. Since $\eps\in(0,1/2)$, it holds $-1\geq \log_2(\eps)$, hence $-\frac{\log_2((1+D)/2)}{\log_2(\ceil{D})}\geq \log_2(\eps)$ and the result follows.
\end{proof}
\begin{proof}[Proof of \Cref{prop:Ding}]
    Using \cite[Thm. 1]{Yarotsky17approxUnitBallSobolevWithDNN}, there exist constants $c(n,d)>0$ and $r(n,d)>1$ (for instance, a proof examination of \cite[Thm. 1]{Yarotsky17approxUnitBallSobolevWithDNN} shows that we can take $r=\max(4,d+n)$) such that for every $\eps \in (0,1)$, there exists a ReLU network architecture $\archi{L}{N}$ with depth $L$ bounded by $c\ln(1/\eps)$, a number of weights at most equal to $c\eps^{-d/n}\ln(1/\eps)$, and such that for every $f\in\fctclass_{n,d}$, there exists $\theta\in\paramarchiclass{L}{N}$ such that $\|f-R_{\theta}\|_{L^\infty([0,1]^d)}\leq \eps/2$, and such that $\theta$ has weight's magnitude bounded by $r$. \Cref{thm:BoundsLipschitzConstant} can now be used to quantize the weights of $\theta$, in order to get a quantized ReLU network $\eps$-close to $f$. Denote $W$ the width of this network architecture $\archi{L}{N}$. Since $\paramarchiclass{L}{N}^{\max}(r)\subset\paramarchiclass{L}{N}^1(Wr)$ (see \Cref{rmk: parameter set with Frobenius or max norm}) we can use \Cref{thm:BoundsLipschitzConstant} with $q=1$ to get that there exists a constant $c'>0$ that only depends on $n,d$, such that the weights of any network $\theta \in \paramarchiclass{L}{N}^{\max}(r)$ can be uniformly quantized with a step size $\eta:=c'\eps(WL^{2}(Wr)^{L-1})^{-1}$ to get a quantized network $\theta'$ such that $\|R_{\theta'}-R_{\theta}\|_{L^{\infty}([0,1]^{d})} \leq \eps/2$. Since the width $W$ is at most the number of weights, which is at most $c\eps^{-d/n}\ln(1/\eps)$, and since the depth $L$ is at most $c\ln(1/\eps)$ and $r$ is a constant that only depends on $n,d$, it is straightforward to check that $\ln(1/\eta)\leq c''\ln^2(1/\eps)$ for some constant $c''$ that only depends on $n$ and $d$. Since the weights are bounded in absolute value by $r(n,d)$, this means that every quantized weight can be stored using at most $c'''\ln(1/\eta)\leq c'''\ln^2(1/\eps)$ bits for some constant $c'''(n,d)>0$. Since there are at most $c\eps^{-d/n}\ln(1/\eps)$ such quantized weights, this yields the result using $\max(c,c'',c\times c''')$
    as the final constant.
\end{proof}
\section{Approximation speeds of quantized versus unquantized ReLU networks}\label{app:LipsParamUnifQuant}
\noindent We first establish two lemmas that will be useful to prove \Cref{thm:ApproxSpeedQuantifNN}. Along the way, we also give bounds on the size of the coverings we encounter. These bounds will prove useful in appendix \ref{app:DictNNInfEnc}.
\begin{lemma}\label{lem:ApproxSpeedGammaEncod}
    Let $(\F,d)$ be a metric space. Consider $\speed$ and two sequences $\approxclass(\speed)$ and $\approxclass$ of subsets of $\F$. Assume that there exists a constant $c>0$ such that for every $M\in\N$, the set $\approxclass(\speed)_M$ is a $cM^{-\gamma}$-covering of $\approxclass_M$. Then for every (non-empty) $\fctclass\subset\F$:
    \begin{equation*}
        \begin{array}{ll}
            \approximationspeed(\fctclass|\approxclass(\speed)) = \approximationspeed(\fctclass|\approxclass) & \text{if $\speed\geq \approximationspeed(\fctclass|\approxclass)$}, \\
            \approximationspeed(\fctclass|\approxclass(\speed)) \geq \speed                                   & \text{otherwise}.
        \end{array}
    \end{equation*}
\end{lemma}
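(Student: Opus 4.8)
The plan is to reduce the statement to a two-sided comparison of the uniform approximation errors and then read off the speeds. For a sequence $\approxclass'=(\approxclass'_M)_{M\in\N}$ of subsets of $\F$ write $\sigma_M(\approxclass'):=\sup_{f\in\fctclass}\inf_{\Phi\in\approxclass'_M}d(f,\Phi)$, so that by definition $\approximationspeed(\fctclass|\approxclass')=\sup\{\gamma\in\R:\sigma_M(\approxclass')=\mathcal{O}(M^{-\gamma})\}$. The only genuine computation is to sandwich $\sigma_M(\approxclass(\speed))$ as
\[
\sigma_M(\approxclass)\leq\sigma_M(\approxclass(\speed))\leq\sigma_M(\approxclass)+cM^{-\speed}.
\]
The left inequality is immediate from $\approxclass(\speed)_M\subseteq\approxclass_M$ (a covering is by definition a subset of the set it covers), since taking an infimum over a smaller set can only increase it. For the right inequality I would use the covering hypothesis: for any $f\in\fctclass$ and any $\Phi\in\approxclass_M$ there is $\Psi\in\approxclass(\speed)_M$ with $d(\Phi,\Psi)\leq cM^{-\speed}$, so the triangle inequality gives $\inf_{\Psi\in\approxclass(\speed)_M}d(f,\Psi)\leq d(f,\Phi)+cM^{-\speed}$; taking the infimum over $\Phi\in\approxclass_M$ and then the supremum over $f\in\fctclass$ yields the claim.

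With this sandwich, everything follows from elementary properties of the family $G(\approxclass'):=\{\gamma\in\R:\sigma_M(\approxclass')=\mathcal{O}(M^{-\gamma})\}$, whose supremum is $\approximationspeed(\fctclass|\approxclass')$. Since $M^{-\gamma}\leq M^{-\gamma'}$ for $\gamma'\leq\gamma$ and $M\geq 1$, the set $G(\approxclass')$ is an interval unbounded below (a ``down-set''), so $\gamma'<\approximationspeed(\fctclass|\approxclass')$ already implies $\sigma_M(\approxclass')=\mathcal{O}(M^{-\gamma'})$. From the left inequality of the sandwich, any $\gamma$ with $\sigma_M(\approxclass(\speed))=\mathcal{O}(M^{-\gamma})$ also satisfies $\sigma_M(\approxclass)=\mathcal{O}(M^{-\gamma})$, hence $G(\approxclass(\speed))\subseteq G(\approxclass)$ and therefore $\approximationspeed(\fctclass|\approxclass(\speed))\leq\approximationspeed(\fctclass|\approxclass)$ in all cases. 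Writing $a:=\approximationspeed(\fctclass|\approxclass)$, it remains to treat two cases.

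If $\speed\geq a$, I would take any $\gamma'<a$; then $\gamma'<\speed$ and $\sigma_M(\approxclass)=\mathcal{O}(M^{-\gamma'})$, and since $M^{-\speed}=\mathcal{O}(M^{-\gamma'})$ the right inequality of the sandwich gives $\sigma_M(\approxclass(\speed))=\mathcal{O}(M^{-\gamma'})$, i.e. $\gamma'\leq\approximationspeed(\fctclass|\approxclass(\speed))$. Letting $\gamma'\uparrow a$ gives $\approximationspeed(\fctclass|\approxclass(\speed))\geq a$, hence equality with the reverse inequality already proven. (The degenerate value $a=-\infty$ is covered directly by $\approximationspeed(\fctclass|\approxclass(\speed))\leq a$, and $a=+\infty$ does not arise in this case because $\speed$ is finite.) If instead $\speed<a$, then $\speed\in G(\approxclass)$, so $\sigma_M(\approxclass)=\mathcal{O}(M^{-\speed})$, and the right inequality gives $\sigma_M(\approxclass(\speed))\leq\sigma_M(\approxclass)+cM^{-\speed}=\mathcal{O}(M^{-\speed})$, whence $\approximationspeed(\fctclass|\approxclass(\speed))\geq\speed$.

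The argument is elementary and I expect no substantive obstacle; the only points deserving care are the manipulation of the $\mathcal{O}(M^{-\gamma})$ classes under the supremum defining the speed---in particular the down-set property that lets one pass from ``$\gamma'<a$'' to ``$\sigma_M(\approxclass)=\mathcal{O}(M^{-\gamma'})$''---and the bookkeeping at the boundary $\gamma'\uparrow a$ together with the $\pm\infty$ values of the speed.
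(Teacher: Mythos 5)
Your proof is correct and follows essentially the same route as the paper's: the upper bound $\approximationspeed(\fctclass|\approxclass(\speed))\leq\approximationspeed(\fctclass|\approxclass)$ comes from the inclusion $\approxclass(\speed)_M\subseteq\approxclass_M$, and the lower bound $\min(\approximationspeed(\fctclass|\approxclass),\speed)$ comes from the triangle inequality $d(f,\approxclass(\speed)_M)\leq d(f,\Phi_M(f))+cM^{-\speed}$, exactly as in the paper. Your explicit sandwich on $\sigma_M$ and the down-set observation on $G(\approxclass')$ are just a slightly more systematic packaging of the same argument, with the same handling of the degenerate $\pm\infty$ cases.
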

\begin{proof}[Proof of \Cref{lem:ApproxSpeedGammaEncod}]
    For every $M\in\N$, the inclusion $\approxclass(\speed)_M\subset\approxclass_M$ holds (indeed $\approxclass(\speed)_M$ is a covering of $\approxclass_M$) so that $\approximationspeed(\fctclass|\approxclass(\speed))\leq \approximationspeed(\fctclass|\approxclass)$. This proves the result when $\approximationspeed(\fctclass|\approxclass) = -\infty$. From now on we assume $\approximationspeed(\fctclass|\approxclass)>-\infty$. Fix an arbitrary
    $-\infty < \speed'<\min(\approximationspeed(\fctclass|\approxclass),\speed)$. By definition of the approximation speed, there exists a constant $c'>0$ such that for every $f\in\fctclass$ and every $M\in\N$, there exists a function $\Phi_M(f)\in\approxclass_M$ that satisfies:
    \[
        d\left(f,\Phi_M(f)\right) \leq c'M^{-\speed'}.
    \]
    The triangle inequality guarantees that for every $f\in\fctclass$ and every $M\in\N$:
    \[
        d(f,\approxclass(\speed)_M) \leq d(f,\Phi_M(f)) + d(\Phi_M(f),\approxclass(\speed)_M) \leq c'M^{-\speed'}+cM^{-\speed}.
    \]
    Since $\gamma' \leq \gamma$ (and even if $\gamma' < 0$, which can happen if $\approximationspeed(\fctclass|\approxclass)<0$) this means that $\approximationspeed(\fctclass|\approxclass(\speed))\geq \speed'$ for every $-\infty<\speed'<\min(\approximationspeed(\fctclass|\approxclass),\speed)$ so $\approximationspeed(\fctclass|\approxclass(\speed)) \geq \min\left(\approximationspeed(\fctclass|\approxclass),\speed\right)$. Since we also proved that $\approximationspeed(\fctclass|\approxclass(\speed))\leq \approximationspeed(\fctclass|\approxclass)$, this yields the claim.
\end{proof}
\begin{lemma}\label{lem:CovNumberLipImage}
    Consider $q\in[1,\infty]$ and $\speed>0$. There exists a constant $c(q,\speed)>0$ such that the following holds. Consider arbitrary $n\in\N$, $r\geq 1$ and consider the set $B_{n,\|\cdot\|_q}(0,r)\subset\ell^q(\N)$ that consists of the sequences bounded in $\ell^q$-norm by $r$, and with zero coordinates outside the first $n$ ones. Consider a metric space $(\F,d)$ and a Lipschitz-map $\varphi:(B_{n,\|\cdot\|_q}(0,r),\|\cdot\|_q)\to(\F,d)$ with Lipschitz constant $\Lips(\varphi)\geq 1$. For every $M\in\N$, define the step size $\eta_M:=(M^{\gamma}n^{1/q}\Lips(\varphi))^{-1}$ and the "quantized" set $\mQ(B_{n,\|\cdot\|_q}(0,r),\eta_M,\infty) := B_{n,\|\cdot\|_q}(0,r)\cap (\eta_M\Z)^{\N}$. Then for every integer $M\geq 2$, the set $\varphi(\mQ(B_{n,\|\cdot\|_q}(0,r),\eta_M,\infty))$ is an $M^{-\gamma}$-covering of $\varphi(B_{n,\|\cdot\|_q}(0,r))$ of size satisfying:
    \begin{equation}\label{eq: size covering}
        \log_2(|\varphi(\mQ(B_{n,\|\cdot\|_q}(0,r),\eta_M,\infty))|) \leq c(q,\speed)\biggl(n\biggl[\log_2(n) + \log_2(r)+\log_2(\Lips(\varphi))+\log_2(M)\biggr]\biggr).
    \end{equation}
\end{lemma}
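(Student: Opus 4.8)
The plan is to prove the two claims---the $M^{-\speed}$-covering property and the cardinality bound---by reducing everything to coordinate-wise rounding inside the source $\ell^q$-ball and then transporting the estimates through the Lipschitz map $\varphi$. Since points of $B_{n,\|\cdot\|_q}(0,r)$ have only their first $n$ coordinates nonzero, all quantization takes place in a finite-dimensional cube, and the step size $\eta_M$ is tuned precisely so that the source rounding error, amplified by $\Lips(\varphi)$, equals $M^{-\speed}$.

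First I would establish the covering property. Fix $x \in B_{n,\|\cdot\|_q}(0,r)$ and define $x'$ by rounding each of the first $n$ coordinates of $x$ \emph{towards zero} to the nearest multiple of $\eta_M$, i.e.\ $x'_i := \mathrm{sign}(x_i)\lfloor |x_i|/\eta_M\rfloor \eta_M$ for $i \leq n$ and $x'_i := 0$ otherwise. Rounding towards zero ensures $|x'_i| \leq |x_i|$ for every $i$, so $\|x'\|_q \leq \|x\|_q \leq r$ and $x'$ remains supported on the first $n$ coordinates; hence $x' \in B_{n,\|\cdot\|_q}(0,r) \cap (\eta_M\Z)^{\N} = \mQ(B_{n,\|\cdot\|_q}(0,r),\eta_M,\infty)$. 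The per-coordinate error satisfies $|x_i - x'_i| < \eta_M$ and at most $n$ coordinates are involved, so $\|x - x'\|_q \leq n^{1/q}\eta_M$ (with the convention $n^{1/q}=1$ when $q=\infty$). The Lipschitz bound then gives $d(\varphi(x),\varphi(x')) \leq \Lips(\varphi)\,n^{1/q}\eta_M = M^{-\speed}$ by the very definition of $\eta_M$. As $\varphi(x')$ lies in $\varphi(\mQ(B_{n,\|\cdot\|_q}(0,r),\eta_M,\infty))$, this shows the claimed covering property.

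Next I would bound the size. The image has no more points than the source, so $|\varphi(\mQ(\ldots))| \leq |\mQ(\ldots)|$. Any $x$ in the ball satisfies $|x_i| \leq \|x\|_\infty \leq \|x\|_q \leq r$ for each coordinate, so each of the first $n$ coordinates of a point of $\mQ$ ranges over $[-r,r]\cap \eta_M\Z$, a set of at most $2r/\eta_M + 1 \leq 3r/\eta_M$ elements (here $\eta_M \leq 1 \leq r$, since $M \geq 2$, $n^{1/q}\geq 1$ and $\Lips(\varphi)\geq 1$). Hence $|\mQ(\ldots)| \leq (3r/\eta_M)^n$, and taking $\log_2$ while substituting $\eta_M^{-1} = M^{\speed}n^{1/q}\Lips(\varphi)$ yields
\[
\log_2|\varphi(\mQ(\ldots))| \leq n\Big[\log_2 3 + \log_2 r + \speed\log_2 M + \tfrac{1}{q}\log_2 n + \log_2\Lips(\varphi)\Big].
\]
To reach the stated form I would absorb the additive constant using $\log_2 M \geq 1$ (valid because $M \geq 2$), so that $n\log_2 3 \leq n(\log_2 3)\log_2 M$, and bound $1/q \leq 1$; setting $c(q,\speed) := \max(1,\, \log_2 3 + \speed)$ then collects everything into $c(q,\speed)\,n[\log_2 n + \log_2 r + \log_2\Lips(\varphi) + \log_2 M]$.

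I do not anticipate a genuine obstacle: the statement is a volumetric counting argument combined with Lipschitz continuity. The only two points needing care are (i) ensuring the rounded point stays inside the ball, which is exactly why rounding towards zero (rather than the floor quantizer $Q_\eta$ used elsewhere in the paper) is the convenient choice here, and (ii) the bookkeeping that folds the additive constant $\log_2 3$ and the factor $1/q$ into a single constant $c(q,\speed)$ depending only on $q$ and $\speed$, for which the hypothesis $M \geq 2$ is precisely what is required.
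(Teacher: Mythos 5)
Your proof is correct, and it reaches the same two conclusions by the same overall strategy (quantize on a grid of step $\eta_M$, count grid points, transport through the Lipschitz map), but the way you establish the covering of the $\ell^q$-ball differs from the paper in a worthwhile way. The paper treats $q=\infty$ first by citing standard covering-number bounds for the cube, and then reduces $q\in[1,\infty)$ to that case via $\|x\|_q\leq n^{1/q}\|x\|_\infty$, replacing $r$ by $rn^{1/q}$ and $\Lips(\varphi)$ by $n^{1/q}\Lips(\varphi)$; this costs an extra $\tfrac{2}{q}\log_2(n)$ in the exponent and yields $c(q,\speed)=\max(\tfrac{2}{q},1+\speed)$. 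You instead exhibit the covering explicitly by rounding each coordinate towards zero, which handles all $q\in[1,\infty]$ uniformly, and it makes fully explicit why the rounded point belongs to $\mQ(B_{n,\|\cdot\|_q}(0,r),\eta_M,\infty)=B_{n,\|\cdot\|_q}(0,r)\cap(\eta_M\Z)^{\N}$ as defined (the nearest grid point need not lie in the $\ell^q$-ball, so some such argument is genuinely needed; the paper's citation-based route leaves this implicit for $q<\infty$). Your bookkeeping is sound: $\eta_M\leq M^{-\speed}\leq r$ justifies $2r/\eta_M+1\leq 3r/\eta_M$, the hypothesis $M\geq 2$ gives $\log_2 M\geq 1$ to absorb $\log_2 3$, and $n\geq 1$ makes all four terms in the bracket nonnegative, so $c(q,\speed)=\max(1,\log_2 3+\speed)$ works. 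The only net difference in the conclusions is that your constant is slightly cleaner and independent of $q$, which the lemma statement of course still permits.
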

\begin{proof}
    When $q=\infty$, it is known \cite[Examples 5.2 and 5.6]{Wainwright19HDS} that $\mQ(B_{n,\|\cdot\|_q}(0,r),\eta_M,\infty)$ is a $\eta_M$-covering of $B_{n,\|\cdot\|_q}(0,r)$ of size bounded by $(2r/\eta_M)^{n}+1$. Since $\varphi$ is $\Lips(\varphi)$-Lipschitz, we deduce that the set $\varphi(\mQ(B_{n,\|\cdot\|_q}(0,r),\eta_M,\infty))$ is an $M^{-\gamma}$-covering of $\varphi(B_{n,\|\cdot\|_q}(0,r))$ of size satisfying:
    \[
        \log_2(|\varphi(\mQ(B_{n,\|\cdot\|_q}(0,r),\eta_M,\infty))|) \leq n\biggl[1+ \log_2(r)+\log_2(\Lips(\varphi))+\speed\log_2(M)\biggr]
    \]
    Since $M\geq 2$, it holds $1+\speed\log_2(M)\leq (1+\speed)\log_2(M)$, hence Equation~\eqref{eq: size covering} for $c(q,\speed)=1+\speed\geq 1$. This settles the case $q=\infty$.

    When $q\in[1,\infty)$, Hölder's inequality yields $\|x\|_q\leq n^{1/q}\|x\|_\infty$ for every $x\in\R^n$. Thus $B_{n,\|\cdot\|_q}(0,r)$ is a subset of the ball of radius $rn^{1/q}$ of $\ell^{\infty}(\N)$, and the Lipschitz constant of $\varphi$ with respect to $\|\cdot\|_{\infty}$ is bounded by its Lipschitz constant with respect to $\|\cdot\|_{q}$, up to a factor $n^{1/q}$. Thus, the case $q\in[1,\infty)$ can be reduced to the case $q=\infty$ by replacing $r$ by $rn^{1/q}$ and $\Lips(\varphi)$ by $n^{1/q}\Lips(\varphi)$. We get:
    \[
        \log_2(|\varphi(\mQ(B_{n,\|\cdot\|_q}(0,r),\eta_M,\infty))|) \leq n\biggl[1+ \frac{2}{q}\log_2(n) + \log_2(r)+\log_2(\Lips(\varphi))+\speed\log_2(M)\biggr]
    \]
    This yields the desired result with $c(q,\speed)=\max(\frac{2}{q},1+\speed)$.
\end{proof}
\begin{lemma}\label{lem:gammaencodNN}
    In the setting of \Cref{thm:ApproxSpeedQuantifNN}, for every $\speed>0$, there exists a constant $c>0$ such that for every $M\in\N$, the set $\mQ(\nnclass|\speed)_M$ is a $cM^{-\gamma}$-covering of $\nnclass$. Moreover, for every $M\in\N$, every architecture $\archi{L}{N}\in\archiclass{M}$ and every support $S\in S^M_{\archi{L}{N}}$:
    \[
        \log_2(|R_{\mQ(\paramarchiclass{L}{N}^q(r_M),\eta_M, r_M),S}|) \leq c M\left(\log_2 (M) + \log_2(r_M) + \log_2(\Lips(M,q))\right).
    \]
\end{lemma}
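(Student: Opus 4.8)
The plan is to reduce the whole statement to a single parameter ball and a single support, then to transfer a finite grid covering of that ball through the Lipschitz realization map. Since both $\nnclass_M=\bigcup_{\archi{L}{N}\in\archiclass{M}}\bigcup_{S\in S^M_{\archi{L}{N}}}R_{\paramarchiclass{L}{N}^q(r_M),S}$ and $\mQ_M(\nnclass_M|\speed)=\bigcup_{\archi{L}{N}\in\archiclass{M}}\bigcup_{S\in S^M_{\archi{L}{N}}}R_{\mQ(\paramarchiclass{L}{N}^q(r_M),\eta_M,r_M),S}$ are unions indexed by the \emph{same} set of architectures and supports, it suffices to prove that, for each fixed $\archi{L}{N}\in\archiclass{M}$ and $S\in S^M_{\archi{L}{N}}$, the quantized piece $R_{\mQ(\paramarchiclass{L}{N}^q(r_M),\eta_M,r_M),S}$ is a $cM^{-\gamma}$-covering of $R_{\paramarchiclass{L}{N}^q(r_M),S}$ (covering each piece covers the union). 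Both the covering claim and the size estimate then follow piece by piece.

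First I would control the Lipschitz constant of $\theta\in\paramarchiclass{L}{N}^q(r_M)\mapsto R_\theta\in L^p$. Applying \Cref{thm:BoundsLipschitzConstant} and using that, for an architecture in $\archiclass{M}$, the width satisfies $W\leq\max(d_{\textrm{in}},d_{\textrm{out}},M)$, the depth $L\leq L_M$, and $r_M^{L-1}\leq r_M^{L_M-1}$ (as $r_M\geq 1$), the constant is at most $c\max(d_{\textrm{in}},d_{\textrm{out}},M)L_M^2r_M^{L_M-1}=c\,\Lips(M,q)$ whenever $q\in[1,\infty]$. The cases $q\in\{F,\max\}$ reduce to an operator-norm case via \Cref{rmk: parameter set with Frobenius or max norm}: the inclusion $\paramarchiclass{L}{N}^F(r_M)\subseteq\paramarchiclass{L}{N}^2(r_M)$ gives the same bound, while $\paramarchiclass{L}{N}^{\max}(r_M)\subseteq\paramarchiclass{L}{N}^2(\max(d_{\textrm{in}},d_{\textrm{out}},M)r_M)$ produces the extra factor $\max(d_{\textrm{in}},d_{\textrm{out}},M)^{L_M-1}$ appearing in the definition of $\Lips(M,\max)$. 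In every case the realization map is $c\,\Lips(M,q)$-Lipschitz on the relevant ball for $\|\cdot\|_\infty$ on the parameters.

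Next I would build the covering by rounding. Given $\theta\in\paramarchiclass{L}{N}^q(r_M)$ supported on $S$, round each coordinate \emph{towards zero} to the nearest multiple of $\eta_M$, obtaining $\theta'$. Then $\theta'$ lies on $\eta_M\Z$, each coordinate satisfies $|\theta'_i|\leq|\theta_i|\leq r_M$ so $\theta'\in[-r_M,r_M]^{d_{\archi{L}{N}}}$, it is still supported on $S$, and $\|\theta-\theta'\|_\infty\leq\eta_M$. Because the constraining norms are monotone under coordinatewise shrinkage of absolute values, $\theta'$ remains in $\paramarchiclass{L}{N}^q(r_M)$, hence $\theta'\in\mQ(\paramarchiclass{L}{N}^q(r_M),\eta_M,r_M)$ and $R_{\theta'}$ lies in the quantized piece. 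The Lipschitz bound then gives $\|R_\theta-R_{\theta'}\|_p\leq c\,\Lips(M,q)\,\eta_M=c\,\Lips(M,q)\bigl(M^{\gamma}\Lips(M,q)\bigr)^{-1}=cM^{-\gamma}$, which is the desired covering property (this step can alternatively be packaged through \Cref{lem:CovNumberLipImage}). For the size estimate, the quantized parameters supported on $S$ have at most $|S|\leq M$ free coordinates, each taking at most $2r_M/\eta_M+1$ values in $\eta_M\Z\cap[-r_M,r_M]$; since the realization map produces no more functions than parameters, $\log_2|R_{\mQ(\paramarchiclass{L}{N}^q(r_M),\eta_M,r_M),S}|\leq M\log_2(2r_M/\eta_M+1)$, and substituting $1/\eta_M=M^{\gamma}\Lips(M,q)$ bounds this by $cM(\log_2 M+\log_2 r_M+\log_2\Lips(M,q))$.

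The main obstacle is the rounding step: one must land on a grid point that is simultaneously within $\|\cdot\|_\infty$-distance $\eta_M$ of $\theta$ \emph{and} feasible, i.e. still in $\paramarchiclass{L}{N}^q(r_M)$. A naive "shrink towards the center, then round" argument would only keep the perturbation of order $W\eta_M\sim M\eta_M$ (the largest $\ell^\infty$-box inscribed in the operator-norm ball has radius only $r_M/W$), which would degrade the covering radius to $M^{1-\gamma}$ and break the comparison of approximation speeds. Rounding towards zero avoids this loss precisely for the monotone constraints $\vertiii{\cdot}_1,\vertiii{\cdot}_\infty,\|\cdot\|_F,\|\cdot\|_{\max}$ and the bias norms; the delicate point is a general operator $q$-norm with $q\in(1,\infty)$, where coordinatewise shrinkage is not monotone and feasibility of the rounded point must be argued separately (for instance by reducing to one of the monotone norms at the cost of constants absorbed into $\Lips(M,q)$).
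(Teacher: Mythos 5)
Your proposal follows essentially the same route as the paper's proof: decompose $\nnclass_M$ and $\mQ_M(\nnclass_M|\speed)$ over the common index set of architectures and supports, invoke \Cref{thm:BoundsLipschitzConstant} (plus \Cref{rmk: parameter set with Frobenius or max norm} for $q\in\{F,\max\}$) to get the Lipschitz constant $c'\Lips(M,q)$ of $\theta\mapsto R_\theta$ on each piece, push a grid covering of the parameter set through this map with step $\eta_M=(M^{\speed}\Lips(M,q))^{-1}$, and count grid points in $[-r_M,r_M]^{|S|}$ with $|S|\leq M$. The paper simply packages the last two steps into \Cref{lem:CovNumberLipImage}, which you unpack by hand; the constants and the final bound come out the same way.

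The ``delicate point'' you flag at the end is worth a comment. You are right that coordinatewise rounding towards zero preserves membership in $\paramarchiclass{L}{N}^q(r_M)$ only when the constraining norms are monotone in the entrywise absolute values ($q\in\{1,\infty\}$, Frobenius, max, and the bias norms), and that for operator $q$-norms with $q\in(1,\infty)$ the rounded point can leave the constraint set (sign cancellations mean $\vertiii{W'}_q\leq\vertiii{W}_q$ can fail even when $|W'_{ij}|\leq|W_{ij}|$). Be aware, however, that your sketched repair --- ``reducing to one of the monotone norms at the cost of constants absorbed into $\Lips(M,q)$'' --- does not close this for free: the inclusion $\paramarchiclass{L}{N}^{\max}(r)\subset\paramarchiclass{L}{N}^q(Wr)$ inflates the Lipschitz bound of \Cref{thm:BoundsLipschitzConstant} by a factor $W^{L-1}$, which is built into the definition of $\Lips(M,q)$ only for $q=\max$, and the convexity argument you rightly reject (shrink towards the center, then round) costs a factor $W\sim M$ on the covering radius. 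The paper's own proof does not address this either: it applies \Cref{lem:CovNumberLipImage}, whose covering argument is carried out for coordinatewise $\ell^q$ balls (where rounding towards zero is monotone), directly to the operator-norm-constrained set $\paramarchiclass{L}{N}^q(r_M)$. So for $q\in\{1,\infty,F,\max\}$ your argument is complete and matches the paper; for $q\in(1,\infty)$ you have correctly isolated the one step that genuinely needs an extra argument (or a slight enlargement of the quantized constraint radius, as in \Cref{thm:UnifQuantifErrorSufficientCond} where $r'=2r$ is used), rather than introduced a new gap of your own.
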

\begin{proof}
    According to \Cref{thm:BoundsLipschitzConstant}, there is a constant $c'>0$ such that for each $M\in\N$, each architecture $\archi{L}{N}\in\archiclass{M}$ and each support $S\in S^M_{\archi{L}{N}}$, the set $R_{\paramarchiclass{L}{N}^q(r_M),S}$ is the image under a Lipschitz map of $(\{\param\in \paramarchiclass{L}{N}^q(r_M) \text{ supported on }S\}, \|\cdot\|_\infty)$ with a Lipschitz constant bounded by $c'\Lips(M,q)$.

    Let us now use \Cref{lem:CovNumberLipImage} with a Lipschitz constant bounded by $c'\Lips(M,q)$, $n=|S|\leq M$ the cardinality of the support, $r=r_M$, and the same $q$ as here. This yields the result with $c:=\max(1/c', 2c(q,\speed))$.
\end{proof}
\begin{proof}[Proof of \Cref{thm:ApproxSpeedQuantifNN}]
    Combining \Cref{lem:ApproxSpeedGammaEncod} and \Cref{lem:gammaencodNN} gives Equality \eqref{eq:QuantSpeedNN}.
\end{proof}
\section{Encodability implies a relation between approximation and encoding speeds}\label{app:EncImpliesUnifyingBound}
\begin{proof}[Proof of \Cref{thm:unifying_thm}]
    If $\approximationspeed(\fctclass|\approxclass)\leq 0$ then the result is trivial since we always have $\encodingspeed(\fctclass) \geq 0$. In the rest of the proof we assume $\approximationspeed(\fctclass|\approxclass)>0$. Fix $0<\speed'<\min(\approximationspeed(\fctclass|\approxclass),\speed)$ and $h>0$. First,
    $\approxclass$ is $\speed$-encodable so there exists a $(\speed,h)$-encoding of $\approxclass$ that we denote $\approxclass(\speed,h)$. This means that there exist constants $c'_1,c'_2>0$ such that for every $M\in\N$, the set $\approxclass(\speed,h)_M$ is a $c'_1M^{-\speed}$-covering of $\approxclass_M$ of size $|\approxclass(\speed,h)_M|\leq 2^{c'_2M^{1+h}}$. Second, since $0<\speed'<\min(\approximationspeed(\fctclass|\approxclass),\speed)$, the definition of the approximation speed guarantees that there exists a constant $c'_3>0$ such that for every $f\in\fctclass$ and every $M\in\N$, there exists a function $\Phi_M(f)\in\approxclass_M$ that satisfies:
    \[
        d\left(f,\Phi_M(f)\right) \leq c'_3M^{-\speed'}.
    \]
    Since $0< \speed'<\speed$, note that for every $M\in\N$, it holds $c'_1M^{-\speed}+c'_3M^{-\speed'}\leq (c'_1+c'_3)M^{-\speed'}$. Define $c_1=c'_1+c'_3$ and $c_2=c'_2$. We deduce that for every $M\in\N$, the set $\approxclass(\speed,h)_M$ is a $c_1M^{-\speed'}$-covering of $\fctclass$ of size $|\approxclass(\speed,h)_M|\leq 2^{c_2M^{1+h}}$. Now, for every $\eps>0$, the integer $M_\eps:=\left\lceil \left(\frac{c_1}{\eps}\right)^{1/\speed'} \right\rceil$ satisfies $\eps\geq c_1M_\eps^{-\speed'}$. By monotonicity of the metric entropy $H(\fctclass, d, \cdot)$ we get
    $
        H(\fctclass, d, \eps)\leq H(\fctclass, d, c_1M_\eps^{-\speed'}) \leq c_{2}M_{\eps}^{1+h}.
    $
    Note that for $0<\eps<c_1$, denoting by $c= (2c_1^{1/\speed'})^{1+h}$ it holds $M_\eps^{1+h}\leq \left(1+\left(\frac{c_1}{\eps}\right)^{1/\speed'} \right)^{1+h} = \left(\frac{c_1}{\eps}\right)^{(1+h)/\speed'} \left(1+\left(\frac{\eps}{c_1}\right)^{1/\speed'} \right)^{1+h}\leq c\eps^{-(1+h)/\speed'}$. Finally for every $0<\eps<c_1$, it holds
    \[
        H(\fctclass, d, \eps)\leq c\eps^{-(1+h)/\speed'},
    \]
    As a direct consequence of \Cref{def: encoding speed}, this implies $\encodingspeed(\fctclass) \geq \frac{\speed'}{1+h}$ for every $h>0$ and every $0<\speed'<\min(\approximationspeed(\fctclass|\approxclass),\speed)$, hence the desired result.
\end{proof}
\section{First examples of $\infty$-encodable approximation families}\label{app:FirstExInfEnc}
\begin{proof}[Proof of \Cref{lem: some sequences of balls in finite dim are infinite-encodable}]
    Each $\approxclass_{M}$ can be identified with the closed ball of radius $r_M$ in dimension $d_M$ with respect to the $q$-th norm, so that standard bounds on covering numbers \cite[Eq. (5.9)]{Wainwright19HDS} yield for every $0<\eps\leq r_M$:
    \begin{equation}\label{ineq: bounds covering number of balls}
        d_M\log_2\left(\frac{r_M}{\eps}\right)\leq H(\approxclass_M,\|\cdot\|_q,\eps) \leq d_M\log_2\left(\frac{3r_M}{\eps}\right).
    \end{equation}
    For $\eps=M^{-\gamma}(\leq 1\leq r_M)$, we get:
    \[
        d_M(\log_2(r_M)+\gamma\log_2(M))\leq H(\approxclass_M,\|\cdot\|_q,\eps)
        \leq d_M(\log_2(3r_M)+\gamma\log_2(M)).
    \]
    Everything is non-negative, so if the right hand-side is $\mathcal{O}_{M\to\infty}(M^{1+h})$, for every $h>0$, then so is the left hand-side. The converse is also true since both sides only differ by $\log_2(3)d_M=\mathcal{O}_{M\to\infty}(d_M\log M)$. The non-negativity of the quantities also implies that the condition $d_M[\log_2(r_M)+\speed\log_2(M)]=\mathcal{O}_{M\to\infty}(M^{1+h})$, for every $h>0$, does not depend on $\speed$. As a consequence, either $\approxclass$ is $\infty$-encodable or it is never $\gamma$-encodable, whatever $\gamma>0$ is. Finally, note that for every $h>0$, $d_M\left(\log_2(r_M)+\log_2(M)\right)=\mathcal{O}_{M\to\infty}(M^{1+h})$ if and only if $d_M\left(\log_2(r_M)+1\right)=\mathcal{O}_{M\to\infty}(M^{1+h})$. The "only if" part is clear since for $M\geq 2$, it holds $0\leq d_M\left(\log_2(r_M)+1\right)\leq d_M\left(\log_2(r_M)+\log_2(M)\right)$. For the "if" part, use that $r_M\geq 1$ and the assumption to get $0\leq d_M\leq d_M\left(\log_2(r_M)+1\right)=\mathcal{O}_{M\to\infty}(M^{1+h})$ so that $d_M\log_2(M)=\mathcal{O}_{M\to\infty}(M^{1+h}\log_2(M))=\mathcal{O}_{M\to\infty}(M^{1+h})$.
\end{proof}
\begin{proof}[Proof of \Cref{thm:LipsParamAreUnifQuant}]
    Fix an arbitrary $\speed>0$. \Cref{lem:CovNumberLipImage} guarantees that for every integer $M\geq 2$, the set $\varphi_M(\mQ(\approxclass_M,\eta_M(\speed), \infty))$ is an $M^{-\gamma}$-covering of $\varphi_M(\approxclass_M)$ of size satisfying:
    \[
        \log_2(|\varphi_M(\mQ(\approxclass_M,\eta_M(\speed), \infty))|)\leq c(q,\speed)\biggl(d_M\biggl[\log_2(d_M) + \log_2(r_M)+\log_2(\Lips(\varphi_M))+\log_2(M)\biggr]\biggr).
    \]

    Since $0\leq d_M\leq d_M\left(\log_2(r_M)+\log_2(\Lips(\varphi_M))+1\right)$, Assumption \eqref{eq:LipCondition} guarantees that for every $h>0$, it holds $d_M = \mathcal{O}_{M\to\infty}(M^{1+h})$ so that $d_M\left(\log_2(d_M)+\log_2(M)\right)=\mathcal{O}_{M\to\infty}(M^{1+h}\left(\log(M^{1+h})+\log_2(M)\right)) = \mathcal{O}_{M\to\infty}(M^{1+h})$. As a consequence, for every $h>0$, it holds $\log_2(|\varphi_M(\mQ(\approxclass_M,\eta_M(\speed), \infty))|)=\mathcal{O}_{M\to\infty}(M^{1+h})$ so that the sequence $\mQ(\varphi(\approxclass)|\speed)$ is a $(\speed,h)$-encoding of $\varphi(\approxclass)$. This shows that $\varphi(\approxclass)$ is $\speed$-encodable for every $\speed>0$, so it is $\infty$-encodable.
\end{proof}
\section{Encodability of approximation families defined with dictionaries and ReLU networks}\label{app:DictNNInfEnc}
\begin{proof}[Proof of \Cref{prop:KerkyacharianEnc}]
    Fix $M\in\N$ and $f=\sum_{i=1}^Mc_ie_i\in \approxclass_M^q$. Let $0<\lambda<1$. Define $Q_\lambda(f):=\sum_{i=1}^M\textrm{sign}(c_i)\left\lfloor{\frac{c_i}{\lambda}}\right\rfloor\lambda e_i$ with $\textrm{sign}(x)=1$ if $x\geq 0$, $-1$ otherwise. It is proven in {\cite[Prop. 6]{Kerkyacharian04Entropy}} that there exists a constant $c(p,q)>0$ that only depends on $p$ and $q$ such that:
    \[
        \|f-Q_\lambda(f)\|_{\F} \leq c(p,q)\lambda^{1-q/p}\sup_{i\in\N}\|e_i\|_{\F}.
    \]
    Moreover, it is proven in {\cite[Lem. 4 and proof of Prop. 11]{Kerkyacharian04Entropy}} that the family $(Q_\lambda(f))_{f\in\approxclass_M^q}$ has at most $2^{\lambda^{-q}(1-\log_2(\lambda)+\log_2(M))}$ elements. Setting $\eps=\lambda^{1-q/p}$, and observing that $\lambda^{-q}=\eps^{-1/s}$, this proves that the family $(Q_\lambda(f))_{f\in\approxclass_M^q}$ is a $\mathcal{O}_{\eps\to 0}(\eps)$-covering of $\approxclass_M^q$ of size $\mathcal{O}_{\eps\to 0}(2^{\eps^{-1/s}(\log_2 1/\eps +\log_2 M)})$, with constants independent of $M$. For every $M\in\N$, using the above result with $\eps=M^{-s}$ proves that $\approxclass^q$ is $s$-encodable.
\end{proof}
\begin{proof}[Proof of \Cref{thm:GSInfEnc}] Consider $c>0$. We first prove that $\tilde{\approxclass}^{\pi,c}$ is $\infty$-encodable. Consider $M\in\N$, $\mathcal{I}_M:=\{I\subset\{1,\dots,\pi(M)\},$ $ |I|\leq M\}$, and define for each $I\in\mathcal{I}_M$ the set $\tilde{\approxclass}^{\pi,c}(I):=\{\sum_{i\in I}\Tilde{c}_i\Tilde{\phi}^I_i, (\Tilde{c}_i)_{i\in I}\in[-c,c]^I\}$. It holds:
    \[
        \tilde{\approxclass}^{\pi,c}_M=\bigcup_{I\in\mathcal{I}_M}\tilde{\approxclass}^{\pi,c}(I).
    \]
    Since each $I\in\mathcal{I}_M$ is a set of at most $M$ integers between $1$ and $\pi(M)$, one can describe each such set by $M$ sequences of at most $\log_2(\pi(M))$ bits so that there are at most $2^M\pi(M)$ such sets. Moreover, the set $\tilde{\approxclass}^{\pi,c}(I)$ is the image of $\varphi_{M,I}:(\Tilde{c}_i)_{i\in I}\in([-c,c]^I,\|\cdot\|_2)\mapsto \sum_{i\in I}\Tilde{c}_i\Tilde{\phi}^I_i\in\F$. This map is $1$-Lipschitz (since $(\Tilde{\phi}^I_i)_{i\in I}$ is orthonormal). Equation~\eqref{eq: size covering} of \Cref{lem:CovNumberLipImage} with $n=|I| \leq M$, $q=\infty$ and $r=\max(c,1)$ proves that $\tilde{\approxclass}^{\pi,c}(I)$ has an $M^{-\gamma}$-covering with at most $2^{2c(q,\speed)M\log_2(rM)}$ elements. Taking the union of such covers for each $I\in\mathcal{I}_M$, we end up with an $M^{-\gamma}$-covering of the whole set $\tilde{\approxclass}^{\pi,c}_M$ with at most $2^M\pi(M)2^{2c(q,\speed)M\log_2(rM)}=\mathcal{O}_{M\to\infty}(2^{M\log M})$ elements. This proves the $\infty$-encodability of $\tilde{\approxclass}^{\pi,c}$.

    It now remains to prove Equation~\eqref{eq:GSSpeed}. First, for every $c>0$ and every $M\in\N$, it holds $\tilde{\approxclass}^{\pi,c}_M\subset\approxclass^\pi_M$ so that $\approxclass^\pi$ approximates $\fctclass$ at least as quickly as $\tilde{\approxclass}^{\pi,c}$, that is $\approximationspeed(\fctclass|\approxclass^\pi)\geq \approximationspeed(\fctclass|\tilde{\approxclass}^{\pi,c})$. As we now prove, there is actually equality for $c=\sup\limits_{f\in\fctclass}\sup\limits_{M\in\N}\max\limits_{I\in\mathcal{I}_M}\max\limits_{i\in I}|\langle f,\tilde{\phi}^I_i\rangle_\F|$ (and thus for any larger $c$ since $\approximationspeed(\fctclass|\tilde{\approxclass}^{\pi,c})$ is non-decreasing in $c$). Note that by Cauchy-Schwarz, $c\leq \sup_{f\in\fctclass}\|f\|_\F$ which is finite since $\fctclass$ is bounded. If $f\in\fctclass$, then for every $M\in\N$, every $I\subset\{1,\dots,\pi(M)\}, |I|\leq M$, and every $(c_i)_{i\in I}\in\R^I$, it holds:
    \[
        d(f,\tilde{\approxclass}^{\pi,c}_M)\leq \|f-\sum_{i\in I}\langle f,\tilde{\phi}^I_i\rangle_\F \tilde{\phi}^I_i\|_{\F}\leq \|f-\sum_{i\in I}c_i\phi_i\|_{\F}.
    \]
    This implies that $d(f,\tilde{\approxclass}^{\pi,c}_M)\leq d(f,\approxclass^\pi_M)$. As a consequence, $\tilde{\approxclass}^{\pi,c}$ approximates $\fctclass$ at least as quickly as $\approxclass^\pi$, that is $\approximationspeed(\fctclass|\tilde{\approxclass}^{\pi,c})\geq \approximationspeed(\fctclass|\approxclass^\pi)$. This yields equality \eqref{eq:GSSpeed}.
\end{proof}
\begin{proof}[Proof of \Cref{prop:InfEncNN}]
    By definition of $\infty$-encodability, we have to prove that for every $\speed>0$ and for every $h>0$, the quantized sequence $\mQ(\nnclass|\speed)$ is a $(\speed,h)$-encoding of $\nnclass$. Fix $\speed>0$. \Cref{lem:gammaencodNN} proves that there exists a constant $c>0$ such that for every $M\in\N$, the set $\mQ(\nnclass_M|\speed)$ is a $cM^{-\speed}$-covering of $\nnclass_M$, and for each $M\in\N$, each architecture $\archi{L}{N}\in\archiclass{M}$ and each support $S\in S^M_{\archi{L}{N}}$, the quantized set $R_{\mQ(\paramarchiclass{L}{N}^q(r_M), \eta_M, r_M),S}$ has a number of elements that satisfies:
    \[
        \log_2(|R_{\mQ(\paramarchiclass{L}{N}^q(r_M), \eta_M, r_M),S}|) \leq c M\left(\log_2(r_M) + \log_2(\Lips(M,q)) + \log_2 (M)\right).
    \]

    Fix $h>0$. By assumption \eqref{eq:NSCArchGrowth2} of \Cref{prop:InfEncNN} , we deduce that there exists $c'=c'(h)>0$ such that for every $M\in\N$, and every architecture $\archi{L}{N}\in\archiclass{M}$, we have
    \[
        M\left(\log_2(r_M) + \log_2(\Lips(M,q)) + \log_2(M)\right)
        \leq c' M^{1+h}.
    \]
    Thus, the quantized set
    $\mQ(\nnclass_M|\speed)$  is a $cM^{-\gamma}$-covering of $\nnclass_M$ and its cardinality satisfies
    \[
        |\mQ(\nnclass_M|\speed)| \leq
        \sum\limits_{\archi{L}{N}\in\archiclass{M}}\sum\limits_{S\in S^M_{\archi{L}{N}}} |R_{\mQ(\paramarchiclass{L}{N}^q(r_M), \eta_M, r_M), S}|
        \leq |\archiclass{M}| \cdot |S^M_{\archi{L}{N}}| \cdot 2^{{cc'M^{1+h}}}.
    \]
    Note that for every $M\in\N$, $|\archiclass{M}|\leq L_M M^{L_M-1}$ (at most $L_M$ possibilities for the depth and then, $M$ possibilities for each of the potential $L_M-1$ intermediary layers, the size of the input and output being fixed  to $d_{\textrm{in}}$ and $d_{\textrm{out}}$).
    Similarly, since $S^M_{\archi{L}{N}}$ consists at most of all the supports of size $M$ in dimension
    $d_{\archi{L}{N}}\leq 2M^2L_M$, its cardinality is bounded by $(2M^2L_{M})^M$. Overall, we obtain that
    \[
        \log_{2}(|\mQ(\nnclass_M|\speed)|)
        \leq
        \log_{2}(L_{M})+L_{M}\log_{2}(M) + M \log_{2}(2M^2L_{M}) + cc' M^{1+h}.
    \]
    Using assumption \eqref{eq:NSCArchGrowth2} again, we obtain that there exists $c''>0$ such that
    $\log_{2}(|\mQ(\nnclass_M|\speed)|) \leq c''M^{1+h}$ for every $M \in \N^{*}$. We deduce that for every $\speed>0$ and for every $h>0$, the sequence $\mQ(\nnclass| \speed)$ is a $(\speed,h)$-encoding of $\nnclass$. This yields the result.
\end{proof}
\end{document}